\newtheorem{theorem}{Theorem}[section]
\newtheorem{proposition}[theorem]{Proposition}
\newtheorem{lemma}[theorem]{Lemma}
\newtheorem{remark}[theorem]{Remark}
\newtheorem{problem}{Problem}
\newtheorem{assumption}{Assumption}
\newcommand{\real}{\mathbb{R}}
\newcommand{\realnonneg}{\mathbb{R}_{\ge 0}}
\newcommand{\col}{{\rm col\;}}
\newcommand{\diag}{{\rm diag\;}}
\newcommand{\image}{\operatorname{image}}
\newcommand*{\horzbar}{\rule[.5ex]{2.5ex}{0.5pt}}
\renewcommand*\env@matrix[1][\arraystretch]{%
	\edef\arraystretch{#1}%
	\hskip -\arraycolsep
	\let\@ifnextchar\new@ifnextchar
	\array{*\c@MaxMatrixCols c}}
\newcommand\oprocendsymbol{\hbox{$\square$}}
\newcommand\oprocend{\relax\ifmmode\else\unskip\hfill\fi\oprocendsymbol}
\newcommand{\until}[1]{\{1,\dots,#1\}}
\newcommand{\longthmtitle}[1]{\mbox{} \emph{(#1):}}
\newcommand{\marginJC}[1]{\marginpar{\color{red}\tiny\ttfamily#1}}
\begin{document}

\title{Data-Driven Stabilization of Unknown Linear-Threshold Network
  Dynamics
  % Designing Data-Driven Controller for Linear-Threshold Network
  % Dynamics
}

% Conference title: Data-driven control of linear-threshold network
% dynamics

\author{Xuan Wang, Duy Duong-Tran, and Jorge Cort{\'e}s \thanks{A preliminary
    version of this work appeared as \cite{XW-JC:22-acc} at the 2022
    American Control Conference.  X. Wang is with the Department of
    Electrical and Computer Engineering, George Mason University,
    Fairfax, xwang64@gmu.edu. 
    D. Duong-Tran is with the Perelman School of Medicine, 
    University of Pennsylvania, 
    Philadelphia, DuyAnh.Duong-Tran@Pennmedicine.upenn.edu. 
    J. Cort{\'e}s is with the Department of Mechanical and Aerospace
    Engineering, University of California, San Diego,
    cortes@ucsd.edu.\\
    This work was partially supported by NSF ECCS Award 2332210, NSF CMMI Award
2308640, and MURI ARO Award W911NF-24-1-0228.}%
}

%%%%%%%%%%%%%%%%%%%%%%%%%%%%%%% Main Body%%%%%%%%%%%%%%%%%%

\maketitle

\begin{abstract}
  This paper studies the data-driven control of unknown
  linear-threshold network dynamics to stabilize the state to a
  reference value.  We consider two types of controllers: (i) a state
  feedback controller with feed-forward reference input and (ii) an
  augmented feedback controller with error integration.  The first
  controller features a simpler structure and is easier to design,
  while the second offers improved performance in the presence of
  system parameter changes and disturbances.
  % To design these two types of controllers purely based on data
  % samples, we first introduce a map that reconstructs the
  % state-input pair as a transformation of data matrices. Based on
  % this, we establish a data-driven representation of the
  % linear-threshold network in an open-loop form. Using this
  % representation, we incorporate the structures of the two
  % controller types to reformulate the open-loop model and derive
  % their respective closed-loop representations.
  Our design strategy employs state-input datasets to construct
  data-based representations of the closed-loop dynamics. Since these
  representations involve linear threshold functions, we rewrite them
  as switched linear systems, and formulate the design problem as that
  of finding a common controller for all the resulting modes. This
  gives rise to a set of linear matrix inequalities (LMIs) whose
  solutions corresponds to the controller gain matrices.  We analyze
  the computational complexity of solving the LMIs and propose a
  simplified, sufficient set of conditions that scales linearly with
  the system state.  Simulations on two case studies involving
  regulation of firing rate dynamics in rodent brains and of arousal
  level dynamics in humans demonstrate the effectiveness of the
  controller designs.
  % with the error integration-based controller
%   showing improved performance in handling disturbances compared to
%   the feed-forward reference input-based controller.
%
\end{abstract}

\begin{IEEEkeywords}
  Linear-threshold networks; data-driven control; switched systems;
  linear matrix inequalities.
\end{IEEEkeywords}

\thispagestyle{empty}

\pagestyle{empty}

%===============================================================================

\section{Introduction}

Linear-threshold dynamics are a class of nonlinear dynamics with wide
scientific and practical applications.  These dynamics model
interactions within network systems, where the states of the nodes
evolve based on inter-node connectivity and a linear-threshold
activation function.  Compared to linear systems, linear-threshold
networks provide bounded system states and greater dynamical
versatility, including mono- and multi-stability, limit cycles, and
chaotic behavior. These properties have made them useful in diverse
fields, including computational
neuroscience~\cite{PD-LFA:01,CC-JG-KM:19,EN-JC:21-tacI}, social
networks~\cite{WC-YY-LZ:10,YDZ-VS-NEL:17}, and deep
learning~\cite{HZ-ZW-DL:14,AGH-MZ-BC-DK-WW-TW-MA-HA:17}.  Traditional
approaches to regulate the dynamical behavior of linear-threshold
networks are heavily model-based and rely on precise model parameters
to design control schemes.  Recent advances in data collection,
processing, and computation have motivated a spur of activity on
data-driven control methods for systems with unknown dynamics directly
from its input-output data.
% It is particularly appealing for modern applications, which often
% involve large-scale systems characterized by complex dynamics, where
% obtaining accurate model parameters is challenging.
This paper develops data-driven control methods for stabilization
specifically tailored for linear-threshold network dynamics.

\subsubsection*{Literature review}
The existing literature for data-driven control can be classified into
two major categories: linear and nonlinear systems. For linear
systems, Willems’ Lemma~\cite{JCW-PR-IM-BLMDM:05} has been a
foundational tool in constructing a data-based representation of
input-output trajectories as a linear combination of measured data
samples.  Building on this result, various types of controllers can be
reformulated as part of a closed-loop representation, thereby
facilitating the design of controller parameters. Along this
idea,~\cite{CDP-PT:19} studies the stabilization controller and
linear–quadratic regulator (LQR) of a system by associating the gains
of linear feedback controllers with the solutions of a linear matrix
inequality (LMI) and a semidefinite program (SDP), respectively. The
idea has also been utilized in a receding horizon fashion for model
predictive control (MPC)~\cite{JC-JL-FD:19,JB-JK-MAM-FA:20} and
distributed MPC~\cite{AA-JC:21-csl} when data are only locally
available to nodes of a network. Considering data richness and noise,
\cite{WHJ:21,JB-JK-MAM-FA:20} study the online implementation of
sample efficient data-driven control with noisy system measurements.
Aligned with this body of work, the informativity approach to
data-driven control~\cite{HJVW-JE-HLT-MKC:23} considers measurements
that do not contain enough information to obtain a unique system. By
making assumptions on the model class and noise model, this approach
explicitly determines the set of all systems consistent with the
measurements, thereby enabling the certification of desirable
properties for the measured system. These have included
stability~\cite{HJVW-JE-HLT-MKC:20,USP-MI:09}, controllability and
observability~\cite{HJVW-MKC-HLT:22,ZW-DL:11}, and
dissipativity~\cite{HJVW-MKC-PR-HLT:22,AR-JMM-FA:17}.

%
% \marginJC{Worth alluding to Data informativity papers here too -- we
%   don't want to annoy anybody! E.g.,
%   \url{https://arxiv.org/abs/2302.10488} and relevant papers cited there.}
% % 
% \marginXW{Added.}
% % 

These ideas and techniques are currently been extended to nonlinear
systems, see~\cite{TM-TBS-FA:23,CDP-PT:23} for recent comprehensive
surveys on data-driven control of nonlinear systems.  These
include~\cite{MG-CDP-PT:21,CDP-PT:21-ifac}, utilizing low-rank
approximation technique to handle system nonliterary and treat
reminders as disturbances; \cite{ZY-JC:22-csl}, studying data-driven
optimal control of bilinear systems; and \cite{AB-CDP-PT:20}, studying
the stabilization of bilinear systems and uses LMIs to create a
guaranteed region of attraction.  By generalizing Willems’ Lemma to
particular types of nonlinear systems,~\cite{JB-FA:20} provides
nonlinear data-driven control for Hammerstein-Wiener systems,
and~\cite{RS-JB-FA:21} for second-order discrete Volterra
systems. Based on dual stability theory and Farkas’
lemma,~\cite{TD-MS:20_2} develops data-driven control methods for
nonlinear continuous-time systems.
% %
% \marginJC{There is also \cite{ZY-JC:22-csl} for bilinear systems. A
%   bunch of recent papers cite ours, so some of those might also be
%   candidates for citing here. We don't want to have an infinite list,
%   so if there are some tutorial/overview papers, that'd be
%   enough. It's just that this data-driven control of nonlinear systems
%   area is exploding.}
% %
% \marginXW{Added.}
% %
Other methods for nonlinear data-driven control that are not based on
Willems’ Lemma include unfalsified control~\cite{MGS-TCT:95},
simultaneous perturbation stochastic approximation~\cite{JCS:92},
model-free adaptive control~\cite{ZH-SJ:11}, and iterative feedback
tuning~\cite{HH-SG-MG:94}. Different from the approaches described
above, where the controller can be directly computed using the data,
these approaches tune the parameters of the controller in an iterative
manner and gradually improve system performance. To address
stabilization to states other than the origin, methods like
\cite{CN-SF-SMS-MM:16} employ nonlinear basis functions and combine
model-inverse techniques with virtual reference feedback tuning for
reference tracking.
% Most existing data-driven control approaches focus on linear
% systems~\cite{JCW-PR-IM-BLMDM:05,CDP-PT:19,JC-JL-FD:19,JB-JK-MAM-FA:19}
% and stabilization to zero states. However, these methods are not
% directly applicable to nonlinear systems, such as linear-threshold
% networks. Furthermore, regulating system dynamics to non-zero
% states, which is a common requirement in practical applications, is
% straightforward for linear systems by shifting the system's origin
% but is non-trivial for nonlinear systems.

Here, we focus on developing data-driven control techniques for
linear-threshold network dynamics. These systems have diverse
applications across several domains.  In computational neuroscience,
linear-threshold networks have been used to model mesoscale brain
activity~\cite{PD-LFA:01}. Nodes of the system represent neuron
populations, with states indicating their average firing rates. Edge
weights capture excitatory or inhibitory interactions among physically
adjacent neurons, and a linear-threshold activation function accounts
for firing rate saturation due to hyperpolarization \cite{RBR-SAS:03}.
In social sciences, linear-threshold networks have been used to build
influence propagation models~\cite{WC-YY-LZ:10,YDZ-VS-NEL:17} to
characterize the dynamics of public opinions.  Nodes represent
individuals and their opinions, while edges describe interactions
influenced by others or social media (through external inputs to the
network). A linear-threshold is introduced to each node to gauge the
condition when individuals' opinions change.  In deep learning
applications with artificial neural networks, linear-threshold models
are the same as modified rectified linear units (RELU with
max-limits)~\cite{AGH-MZ-BC-DK-WW-TW-MA-HA:17}. RELUs with tunable
parameters are generally used within the hidden nodes in the deep
neural networks~\cite{VN-GEH:10}, which allows good robustness and
versatility for function approximation~\cite{DF-TS-AR:20}. Compared
with other common activation functions such as \texttt{sigmoid} and
\texttt{tanh}, RELU networks do not suffer vanishing gradient
problems~\cite{MML-KHL:17}, because its gradient is either a constant
or zero. For the same reason, ReLU networks have low computational
complexity~\cite{AK-IS-GEH:12} when performing gradient propagation.
One feasible approach to designing controllers for linear-threshold
networks is to first identify the system
parameters~\cite{XW-JC:25-tac} and then design a model-based
controller~\cite{EN-JC:21-tacI}. Instead, we pursue here a direct
data-driven approach that bypasses the system identification step to
avoid accumulating approximation errors.

% model identification for nonlinear
% systems~\cite{XH-RJM-SC-CJH-KL-GWI:08}

\subsubsection*{Statement of Contributions} 
We study the data-driven control of linear-threshold network dynamics
to stabilize the state of the system to a constant reference value. We
do this by designing two types of controllers: (i) a state feedback
controller with feed-forward reference input and (ii) an augmented
feedback controller with error integration. By utilizing sampled
input-output data of the system, we introduce a map that reconstructs
the state-input pair as a transformation of data matrices. This allows
us to obtain a purely data-based representation to describe the system
dynamics, avoiding an explicit identification of the parametric model.
We then use this result and the specific form of the state feedback
controller with feed-forward reference input to obtain a data-driven
representation of the closed-loop system. To design the feedback and
feedforward gain matrices,
% which bridges the stability of the system and our ultimate objective,
% i.e., the design itself of the feedback and feedforward gain matrices.
we view the resulting system as a switched system and formulate
conditions for the design of a common controller that stabilizes all
modes. The stabilization conditions are characterized by a set of
linear matrix inequalities (LMIs), whose solutions correspond to the
gain matrices. We provide a theoretical guarantee for the controller's
effectiveness in stabilizing the system state to the desired constant
reference value.  For non-zero reference states, however, we observe
that controllers with feed-forward inputs are often not robust to
system parameter changes and external disturbances. To address this
limitation, we design an augmented feedback controller with error
integration adapting the approach developed for the state feedback
controller to the new design. The process involves deriving a
closed-loop data-based representation and formulating stabilization
conditions in the form of LMIs.  In both cases, we observe that the
LMI formulations result in a number of equations that grow
exponentially with the system state's dimensionality. To address this
computational challenge, we introduce an alternative sufficient
condition that reduces the complexity of solving LMIs without
sacrificing performance.  Finally, we validate the effectiveness of
both algorithms in two case studies: regulating firing rate dynamics
in rodent brains and regulating arousal level dynamics in humans. The
results demonstrate that both controllers are effective and the
controller with error integration has an improved performance compared
with the one with feed-forward reference input for handling
disturbances.

\subsubsection*{Notation}
Let $\real$ denote the set of real numbers.  Let
${\bm 1}_r \in \real^r$ and ${\bm 1}_{r\times p} \in \real^{r\times p}$
denote the vector and matrix with all entries equal to
$1$, respectively. Let $I_r$ denote the $r\times r$ identity matrix.  We let
$\col\{A_1,A_2,\cdots,A_r\}=
\begin{bmatrix}
  A_1^{\top}
  &A_2^{\top}&
  \cdots & A_r^{\top}
\end{bmatrix}^{\top}$ be a vertical stack of matrices $A_1,\cdots,A_r$
possessing the same number of columns. We let
$\diag\{A_1,A_2,\cdots,A_r\}$ denote the diagonal stack of matrices
$A_1,\cdots,A_r$. We use  $\bm{x}[i]\in\mathbb{R}$ to denote  the
$i$th entry of 
vector $\bm{x}$; correspondingly,  $M[i,j]\in\mathbb{R}$ is the
entry of matrix $M$ on its $i$th row and $j$th column.  We denote by
$M^{\top}$ the transpose of~$M$. 
We let $\image(M)$ denote the linear span of the columns 
of matrix $M$. Specially, $\image(I_r)=\mathbb{R}^{r}$.
For symmetric matrices $M, N$, $M\succ (\succeq)N$ means
$M-N$ is positive (semi-) definite.
For $s>0$ and $x\in\mathbb{R}$, the
threshold function $[x]_0^s$ is defined as
$$
[x]_0^s=
\begin{cases}
	s & \text{for}\quad x>s ,
	\\
	x & \text{for}\quad 0\le x\le s ,
	\\
	0 & \text{for}\quad x< 0 ,
\end{cases}
$$
For a vector $\bm{x} \in \real^r$, $[\bm{x}]_0^s$ denotes the
component-wise application of this definition.

\section{Problem formulation}
Consider a linear-threshold network of $n$ nodes governed by the
following discrete-time dynamics:
\begin{align}\label{eq_dstmodel}
  \bm{x}(t+1)=\alpha \bm{x}(t) +
  \left[W\bm{x}(t)+B\bm{u}(t)\right]_0^{s},\quad t\in\mathbb{N}.
\end{align}
Here $\bm{x} \in \realnonneg^n$ is the network state, which is the
compact stack of a scalar state in each node.
$W\in\mathbb{R}^{n\times n }$ is the network connectivity matrix,
characterizing the strength of interactions between different nodes.
$\bm{u}(t)\in\mathbb{R}^m$ is the control input vector, and
$B\in\mathbb{R}^{n\times m}$ is the associated input matrix.  For each
node, the state of the node evolves according to an intrinsic decay
rate $\alpha\in(0,1)$, and the linear-threshold activation function
denoted by $[\cdot]_0^s$, whose input is co-generated by the node's
neighbors and external inputs.  Given the decay rate $\alpha$ and the
properties of linear-threshold functions, we assume the initial state
of the dynamics satisfies $0\le \bm{x}(0)\le \frac{s}{1-\alpha}$, so
that $0\le \bm{x}(t)\le \frac{s}{1-\alpha}$, for all $t\in\mathbb{N}$.
In~\eqref{eq_dstmodel}, we assume the parameters $\alpha$ and $s$ are
known, but the matrices $W$ and $B$ are unknown.  This type of network
dynamics arise in the modeling of the dynamical behavior of the firing
rates of neuronal populations~\cite{PD-LFA:01}, where $\alpha$ denotes
the intrinsic decay rate of neurons' firing behaviors, $W$ and $B$
characterize the inhibition/excitation response of neurons from other
connected neurons or external inputs, and the linear-threshold $s$
characterizes the firing rate saturation of neurons due to
hyperpolarization~\cite{RBR-SAS:03}. In such contexts, the assumption
above about known parameters and unknown matrices are reasonable.  The
dynamics also model the opinion propagation of individuals in social
networks~\cite{YDZ-VS-NEL:17}, where $\alpha$ denotes decaying
confidence \cite{IM-AG:10}, $W$ and $B$ characterize the change of
individuals' agreement/disagreement on the matter impacted by other
people or social media, and the linear-threshold characterizes the
saturation threshold of public opinion \cite{FG-SL-AM:08}. More
generally, model~\eqref{eq_dstmodel}, with tunable matrices $W$
and~$B$, can be used as artificial neural networks to approximate
nonlinear dynamics for learning and control
tasks~\cite{AGH-MZ-BC-DK-WW-TW-MA-HA:17,DF-TS-AR:20}.

To study dynamics~\eqref{eq_dstmodel}, we employ a data-driven
approach and assume that system inputs and states can be sampled from
experiments.  Let $T_d$ be the total number of available data points,
and let $\bm{x}_d^+(k)$, $\bm{x}_d(k)$, and $\bm{u}_d(k)$,
$k \in \until{T_d}$ denote the data samples (corresponding to
$\bm{x}(t+1)$, $\bm{x}(t)$, and $\bm{u}(t)$, respectively).  We employ
the index $k$ as an indicator that distinguishes one data sample from
another.  It is possible that all the sampling instances of the data
are chosen consecutively from a system trajectory, where all the data
samples are head-tail connected, i.e., $\bm{x}^+_d(k)$ of the former
data can be used as the $\bm{x}_d(k)$ of the latter one.  In general,
we allow data samples to be collected at independent time instances,
and even from various trajectories of the same system.

\begin{problem}\label{Prob}
  Consider system~\eqref{eq_dstmodel} with known parameters $\alpha$
  and $s$, and unknown matrices $W$ and~$B$.  Let
  $\bm{r}\in\mathbb{R}^n$, with $0\le \bm{r}\le \frac{s}{1-\alpha}$,
  be a constant reference value the desired state should converge to.
  Given data samples $\bm{x}_d(k)$, $\bm{x}^+_d(k)$ and $\bm{u}_d(k)$,
  $k \in \until{T_d}$, design a feedback controller which
  asymptotically stabilizes the system state $\bm{x}(t)$ to the
  reference value $\bm{r}$.
\end{problem}

In the context of the applications mentioned above, the data-driven
control Problem~\ref{Prob} is motivated by the following
considerations. In neuroscience, for instance, our brain continuously
regulates the firing rate of neuronal cells to specific patterns,
i.e., depending on different brain functions, the activity of certain
neurons should be excited or inhibited.  In social science, the
control problem arises when certain actors seek to steer public
opinion in a particular direction.
% by advocating the opinion of certain groups of people while
% depressing that of other groups.
In more general control applications where a linear-threshold
artificial neural network approximates certain nonlinear dynamics, the
manipulation of the latter can be achieved by controlling the neural
network
model.  %To stabilize the network dynamics~\eqref{eq_dstmodel},
% the controller we employ has a simple linear feedback form, which
% can stabilize the system regardless the operating mode of the
% linear-threshold model. Here, we see solving Problem~\ref{Prob} as a
% first step towards developing advanced techniques for the network
% dynamics~\eqref{eq_dstmodel}.  Including more sophisticated
% controller design, such as data-driven model predictive
% control~\cite{DP-MF-SF-AB:19}; and more complex control objectives,
% such as stabilizing the system to other equilibria (not necessarily
% the origin) and track various classes of dynamic trajectories.

%
%\marginJC{How about moving what we say in this paragraph to later in
  %the paper? Here, instead, we could jump straight to the paragraph
  %after this (will require removing refs there to (2), (3), and gain
  %matrices). Then, (i) the form (2) gets moved to Section 4, somewhere
  %in the beginning; (ii) The observations about the limitations of (2) can be
  %moved to somewhere in Section 4 (towards the end?), and (iii) the
  %form (3) gets moved to Section 5.}
%  \marginXW{Moved.}
%

We assume Problem \ref{Prob} is solvable. This means that: (i) the
matrices $W$ and $B$ are such that there exist controllers %in the
% forms \eqref{eq_theproblem} and \eqref{eq_intctrl}
that stabilize the
system \eqref{eq_dstmodel} to $\bm{r}$;
% \footnote{Reference \cite{EN-JC:21-tacI} considers a system
  %$\bm{x}(t+1)=\alpha \bm{x}(t)
  %+\left[W\bm{x}(t)+\mathtt{d}(t)\right]_0^{s}. $ For
  %\eqref{eq_theproblem}, one can associate $K_2\bm{r}$ with
  %$\mathtt{d}(t)$. For \eqref{eq_intctrl}, one can associate
  %$-K_1\bm{r}+ K_2\bm{\xi}(t)$ with $\mathtt{d}(t)$.}:
%
  %\marginJC{In the footnote, where did $K_1 x(t)$ goes? I'm tempted
  % to just remove the footnote.}
% \marginXW{Footnote removed. }
%
(ii) the data samples available are sufficiently rich to allow us to
design the controller %$K_1, K_2$
without knowing the matrices $W$ and~$B$.~\cite[Theorem
IV.8]{EN-JC:21-tacI} describes classes of linear-threshold systems for
which (i) holds. We provide below conditions that ensure that (ii)
holds.  In the following, we develop a data-driven approach (instead
of model-based) that directly synthesizes a controller to solve
Problem~\ref{Prob}.

\section{Data-Based Representation of Linear Threshold
  Models}\label{SEC_KI}

%
%\marginJC{How about calling ``Open-Loop Data-Based Representation''
%  simply ``Data-Based Representation''.  Later, we can keep 
%  ``Closed-Loop Data-Based Representation'' in titles, use also ``Data-Based
%  Representation of Closed-Loop System'' in text?}
%  \marginXW{Addressed!}
%

In this section, we provide a data-based reformulation of
the system~\eqref{eq_dstmodel}. This representation describes the
system update using the available data samples and does not require
knowledge of the unknown matrices $W$ and~$B$.

% \subsection{Open-loop data-based representation} \label{Sec_OLDR}
We start by defining $\bm{z}(t) = \bm{x}(t+1)-\alpha \bm{x}(t)$ and
$\bm{p}(t)=\col\{\bm{x}(t),\bm{u}(t)\}$. Then, the
update~\eqref{eq_dstmodel} can be rewritten as
\begin{align}\label{eq_dataF}
  \bm{z}(t)= \left[H \bm{p}(t)\right]_0^{s} ,
\end{align}
where
\begin{align*}%\label{eq_defzHP}
  H&=
     \begin{bmatrix}
       W&B
     \end{bmatrix}
     =\begin{bmatrix}[1.2]
        \horzbar 	& h_1^{\top} & \horzbar \\
        \horzbar 	& h_2^{\top} & \horzbar \\
                        &	\vdots &\\[.2em]
        \horzbar& 	h_n^{\top} & \horzbar
      \end{bmatrix}\in\mathbb{R}^{n\times (n+m)} .
\end{align*}
Let $\bm{h} = \col\{{h}_1,{h}_2,\dots,{h}_n\}\in\mathbb{R}^{n(n+m)}$
be a vectorized system parameter encoding the matrices $W$ and~$B$.
Then,
\begin{align*}%\label{eq_dataF2}  
  H \bm{p}(t)
  &=\begin{bmatrix}[1.2]
      h_1^{\top}\bm{p}(t)\\
      h_2^{\top}\bm{p}(t)\\
      \vdots\\
      h_n^{\top}\bm{p}(t)
    \end{bmatrix}
    =
    \begin{bmatrix}[1.2] {\bm{p}(t)}^{\top}{h}_1
      \\
      {\bm{p}(t)}^{\top}{h}_2
      \\
      \vdots
      \\
      {\bm{p}(t)}^{\top}{h}_n
    \end{bmatrix}=\left(I_{n}\otimes \bm{p}(t)^{\top}\right)\bm{h} .
\end{align*}
Thus, equation \eqref{eq_dataF} reads
\begin{align}\label{eq_dataF3}
  \bm{z}(t)= \left[\left(I_{n}\otimes
  \bm{p}(t)^{\top}\right)\bm{h}\right]_0^{s}.
\end{align}
In order to obtain a data-based representation for the system, a key
step is to represent $\bm{h}$ with the data samples
$\{\bm{x}_d^+(k), \bm{x}_d(k), \bm{u}_d(k)\}_{k=1}^{T_d}$.  Towards
this end, let
\begin{align*}%\label{eq_defXXP}
  \mathcal{Z}_d=
  \begin{bmatrix}[1.2] \bm{x}^+_d(1)-\alpha\bm{x}_d(1)
    \\
    \bm{x}^+_d(2)-\alpha\bm{x}_d(2)
    \\
    \vdots
    \\
    \bm{x}^+_d(T_d)-\alpha\bm{x}_d({T_d})
  \end{bmatrix}, 
  \;
  \mathcal{P}_d=
  \begin{bmatrix}[1.2]
    I_{n}\otimes \bm{p}_d^{\top}(1)
    \\
    I_{n}\otimes
    \bm{p}_d^{\top}(2)
    \\
    \vdots
    \\
    I_{n}\otimes \bm{p}_d^{\top}(T_d)
  \end{bmatrix} ,
\end{align*}
with $\bm{p}_d(k)=\col\{\bm{x}_d(k),\bm{u}_d(t)\}$,
$\mathcal{Z}_d\in\mathbb{R}^{nT_d}$ and
$\mathcal{P}_d \in\mathbb{R}^{nT_d\times n(n+m)}$. According
to~\eqref{eq_dataF3}, we have
\begin{align}\label{eq_compactdataeq}
  \mathcal{Z}_d=\left[\mathcal{P}_d\bm{h}\right]_0^{s}.
\end{align}
Define
$ f(\mathcal{Z}_d) =
\mathcal{P}_d\bm{h}-\left[\mathcal{P}_d\bm{h}\right]_0^{s} \in
\mathbb{R}^{nT_d}$ to represent the part of $\mathcal{P}_d\bm{h}$ that
is truncated by the linear threshold $[\cdot]_0^{s}$. Clearly,
$f(\mathcal{Z}_d)[i]\neq 0$ only if $\mathcal{Z}_d[i] = s$ or
$ \mathcal{Z}_d[i]=
0$. %$\mathcal{Z}_d[i] = \max(\mathcal{Z}_d) ~\text{or}~
%\mathcal{Z}_d[i]= 0$.
Thus, we define a diagonal matrix $E_d\in\mathbb{R}^{nT_d\times nT_d}$
such that for all $i \in \until{nT_d}$,
\begin{align}\label{eq_defcE}
  &E_d[i,i] =
    \begin{cases}
      \phantom{-}1 & \text{if } \mathcal{Z}_d[i] = s ~\text{or}~
                     \mathcal{Z}_d[i]= 0
      \\
      \phantom{-}0 & \text{otherwise.}
    \end{cases}
\end{align}
Then, there exists a vector $v \in \real^{nT_d}$ such that
\begin{align}\label{eq_defCv}
  f(\mathcal{Z}_d)=E_dv,
\end{align}
which, together with~\eqref{eq_compactdataeq}, yields $
\mathcal{P}_d\bm{h}=\mathcal{Z}_d +E_dv$.  Using the property that
$(I-E_d)E_d=E_d-E_d=0$, one has
\begin{align}\label{eq_eqcv}
  (I-E_d)\mathcal{P}_d\bm{h}=(I-E_d)\mathcal{Z}_d.
\end{align}
This equation describes the vectorized system parameter $\bm{h}$ in
the form of a data-based equality constraint.
% in the following, we show how it is related to the controller design
% formulated in Problem \ref{Prob}.
We make the following assumption on the data samples.

\begin{assumption}\longthmtitle{Data richness}\label{Ass_rankQ}
  Given data samples
  $\{ \bm{x}_d^+(k), \bm{x}_d(k), \bm{u}_d(k) \}_{k=1}^{T_d}$, the
  matrix $(I-E_d)\mathcal{P}_d$, has full column rank.
\end{assumption}

Assumption~\ref{Ass_rankQ} can be directly verified by
computation. Note that Assumption~\ref{Ass_rankQ} becomes easier to
satisfy as the number of data samples grows.  Based on this
assumption, we can combine equations \eqref{eq_dataF3}
and~\eqref{eq_eqcv} to obtain a data-based representation of the
system dynamics, as stated next.

\begin{lemma}\longthmtitle{Data-based
    representation}\label{LM_OLSYS} 
  % Given state-input pairs $\bm{p}(t)=\col\{\bm{x}(t),\bm{u}(t)\}$,
  % suppose there exists a mapping $F(\bm{p}(t)):
	% \mathbb{R}^{m+n}\to\mathbb{R}^{n\times nT_d}$ such that
  Under Assumption \ref{Ass_rankQ}, let $F:
  \mathbb{R}^{n+m}\to\mathbb{R}^{n\times nT_d}$ be such that
  \begin{align}\label{eq_consts}
    % F(\bm{p}(t))\cdot(I-E_d)\mathcal{P}_d=
    % I_{n}\otimes\bm{p}(t)^{\top} ,
    F(\bm{p})\cdot(I-E_d)\mathcal{P}_d= I_{n}\otimes\bm{p}^{\top},
  \end{align}
  for any state-input pair $\bm{p}=\col\{\bm{x},\bm{u}\}$.  Then the
  dynamics \eqref{eq_dstmodel} has the following data-based
  representation,
  \begin{align}\label{eq_DDMG}
    \bm{x}(t+1) =\alpha
    \bm{x}(t)+\left[F(\bm{p}(t))\cdot(I-E_d)\mathcal{Z}_d\right]_0^{s} .
  \end{align}
\end{lemma}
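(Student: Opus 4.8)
The plan is to first check that the map $F$ in the statement exists, and then to substitute the data-based identity \eqref{eq_eqcv} to rewrite \eqref{eq_dataF3} in the purely data-based form \eqref{eq_DDMG}. For existence of $F$: by Assumption \ref{Ass_rankQ} the matrix $(I-E_d)\mathcal{P}_d \in \mathbb{R}^{nT_d \times n(n+m)}$ has full column rank, hence admits a left inverse $L$ with $L\,(I-E_d)\mathcal{P}_d = I_{n(n+m)}$ (for instance, its Moore-Penrose pseudoinverse). Then $F(\bm{p}) := (I_n \otimes \bm{p}^\top)\,L$ satisfies \eqref{eq_consts} for every $\bm{p} \in \mathbb{R}^{n+m}$; this provides one concrete, and in fact linear, choice, and the rest of the argument applies verbatim to any $F$ obeying \eqref{eq_consts}.

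Next, the crucial observation is that $\mathcal{Z}_d$ enters the right-hand side of \eqref{eq_DDMG} only through the product $(I-E_d)\mathcal{Z}_d$, and \eqref{eq_eqcv} gives $(I-E_d)\mathcal{Z}_d = (I-E_d)\mathcal{P}_d\,\bm{h}$, which therefore lies in $\image\big((I-E_d)\mathcal{P}_d\big)$. Hence, for any $t$,
$$F(\bm{p}(t))\cdot(I-E_d)\mathcal{Z}_d = F(\bm{p}(t))\cdot(I-E_d)\mathcal{P}_d\,\bm{h} = \big(I_n\otimes\bm{p}(t)^\top\big)\bm{h} = H\bm{p}(t),$$
where the middle equality uses \eqref{eq_consts} and the last uses the identity $H\bm{p}(t) = (I_n\otimes\bm{p}(t)^\top)\bm{h}$ derived before \eqref{eq_dataF3}. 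Applying the threshold function $[\cdot]_0^s$ and adding $\alpha\bm{x}(t)$ then yields $\alpha\bm{x}(t) + [F(\bm{p}(t))\cdot(I-E_d)\mathcal{Z}_d]_0^s = \alpha\bm{x}(t) + [W\bm{x}(t)+B\bm{u}(t)]_0^s = \bm{x}(t+1)$ by \eqref{eq_dstmodel}, which is exactly \eqref{eq_DDMG}.

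The computation itself is a short substitution; the one point that deserves care — and what I would flag as the main obstacle — is the apparent mismatch between \emph{defining} $F$ through $\mathcal{P}_d$ and then \emph{evaluating} it on $\mathcal{Z}_d$, compounded by the non-uniqueness of $F$. Identity \eqref{eq_eqcv} resolves both at once: because the data are generated by \eqref{eq_dstmodel} with an admissible initial condition (so \eqref{eq_compactdataeq} holds), the vector $(I-E_d)\mathcal{Z}_d$ is forced into the column span of $(I-E_d)\mathcal{P}_d$, on which every solution $F$ of \eqref{eq_consts} acts identically, returning $H\bm{p}(t)$. Full column rank of $(I-E_d)\mathcal{P}_d$ is used only to guarantee that \eqref{eq_consts} is solvable; it plays no further role in the identity.
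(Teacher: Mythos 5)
Your proposal is correct and follows essentially the same route as the paper's proof: existence of $F$ from the full column rank of $(I-E_d)\mathcal{P}_d$ (you give an explicit left-inverse construction where the paper invokes the Rouch\'{e}--Capelli theorem), followed by the same chain of substitutions combining \eqref{eq_dataF3}, \eqref{eq_eqcv}, and \eqref{eq_consts} to replace $(I-E_d)\mathcal{P}_d\bm{h}$ by $(I-E_d)\mathcal{Z}_d$ inside the threshold. Your closing remark that all solutions $F$ of \eqref{eq_consts} agree on $\image\bigl((I-E_d)\mathcal{P}_d\bigr)$ is a nice explicit justification of why the non-uniqueness of $F$ is harmless, which the paper leaves implicit.
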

\begin{proof}
  Because $(I-E_d)\mathcal{P}_d$ has full column rank, by the
  Rouch\'{e}-Capelli Theorem~\cite{RAH-CRJ:12}, a map $F:
  \mathbb{R}^{m+n}\to\mathbb{R}^{n\times nT_d}$
  satisfying~\eqref{eq_consts} must exist (but may not be unique).
  Then, from equations \eqref{eq_dataF3}, \eqref{eq_eqcv} and
  \eqref{eq_consts}, it follows that
  \begin{align*}%\label{eq_DDMGPF}
    \bm{z}(t)
    &=\left[\left(I_{n}\otimes
      \bm{p}(t)^{\top}\right)\bm{h}\right]_0^{s}
    \\
    &=\left[F(\bm{p}(t))\cdot(I-E_d)\mathcal{P}_d\bm{h}\right]_0^{s}
    \\
    &=\left[F(\bm{p}(t))\cdot(I-E_d)\mathcal{Z}_d\right]_0^{s}.
  \end{align*}
  Equation~\eqref{eq_DDMG} follows since
  $\bm{x}(t+1)= \alpha \bm{x}(t) + \bm{z}(t) $.
\end{proof}

% For each $\bm{p}$, $F(\bm{p})$ transforms the data matrix
% $(I-E_d)\mathcal{P}_d$ into the augmented state-input pair
% $I_{n}\otimes\bm{p}^{\top}$ used by \eqref{eq_dataF3}. Based on this
% transformation, we obtain~\eqref{eq_DDMG}. 
We refer to~\eqref{eq_DDMG} as a data-based representation of
system~\eqref{eq_dstmodel} because it does not involve the matrices
$W$ and $B$ (or their vectorized version $\bm{h}$), and instead
allows, based on the data, to determine the state at the next timestep
based on the current state and the control input.  
The representation is valid for an arbitrary input.
%We refer to it as
%open-loop because the state-input pair is arbitrary and does not take
%into account the coupling of the system input and the system state
%characterized in controllers~\eqref{eq_theproblem} and
%\eqref{eq_intctrl}.  %We address this point next.
%
%\marginJC{If we change terminology, adjust last sentence. We could
%  simply say that the representation is valid for an arbitrary input.}
% \marginXW{Addressed!}
%

% In fact, if one explicitly obtains the function $F(\bm{p}(t))$ from
% the constraint \eqref{eq_consts}, then the process, in its nature,
% is identical to system
% identification. %Here, the identifiability depends on the rank of $(I-E_d)\mathcal{P}_d$.  In the following, we show how equation \eqref{eq_DDMG} can be further modified and utilized to achieve the controller design goal formulated in Problem \ref{Prob}.

\section{Data-Driven Control with Feed-forward Reference
  Input}\label{Sec_LControl}
In this section, we introduce a data-driven approach to solve
Problem~\ref{Prob} with a controller of the following
  form:
\begin{align}\label{eq_theproblem}
  \bm{u}(t)=K_1\bm{x}(t)+K_2\bm{r},
\end{align}
which is composed of a feed-back gain $K_1\in\mathbb{R}^{m\times n}$
over the system's current state and a feed-forward gain
$K_2\in\mathbb{R}^{m\times n}$ over the reference input.
%form of~\eqref{eq_theproblem}.
%
%\marginJC{This is where we could move to the form (2).}
%
Our strategy consists of first leveraging the data-based
representation obtained Section~\ref{SEC_KI} to describe the
closed-loop system.  We then view the resulting linear-threshold
system as a switched system and design a common linear feedback
controller that stabilizes all modes.

\subsection{Closed-loop data-based representation}\label{Sec_CLDR}
We start with a data-based representation of the closed-loop system 
dynamics under the controller~\eqref{eq_theproblem}.

\begin{lemma}\longthmtitle{Closed-loop data-based
    representation}\label{LM_CLSYS1}
  Let Assumption \ref{Ass_rankQ} hold.  The system \eqref{eq_dstmodel}
  under the controller~\eqref{eq_theproblem} admits the data-based
  representation,
  \begin{align}\label{eq_DDMGtr}
    \bm{x}(t+1)=\alpha
    \bm{x}(t)+\left[G(\bm{x}(t),\bm{r})\cdot(I-E_d)\mathcal{Z}_d\right]_0^{s},
  \end{align}
  where
  $G: \mathbb{R}^{n}\times \mathbb{R}^{n}\to\mathbb{R}^{n\times nT_d}$
  satisfies
  \begin{multline}\label{eq_conststr}
    G(\bm{x},\bm{r})\cdot(I-E_d)\mathcal{P}_d
    \\
    =
    I_{n}\otimes\left(\bm{x}^{\top}
    \begin{bmatrix}
      I_n & K_1^{\top}
    \end{bmatrix}
    +\bm{r}^{\top}
    \begin{bmatrix}
      \bm{0}_n & K_2^{\top}
    \end{bmatrix}\right) .
  \end{multline}
\end{lemma}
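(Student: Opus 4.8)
The plan is to derive the closed-loop representation by specializing the open-loop data-based representation of Lemma~\ref{LM_OLSYS} to the particular input dictated by the controller~\eqref{eq_theproblem}. First, substituting $\bm{u}(t)=K_1\bm{x}(t)+K_2\bm{r}$ into the stacked state-input vector gives $\bm{p}(t)=\col\{\bm{x}(t),\,K_1\bm{x}(t)+K_2\bm{r}\}$, whose transpose splits into a part depending on the state and a part depending on the reference,
\[
  \bm{p}(t)^{\top}
  = \bm{x}(t)^{\top}\begin{bmatrix} I_n & K_1^{\top}\end{bmatrix}
  + \bm{r}^{\top}\begin{bmatrix} \bm{0}_n & K_2^{\top}\end{bmatrix},
\]
which is precisely the block matrix appearing on the right-hand side of~\eqref{eq_conststr}. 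The dimensions here need to be checked once (both $\begin{bmatrix} I_n & K_1^{\top}\end{bmatrix}$ and $\begin{bmatrix} \bm{0}_n & K_2^{\top}\end{bmatrix}$ are $n\times(n+m)$, so the two summands are conformable with $\bm{p}(t)^{\top}$), but this is routine.

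Next, I define $G$ as the composition of the map $F$ from Lemma~\ref{LM_OLSYS} with the affine controller map, namely $G(\bm{x},\bm{r}):=F\big(\col\{\bm{x},\,K_1\bm{x}+K_2\bm{r}\}\big)$. Since $K_1$ and $K_2$ are fixed, this defines a map $\mathbb{R}^n\times\mathbb{R}^n\to\mathbb{R}^{n\times nT_d}$, not unique but inheriting the (harmless) non-uniqueness of $F$. Evaluating the defining identity~\eqref{eq_consts} of $F$ at $\bm{p}=\col\{\bm{x},\,K_1\bm{x}+K_2\bm{r}\}$ gives $G(\bm{x},\bm{r})\cdot(I-E_d)\mathcal{P}_d = I_n\otimes\bm{p}^{\top}$; inserting the block decomposition of $\bm{p}^{\top}$ above and using bilinearity of the Kronecker product yields exactly~\eqref{eq_conststr}.

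Finally, Lemma~\ref{LM_OLSYS} asserts that~\eqref{eq_DDMG} holds for an arbitrary input, so evaluating it along the closed-loop input $\bm{p}(t)=\col\{\bm{x}(t),\,K_1\bm{x}(t)+K_2\bm{r}\}$ and rewriting $F(\bm{p}(t))$ as $G(\bm{x}(t),\bm{r})$ immediately gives~\eqref{eq_DDMGtr}. I do not expect a genuine obstacle: the whole argument is substitution into the already-established open-loop representation together with an elementary block-matrix/Kronecker identity. The only point deserving a sentence of justification is that the existence of a valid $F$ (guaranteed under Assumption~\ref{Ass_rankQ} via the Rouch\'{e}--Capelli Theorem, as in Lemma~\ref{LM_OLSYS}) is what makes the composition defining $G$ well-posed.
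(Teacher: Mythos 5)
Your proposal is correct and follows essentially the same route as the paper: the paper's proof is a one-liner defining $G(\bm{x},\bm{r})=F\bigl(\col\{\bm{x},K_1\bm{x}+K_2\bm{r}\}\bigr)$ and invoking Lemma~\ref{LM_OLSYS}, which is exactly your construction. You merely spell out the block decomposition of $\bm{p}^{\top}$ and the Kronecker identity that the paper leaves implicit.
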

\smallskip
\begin{proof}	
  By defining
  \begin{align*}
    G(\bm{x},\bm{r})=F\left(
    \begin{bmatrix} 
      I_n\\K_1
    \end{bmatrix}
    \bm{x}+\begin{bmatrix} 
             \bm{0}_n\\K_2
           \end{bmatrix}\bm{r}\right) ,
  \end{align*}
  the result follows from Lemma~\ref{LM_OLSYS}.
\end{proof}

Equation~\eqref{eq_DDMGtr} provides a closed-loop data-based
description of system~\eqref{eq_dstmodel}.
% with the existence of $G$ following directly from the existence
% of~$F$.
%{\color{red} Note that the map $G$ cannot be directly designed due to
%  the implicit constraint~\eqref{eq_conststr}.}
%
%\marginJC{Do you rather mean something like: ``A closed-form
%  expression for the map $G$ is not readily available from
%  equation~\eqref{eq_conststr}.''  }
%
Nevertheless, a closed-form expression for the map $G$ is not readily
available from equation~\eqref{eq_conststr}.  In what follows, we
provide an explicit construction of this map.

Looking at the right-hand side of~\eqref{eq_conststr}, we observe a
block-diagonal matrix resulting from the Kronecker product.  We make
use of such special pattern as follows.  Define
\begin{align*}
  \overline{\mathcal{P}}_d=I_n\otimes
  \begin{bmatrix}[1.2]
    \bm{p}_d^{\top}(1)\\
    \bm{p}_d^{\top}(2)\\
    \vdots\\
    \bm{p}_d^{\top}(T_d)
  \end{bmatrix} \in\mathbb{R}^{nT_d\times n(n+m)}.
\end{align*}
Because $\overline{\mathcal{P}}_d$ and ${\mathcal{P}}_d$ share
  the same rows but in different orders, there  exist a permutation matrix
  $T_F\in \mathbb{R}^{nT_d\times nT_d}$ such that 
  % and let $T_F\in \mathbb{R}^{nT_d\times nT_d}$ be the elementary
  % matrix that switches the rows of ${\mathcal{P}}_d$ to obtain
  % $\overline{\mathcal{P}}_d$, i.e.,
%
  % \marginJC{Why is it elementary?}  \marginXW{Rewritten. It should
  % be a permutation matrix that encodes multiple row
  % switching. Instead of elementary.}
%
\begin{align}\label{eq_defbPd}
  \overline{\mathcal{P}}_d =T_F{\mathcal{P}}_d.
\end{align}
Since $E_d$ is a diagonal matrix, $\overline{E}_d=T_F E_d T_F^{-1}\in
\mathbb{R}^{nT_d\times nT_d}$
% 
%\marginJC{The inverse of an elementary matrix is itself, no?}
% \marginXW{$T_F^{-1}=T_F^{\top}$}
%
is also a diagonal matrix\footnote{Let the permutation matrix $T_F$
  correspond to the permutation tuple $\pi$, which is a reordering of
  the set $\{1,2,\cdots,nT_d\}$. By definition of a permutation
  matrix, one has $T_F^{-1}=T_F^{\top}$, and therefore
  $\overline{E}_d[j,k]=E_d[\pi[j],\pi[k]]$~\cite{HW:14}. Consequently,
  $\overline{E}_d[j,k]=0$ if $j\neq k$.}
%
% \marginJC{Why? Reference? Is this true for any diagonal matrix, or
% just true b/c $E_d$ is a particular type of diagonal matrix (with
% 1's or 0's)?}
% \marginXW{Is it worth a lemma? I currently put it as a footnote.}
%
which we write as
\begin{align*}
  \overline{E}_d=
  \diag\{\overline{\mathtt{E}}_1,\cdots,\overline{\mathtt{E}}_n\} ,
  % \begin{bmatrix}[1.2]
  % \overline{\mathtt{E}}_1 &&\\
  % &\ddots&\\
  % &&\overline{\mathtt{E}}_n
  % \end{bmatrix} ,
\end{align*}
with $\overline{\mathtt{E}}_i\in\mathbb{R}^{T_d\times T_d}$.  Now, let
\begin{align*}
  \begin{bmatrix}[1.2]
    \overline{z}_1 \\
    \vdots\\
    \overline{z}_n
  \end{bmatrix} = T_F \mathcal{Z}_d ,
\end{align*}
with $\overline{z}_i\in\mathbb{R}^{T_d}$, $i\in\until{n}$, and define
$Z=\diag\{Z_1,\cdots,Z_n\}$ by
\begin{subequations}\label{eq:new-data-matrices}
  % \begin{align}\label{eq_defZ}
  %   Z=\begin{bmatrix} \overline{z}_1^{\top}(I-\overline{E}_1) &&
  %     \\
  %     &\ddots&
  %     \\
  %     &&\overline{z}_n^{\top}(I-\overline{E}_n)
  %   \end{bmatrix}\in\mathbb{R}^{\times T_d}.
  % \end{align}
  \begin{align}\label{eq_defZ}
    Z_i=
    \overline{z}_i^{\top}(I_{T_d}-\overline{\mathtt{E}}_i)\in\mathbb{R}^{1\times
    T_d}. 
  \end{align}
  %
  %\marginJC{This $I$ is $I_{T_d}$, no? Here and just below in the def
  %  of $Q_i$}
  %
  Further define
  $Q=\diag\{Q_1,\cdots,Q_n\} \in\mathbb{R}^{nT_d\times n(n+m)}$ by
  \begin{align}\label{eq_defQ}
    Q_i=\left(I_{T_d}-\overline{\mathtt{E}}_i\right)
    \begin{bmatrix}[1.2]
      \bm{p}_d^{\top}(1)
      \\
      \bm{p}_d^{\top}(2)
      \\
      \vdots
      \\
      \bm{p}_d^{\top}(T_d)
    \end{bmatrix} \in\mathbb{R}^{T_d\times (n+m)}.
  \end{align}  
\end{subequations}
By definition, we have $Q=(I_{nT_d}-\overline{E}_d)\overline{\mathcal{P}}_d$.
%
%\marginJC{And this $I$ is $I_{nT_d}$}
%

\begin{lemma}\longthmtitle{Modified closed-loop data-based
    representation}\label{LM_CLSYS}
  Let Assumption~\ref{Ass_rankQ} hold and consider the data matrices
  $Z$ and $Q$ defined in~\eqref{eq:new-data-matrices}. The
  system~\eqref{eq_dstmodel} under the
  controller~\eqref{eq_theproblem} can be represented by
  \begin{align}\label{eq_DDMZMtr}
    \bm{x}(t+1) = \alpha \bm{x}(t) + \left[Z(M \bm{x}(t)+N \bm{r})\right]_0^{s},
  \end{align}
  where $M, N\in\mathbb{R}^{nT_d\times n}$ satisfy
  \begin{align}\label{eq_QMK}
    Q^{\top}M=\bm{1}_n\otimes
    \begin{bmatrix} I_n\\K_1
    \end{bmatrix},
    \quad Q^{\top}N =
    \bm{1}_n\otimes
    \begin{bmatrix}
      \bm{0}_{n \times n}\\K_2
    \end{bmatrix}.
  \end{align}
  %
  %\marginJC{With the way we've defined the notation, $ \bm{0}_n$ is a
  %  vector of $n$ zero's, but for the dimensions to match, shouldn't
  %  it be  $\bm{0}_{n \times n}$?}
%   \marginXW{Addressed!}
  %
\end{lemma}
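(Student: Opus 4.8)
The plan is to bypass the (non-explicit) map $G$ of Lemma~\ref{LM_CLSYS1} and express the quantity inside the threshold in~\eqref{eq_DDMZMtr} directly in terms of the true (unknown) parameter $\bm{h}=\col\{h_1,\dots,h_n\}$. Under the controller~\eqref{eq_theproblem}, the state--input pair is $\bm{p}(t)=\col\{\bm{x}(t),\bm{u}(t)\}=\bigl[\begin{smallmatrix}I_n\\K_1\end{smallmatrix}\bigr]\bm{x}(t)+\bigl[\begin{smallmatrix}\bm{0}_{n\times n}\\K_2\end{smallmatrix}\bigr]\bm{r}$, and~\eqref{eq_dataF3} gives $\bm{x}(t+1)-\alpha\bm{x}(t)=\bigl[(I_n\otimes\bm{p}(t)^{\top})\bm{h}\bigr]_0^{s}=[H\bm{p}(t)]_0^{s}$. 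Hence it suffices to show: (a) matrices $M,N\in\mathbb{R}^{nT_d\times n}$ satisfying~\eqref{eq_QMK} exist, and (b) for every such $M,N$ one has $Z(M\bm{x}(t)+N\bm{r})=(I_n\otimes\bm{p}(t)^{\top})\bm{h}$; applying $[\cdot]_0^{s}$ and adding $\alpha\bm{x}(t)$ then yields~\eqref{eq_DDMZMtr}.

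The next step is the permutation/block bookkeeping. Since $T_F$ is a permutation matrix and $\overline{E}_d=T_FE_dT_F^{-1}$, one has $(I_{nT_d}-\overline{E}_d)T_F=T_F(I_{nT_d}-E_d)$; together with $\overline{\mathcal{P}}_d=T_F\mathcal{P}_d$ and $Q=(I_{nT_d}-\overline{E}_d)\overline{\mathcal{P}}_d$ this gives $(I_{nT_d}-E_d)\mathcal{P}_d=T_F^{-1}Q$, and together with the definitions of $\overline{z}_i$ and $Z_i$ in~\eqref{eq:new-data-matrices} (using that each $\overline{\mathtt{E}}_i$ is a symmetric $0$--$1$ diagonal matrix, so $(I_{T_d}-\overline{\mathtt{E}}_i)\overline{z}_i=Z_i^{\top}$) it gives $(I_{nT_d}-E_d)\mathcal{Z}_d=T_F^{-1}\col\{Z_1^{\top},\dots,Z_n^{\top}\}$. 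Multiplying the parameter identity~\eqref{eq_eqcv} by $T_F$ then yields $Q\bm{h}=\col\{Z_1^{\top},\dots,Z_n^{\top}\}$, which, since $Q=\diag\{Q_1,\dots,Q_n\}$, reads blockwise as $Q_ih_i=Z_i^{\top}$, equivalently $Z_i=h_i^{\top}Q_i^{\top}$, for each $i\in\until{n}$. For (a), Assumption~\ref{Ass_rankQ} and $(I_{nT_d}-E_d)\mathcal{P}_d=T_F^{-1}Q$ show that $Q$ has full column rank $n(n+m)$, hence by block-diagonality each $Q_i\in\mathbb{R}^{T_d\times(n+m)}$ has full column rank and each $Q_i^{\top}$ full row rank; thus $Q_i^{\top}M_i=\bigl[\begin{smallmatrix}I_n\\K_1\end{smallmatrix}\bigr]$ and $Q_i^{\top}N_i=\bigl[\begin{smallmatrix}\bm{0}_{n\times n}\\K_2\end{smallmatrix}\bigr]$ are solvable (non-uniquely in general), and stacking the $M_i,N_i$ produces $M,N$ obeying~\eqref{eq_QMK}.

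For (b), writing $M=\col\{M_1,\dots,M_n\}$, $N=\col\{N_1,\dots,N_n\}$ with $M_i,N_i\in\mathbb{R}^{T_d\times n}$, the identities~\eqref{eq_QMK} are equivalent to $Q_i^{\top}M_i=\bigl[\begin{smallmatrix}I_n\\K_1\end{smallmatrix}\bigr]$ and $Q_i^{\top}N_i=\bigl[\begin{smallmatrix}\bm{0}_{n\times n}\\K_2\end{smallmatrix}\bigr]$ for all $i$; using $Z=\diag\{Z_1,\dots,Z_n\}$, the $i$-th block of $Z(M\bm{x}(t)+N\bm{r})$ equals $Z_iM_i\bm{x}(t)+Z_iN_i\bm{r}=h_i^{\top}Q_i^{\top}M_i\bm{x}(t)+h_i^{\top}Q_i^{\top}N_i\bm{r}=h_i^{\top}\bigl(\bigl[\begin{smallmatrix}I_n\\K_1\end{smallmatrix}\bigr]\bm{x}(t)+\bigl[\begin{smallmatrix}\bm{0}_{n\times n}\\K_2\end{smallmatrix}\bigr]\bm{r}\bigr)=h_i^{\top}\bm{p}(t)$, so $Z(M\bm{x}(t)+N\bm{r})=\col\{h_1^{\top}\bm{p}(t),\dots,h_n^{\top}\bm{p}(t)\}=(I_n\otimes\bm{p}(t)^{\top})\bm{h}$, which is (b). (Equivalently, one may start from the representation~\eqref{eq_DDMGtr} of Lemma~\ref{LM_CLSYS1}, substitute $(I_{nT_d}-E_d)\mathcal{Z}_d=T_F^{-1}\col\{Z_i^{\top}\}$ together with the defining relation~\eqref{eq_conststr}, and reach the same conclusion.) I expect the only delicate point to be the bookkeeping: keeping the permutation $T_F$, the diagonal splitting $\overline{E}_d=\diag\{\overline{\mathtt{E}}_1,\dots,\overline{\mathtt{E}}_n\}$, and the row/column partitions of $Z$, $Q$, $M$, $N$, $\bm{h}$ mutually consistent, and transferring the full-rank hypothesis of Assumption~\ref{Ass_rankQ} to the individual blocks $Q_i$ so as to justify existence of $M,N$; once these are in place, the computation establishing (b) is immediate.
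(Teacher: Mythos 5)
Your proof is correct, and it takes a more direct route than the paper's. The paper proves the lemma by constructing the map $G$ of Lemma~\ref{LM_CLSYS1} explicitly from $M,N$ (setting $\overline{G}_i(\bm{x},\bm{r})=(M_i\bm{x}+N_i\bm{r})^{\top}$, verifying that $G=\overline{G}T_F$ satisfies~\eqref{eq_conststr}, and then checking $[Z(M\bm{x}+N\bm{r})]_0^s=[G(\bm{x},\bm{r})(I-E_d)\mathcal{Z}_d]_0^s$), so that the representation~\eqref{eq_DDMZMtr} is inherited from~\eqref{eq_DDMGtr}. You instead bypass $G$ entirely: you extract from the data equation~\eqref{eq_eqcv} the permuted blockwise identity $Q_ih_i=Z_i^{\top}$ and verify directly that $Z(M\bm{x}(t)+N\bm{r})=(I_n\otimes\bm{p}(t)^{\top})\bm{h}=H\bm{p}(t)$, after which~\eqref{eq_DDMZMtr} is immediate from~\eqref{eq_dataF}. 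The two arguments rest on the same underlying mechanism --- the paper's route also ultimately invokes~\eqref{eq_eqcv}, but only indirectly through the map $F$ of Lemma~\ref{LM_OLSYS} --- so what your version buys is transparency and self-containedness: the identity $Z_i=h_i^{\top}Q_i^{\top}$ makes explicit \emph{why} the data-based quantity reproduces the true preactivation, at the cost of re-deriving the permutation bookkeeping that Lemmas~\ref{LM_OLSYS} and~\ref{LM_CLSYS1} were meant to encapsulate. Your existence argument (full column rank of $Q$ passing to each diagonal block $Q_i$, hence full row rank of each $Q_i^{\top}$) is a slightly finer-grained version of the paper's appeal to the full row rank of $Q^{\top}$; both are valid. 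All the auxiliary identities you use ($(I-\overline{E}_d)T_F=T_F(I-E_d)$, $Z_i^{\top}=(I_{T_d}-\overline{\mathtt{E}}_i)\overline{z}_i$ by symmetry of the $0$--$1$ diagonal blocks, and the equivalence of~\eqref{eq_QMK} with its blockwise form) check out.
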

\begin{proof}
  We first establish the existence of~$M$, $N$. Note that
  \begin{align*}
    Q&=(I-\overline{E}_d)\overline{\mathcal{P}}_d =(T_FT_F^{-1}-T_F
       E_d T_F^{-1})T_F{\mathcal{P}}_d
    \\
     &=T_F(I- E_d ){\mathcal{P}}_d .
  \end{align*}
  Since $(I-E_d)\mathcal{P}_d$ has full column rank, cf.  Assumption
  \ref{Ass_rankQ}, we deduce that $Q^{\top}$ must have full row
  rank. From the Rouch\'{e}-Capelli Theorem~\cite{RAH-CRJ:12}, for any
  $K_1, K_2$, there always exist $M, N$ (which may not be unique)
  satisfying equation~\eqref{eq_QMK}.
  
  Next, we show that $M, N$ can be used to construct the map $G$
  satisfying~\eqref{eq_conststr}. Let $M=\col\{M_1,\cdots,M_n\}$,
  $N=\col\{N_1,\cdots,N_n\}$, with
  $M_i, N_i\in\mathbb{R}^{T_d\times n}$.  From the diagonal structure
  of $Q$, \eqref{eq_QMK} can be equivalently rewritten as
  \begin{align}\label{eq_QMIK}
    Q_i^{\top}M_i
    =\begin{bmatrix}
       I_n\\K_1
     \end{bmatrix} ,\quad 
    Q_i^{\top}N_i
    =\begin{bmatrix}
       \bm{0}_{n\times n}\\K_2
     \end{bmatrix},
  \end{align}
  for $i\in\until{n}$. Let
  $\overline{G}_i(\bm{x},\bm{r})=(M_i\bm{x}+N_i\bm{r})^{\top} \in
  \real^{T_d}$, $i\in\until{n}$. Note that
  \begin{align*}%\label{eq_consts31}
    \overline{G}_i(\bm{x},\bm{r})Q_i
    &=
      \bm{x}^{\top}M_i^{\top}Q_i+\bm{r}^{\top}N_i^{\top}Q_i
    \\
    &=\bm{x}^{\top}
      \begin{bmatrix}
        I_n & K_1^{\top}
      \end{bmatrix}
      +\bm{r}^{\top}
      \begin{bmatrix}
        \bm{0}_{n\times n} & K_2^{\top}
      \end{bmatrix} .
  \end{align*}
  Define
  $\overline{G}(\bm{x},\bm{r}) = \diag\{\overline{G}_1
  (\bm{x},\bm{r}),\cdots,\overline{G}_n(\bm{x},\bm{r})\}$.  One then has
  \begin{multline}\label{eq_conststrT}
    \overline{G}(\bm{x},\bm{r})(I-\overline{E}_d)\overline{\mathcal{P}}_d
    \\
    =
    I_{n}\otimes\left(\bm{x}^{\top}
      \begin{bmatrix}
        I_n & K_1^{\top}
      \end{bmatrix}
      +\bm{r}^{\top}
      \begin{bmatrix}
        \bm{0}_{n\times n} & K_2^{\top}
      \end{bmatrix}\right).
  \end{multline}
  By letting $G(\bm{x},\bm{r})=\overline{G}(\bm{x},\bm{r})T_F$, it
  follows that 
  \begin{align*}%\label{eq_bGP=GP}
    G(\bm{x},\bm{r})(I-E_d)\mathcal{P}_d
    &=\overline{G}(\bm{x},\bm{r})T_F
      T_F^{-1}(I-\overline{E}_d) % T_F T_F^{-1}
      \overline{\mathcal{P}}_d
    \\
    &=\overline{G}(\bm{x},\bm{r})(I-\overline{E}_d)\overline{\mathcal{P}}_d .
  \end{align*}
  This, together with~\eqref{eq_conststrT}, implies that $G$
  satisfies~\eqref{eq_conststr}.
  
  Finally, we show that~\eqref{eq_DDMZMtr} follows from this fact and
  equation~\eqref{eq_DDMGtr}, as follows
  \begin{align*}
    & \left[Z(M \bm{x}+N\bm{r})\right]_0^{s}=
      \begin{bmatrix}[1.2]
        \overline{z}_1^{\top}(I-\overline{\mathtt{E}}_1)(M_1\bm{x}+N_1\bm{r})
        \\
        \vdots
        \\
        \overline{z}_n^{\top}(I-\overline{\mathtt{E}}_n)(M_n\bm{x}+N_n\bm{r})
      \end{bmatrix}_0^{s}
    \\
    & =
      \begin{bmatrix}[1.2]
        \overline{z}_1^{\top}
        (I-\overline{\mathtt{E}}_1)(\overline{G}_1(\bm{x},\bm{r}))^{\top} 
        \\
        \vdots\\
        \overline{z}_n^{\top}
        (I-\overline{\mathtt{E}}_n)(\overline{G}_n(\bm{x},\bm{r}))^{\top}  
      \end{bmatrix}_0^{s}
      \!=
      \begin{bmatrix}[1.2]
        \overline{G}_1(\bm{x},\bm{r})(I-\overline{\mathtt{E}}_1)\overline{z}_1
        \\
        \vdots\\
        \overline{G}_n(\bm{x},\bm{r})(I-\overline{\mathtt{E}}_n)\overline{z}_n
      \end{bmatrix}_0^{s}
    \\
    &
      =
      \left[\overline{G}(\bm{x},\bm{r})
      (I-\overline{E}_d)T_F\mathcal{Z}_d\right]_0^{s}   
      =\left[G(\bm{x},\bm{r})(I-E_d)\mathcal{Z}_d\right]_0^{s} ,
  \end{align*}
  which completes the proof.
\end{proof}

Comparing the statements in Lemmas~\ref{LM_CLSYS1} and~\ref{LM_CLSYS},
we see that finding
$G: \mathbb{R}^{n}\times \mathbb{R}^{n}\to\mathbb{R}^{n\times nT_d}$
satisfying~\eqref{eq_conststr} can be accomplished by finding the
matrices $M, N\in\mathbb{R}^{nT_d\times n}$ satisfying~\eqref{eq_QMK}.
Regarding the solution of this latter equation, we note that when
$K_1, K_2$ are given, $M, N$ can be readily computed. In fact,
equations~\eqref{eq_QMIK} prescribe exactly how to find
$\{M_i, N_i\}_{i=1}^n$.  However, since our ultimate goal is to design
controller gain matrices $K_1, K_2$ themselves, the solution in $M, N$
and $K_1, K_2$ of equation~\eqref{eq_QMK} poses the challenge of
jointly solving for $n$ coupled systems of linear equations.  The
following result provides a reformulation of the equation showing that
$K_1, K_2$ can be expressed as functions of~$M, N$.

\begin{lemma}\longthmtitle{Decoupling constraints for closed-loop
    data-based representation}\label{LM_decouple}
  Define matrices $L\in\mathbb{R}^{n\times n}$,
  $C_1 \in \mathbb{R}^{n\times n(n+m)}$ and $C_2\in\mathbb{R}^{m\times
    n(n+m)}$ as
  \begin{align*}
    L[i,j]
    &=
      \begin{cases}
        n-1 & \text{if } i=j
        \\
        -1 & \text{otherwise} ,
      \end{cases}
    \\
    C_1
    &=
      \begin{bmatrix}
        I_{n} & \bm{0}_{n \times n(n+m)-n}
      \end{bmatrix},
    \\
    C_2
    &=
      \begin{bmatrix}
        \bm{0}_{m \times n}
        & I_{m}
        &
          \bm{0}_{m \times (n-1)(n+m)}
      \end{bmatrix},
  \end{align*}
  and let
  $\mathcal{L}=L\otimes I_{n+m} \in \mathbb{R}^{n(n+m)\times n(n+m)}
  $. Then, equation~\eqref{eq_QMK} can be equivalently written as
  \begin{subequations}\label{eq_matrix_constraints}
    \begin{align}%{2}
      \mathcal{L}Q^{\top}M
      &=
        \bm{0}_{n(n+m)\times n} ,
      &
        \mathcal{L}Q^{\top}N
      & =\bm{0}_{n(n+m)\times n} ,
        \label{eq_LQMLQN} 
      \\
      C_1Q^{\top}M
      &=I_n,
      &
        C_1Q^{\top}N
      &
        =\bm{0}_{n \times n} ,
        \label{eq_LQMLQN2}
      \\
      C_2 Q^{\top}M
      &=K_1,
      &
        C_2 Q^{\top}N & = K_2.\label{eq_KQM}
    \end{align}
  \end{subequations}
\end{lemma}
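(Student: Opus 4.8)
The plan is to exploit the block-diagonal structure of $Q^{\top}=\diag\{Q_1^{\top},\dots,Q_n^{\top}\}$ together with the algebraic structure of $L$ to recast the $n$ coupled systems in~\eqref{eq_QMK} as: (a) a consistency requirement that all $n$ diagonal blocks agree, (b) a normalization of the common block's first $n$ rows, and (c) a definition of $K_1,K_2$ as the common block's last $m$ rows. Concretely, since $M=\col\{M_1,\dots,M_n\}$ and $N=\col\{N_1,\dots,N_n\}$, one has $Q^{\top}M=\col\{Q_1^{\top}M_1,\dots,Q_n^{\top}M_n\}$ and $Q^{\top}N=\col\{Q_1^{\top}N_1,\dots,Q_n^{\top}N_n\}$, so that~\eqref{eq_QMK} is nothing but~\eqref{eq_QMIK}, namely $Q_i^{\top}M_i=\col\{I_n,K_1\}$ and $Q_i^{\top}N_i=\col\{\bm{0}_{n\times n},K_2\}$ for every $i\in\until{n}$ (as already observed in the proof of Lemma~\ref{LM_CLSYS}). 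I emphasize that the claimed equivalence is between the solution sets of tuples $(M,N,K_1,K_2)$: in~\eqref{eq_matrix_constraints} the constraints~\eqref{eq_LQMLQN}--\eqref{eq_LQMLQN2} involve only $M,N$, and~\eqref{eq_KQM} then reads off $K_1,K_2$, which is precisely the decoupling that makes the statement useful.

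The first technical step is the linear-algebraic fact that, for any $X=\col\{X_1,\dots,X_n\}$ with $X_i\in\mathbb{R}^{(n+m)\times n}$, one has $\mathcal{L}X=\bm{0}$ if and only if $X_1=\dots=X_n$. This follows because $L=nI_n-\bm{1}_n\bm{1}_n^{\top}$, so $\ker L=\image(\bm{1}_n)$, and the $i$th block of $\mathcal{L}X=(L\otimes I_{n+m})X$ equals $\sum_j L[i,j]X_j=nX_i-\sum_j X_j$, which vanishes for all $i$ exactly when all the $X_j$ coincide (equivalently, $\ker\mathcal{L}=\{\bm{1}_n\otimes V : V\in\mathbb{R}^{(n+m)\times n}\}$). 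Applying this with $X=Q^{\top}M$ shows $\mathcal{L}Q^{\top}M=\bm{0}$ is equivalent to $Q_1^{\top}M_1=\dots=Q_n^{\top}M_n=:P_M$, and similarly $\mathcal{L}Q^{\top}N=\bm{0}$ is equivalent to $Q_1^{\top}N_1=\dots=Q_n^{\top}N_n=:P_N$. The second step is to note that $C_1$ and $C_2$ are row selectors acting only on the first block: since $Q_1^{\top}M_1$ occupies rows $1,\dots,n+m$ of $Q^{\top}M$, we get $C_1Q^{\top}M$ equal to the first $n$ rows of $Q_1^{\top}M_1$ and $C_2Q^{\top}M$ equal to rows $n{+}1,\dots,n{+}m$ of $Q_1^{\top}M_1$ (the last $m$ rows), and analogously for $N$; here one just checks that $C_2$ has $n+m+(n-1)(n+m)=n(n+m)$ columns with the identity block sitting in columns $n{+}1,\dots,n{+}m$.

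With these two observations the equivalence is immediate in both directions. Assuming~\eqref{eq_QMK}: each block of $Q^{\top}M$ equals $\col\{I_n,K_1\}$, so $Q^{\top}M=\bm{1}_n\otimes\col\{I_n,K_1\}\in\ker\mathcal{L}$, giving~\eqref{eq_LQMLQN}; extracting rows from the first block yields $C_1Q^{\top}M=I_n$ and $C_2Q^{\top}M=K_1$, i.e.,~\eqref{eq_LQMLQN2} and~\eqref{eq_KQM}; the argument for $N$ is identical with $\col\{\bm{0}_{n\times n},K_2\}$ in place of $\col\{I_n,K_1\}$. Conversely, assuming~\eqref{eq_matrix_constraints}: \eqref{eq_LQMLQN} forces the common blocks $P_M,P_N$ to exist, \eqref{eq_LQMLQN2} fixes the first $n$ rows of $P_M$ to $I_n$ and of $P_N$ to $\bm{0}_{n\times n}$, and \eqref{eq_KQM} identifies the last $m$ rows of $P_M$ with $K_1$ and of $P_N$ with $K_2$; hence $P_M=\col\{I_n,K_1\}$ and $P_N=\col\{\bm{0}_{n\times n},K_2\}$, so $Q_i^{\top}M_i=\col\{I_n,K_1\}$ and $Q_i^{\top}N_i=\col\{\bm{0}_{n\times n},K_2\}$ for all $i$, which is exactly~\eqref{eq_QMK}.

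I do not expect a genuine obstacle: the entire content is the characterization of $\ker\mathcal{L}$ via $L=nI_n-\bm{1}_n\bm{1}_n^{\top}$ plus careful block-index bookkeeping. The only points requiring vigilance are (i) keeping track of which rows of $Q^{\top}M$ and $Q^{\top}N$ are picked out by $C_1$ and $C_2$, given the $(n+m)$-sized blocks, and (ii) remembering that $\mathcal{L}$ acts only on the row side of the $n(n+m)\times n$ matrices, so that the $n$ columns of $M$ and $N$ are simply carried along throughout.
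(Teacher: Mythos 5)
Your proposal is correct and follows essentially the same route as the paper's proof: both rest on the observation that $\ker(L)=\image(\bm{1}_n)$, hence $\mathcal{L}Q^{\top}M=\bm{0}$ forces all diagonal blocks $Q_i^{\top}M_i$ to coincide, after which $C_1$ and $C_2$ select the first $n$ and last $m$ rows of the common block to recover \eqref{eq_QMIK} and hence \eqref{eq_QMK}. You merely spell out the two directions of the equivalence and the identification $L=nI_n-\bm{1}_n\bm{1}_n^{\top}$ more explicitly than the paper does.
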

\begin{proof}
  From the definition of $L$, one has $\ker (L)= \image
  (\bm{1}_n)$.
  %
  % \marginJC{Shouldn't the diagonal entries of $L$ be $n-1$ instead
  % of $n$ for this to be true? Also, by $\image$ you mean linear
  % span?  We did not define it in the notation}
  % \marginXW{Sorry for the mistake here. Yes, the diagonal entries
  %   should be $n-1$. I should have defined $\image$. An image is the
  %   same as a range (I don't know which one is more commonly used),
  %   which is the linear span of the columns of a matrix. I was using
  %   $\image$ because I thought it was a counterpart concept of $\ker$,
  %   but probably it is not. By definition, one has
  %   $\image(I_{n+m})= \mathbb{R}^{n+m}$. Then
  %   $\ker(\mathcal{L}) = \bm{1}_n \otimes\mathbb{R}^{n+m}=
  %   \bm{1}_n\otimes\image(I_{n+m})= \image(\bm{1}_n\otimes I_{n+m})$
  %   The two representations mean the same thing.  I actually think
  %   your version is more concise. But for
  %   $\ker (L)= \image(\bm{1}_n)$, probably it is better than
  %   $\ker (L)= \bm{1}_n \cdot \mathbb{R}$? }
  %
  Thus, $\ker(\mathcal{L}) = \bm{1}_n\otimes\image(I_{n+m})$.
  % 
  % \marginJC{I don't think this is true (for instance, the dims do not
  % match). I think you mean to say
  % $\ker(\mathcal{L}) = \bm{1}_n \otimes
  % \mathbb{R}^{n+m}$, no? That way
  % $L \otimes I_{n+m} (\bm{1}_n \otimes z) = L \bm{1}_n \otimes z = 0 $ }
  % 
  From \eqref{eq_LQMLQN}, for $i,j\in\until{n}$, it holds
  \begin{align}\label{eq_QMQM}
    Q_i^{\top}M_i = Q_j^{\top}M_j,\quad Q_i^{\top}N_i = Q_j^{\top}N_j .
  \end{align}
  Furthermore, from \eqref{eq_LQMLQN2} and \eqref{eq_KQM}, one has 
  \begin{align}\label{eq_QMIK2}
    Q_1^{\top}M_1=
    \begin{bmatrix}
      I_n\\K_1
    \end{bmatrix} ,
    \quad Q_1^{\top}N_1=
    \begin{bmatrix}
      \bm{0}_{n\times n}\\K_2
    \end{bmatrix}.
  \end{align}
  Thus, equations \eqref{eq_QMQM} and \eqref{eq_QMIK2} are equivalent
  to \eqref{eq_QMIK} for $i\in\until{n}$, which corresponds
  to~\eqref{eq_QMK}. Equation~\eqref{eq_KQM} readily follows by left
  multiplying~\eqref{eq_QMK} by $C_2$.
\end{proof}

The advantage of the constraint
formulation~\eqref{eq_matrix_constraints} over~\eqref{eq_QMK} is that,
in the former, $K_1, K_2$ can be expressed as functions of~$M,
N$, respectively.  In fact, as we vary $M,
N$ among all possible solutions
of~\eqref{eq_LQMLQN}-\eqref{eq_LQMLQN2}, $K_1,
K_2$ in~\eqref{eq_KQM} take every possible value in
$\mathbb{R}^{m\times
  n}$.  This means that we can use~\eqref{eq_DDMZMtr}
and~\eqref{eq_LQMLQN}-\eqref{eq_LQMLQN2} to design the closed-loop
behavior of the system, with $M,
N$ as the only variables. Then, to implement the desired closed-loop
system, one can compute $K_1,
K_2$ using \eqref{eq_KQM} and apply it to the controller
\eqref{eq_theproblem} of the closed-loop system.  In the
equations~\eqref{eq_DDMZMtr} and~\eqref{eq_LQMLQN}-\eqref{eq_LQMLQN2},
note that $\mathcal{L}$, $C_1$, $C_2$ are constant matrices; and
$Z$ and $Q$ are matrices constructed from the data samples.

\subsection{LMI-based design of feedback gain
  matrices}\label{Subsec_feedback_LMI}
Based on the closed-loop data-based representations obtained in
Section~\ref{Sec_CLDR}, we now introduce a data-driven approach to
design the gain matrices for~\eqref{eq_theproblem}.  Note that the
system~\eqref{eq_dstmodel} is nonlinear due to the presence of the
threshold function.  Our strategy is to view it as a switched
system and design a common controller that applies to all modes.

We start by defining an error term
$\bm{\epsilon}(t)=\bm{x}(t)-\bm{r}$.  By subtracting $\bm{r}$ on both
sides of \eqref{eq_DDMZMtr}, one has
\begin{align}\label{eq_errmodel}
  \bm{\epsilon}(t+1)
  &= \alpha \bm{\epsilon}(t) + \left[Z(M \bm{x} (t)+N
    \bm{r}) \right]_0^{s}-(1-\alpha)\bm{r}. 
\end{align}
To ensure $\bm{x}(t)=\bm{r}$, i.e., $\bm{\epsilon}(t)=0$, is the
equilibrium of the system, one needs
$Z(M \bm{r} +N \bm{r}) =(1-\alpha)\bm{r}$. Since the reference signal
might be anything satisfying $0\le\bm{r}\le\frac{s}{1-\alpha}$, this
necessitates
\begin{align}\label{eq_ZMNC} 
  Z(M+N)&=(1-\alpha)I_n .
\end{align}
Based on \eqref{eq_ZMNC}, we rewrite \eqref{eq_errmodel} as
\begin{align}\label{eq_dstmodels0} 
  \bm{\epsilon}(t+1)
  &=\alpha \bm{\epsilon}(t) + \left[ZM
    \bm{\epsilon}(t)+(1-\alpha)\bm{r}\right]_0^{s} -
    (1-\alpha)\bm{r}\nonumber
  \\
  &=\alpha \bm{\epsilon}(t) + \overline{R}(\bm{\epsilon}(t),\bm{r})ZM
    \bm{\epsilon}(t) ,
\end{align}
where $\overline{R}(\bm{\epsilon},\bm{r})\in\mathbb{R}^{n\times n}$ is
a diagonal matrix with each entry
$\overline{R}(\bm{\epsilon},\bm{r})[i,i]$ defined as
\begin{align*}
  \begin{cases}
    \frac{(\left[ZM \bm{\epsilon}+(1-\alpha)\bm{r}\right]_0^{s} -
    (1-\alpha)\bm{r})[i]}{(ZM \bm{\epsilon})[i]}
    &\text{if {\small$(ZM
      \bm{\epsilon}+(1-\alpha)\bm{r})[i]>s$}}
    \\
    &~~\text{or {\small$(ZM \bm{\epsilon}+(1-\alpha)\bm{r})[i]<0$}} ,
    \\ 
    1 &\text{otherwise} .
  \end{cases}
\end{align*}
This represents equation~\eqref{eq_dstmodels0} as a switched system,
where the matrix $\overline{R}(\bm{\epsilon},\bm{r})$ depends on the
state.  This dependency makes challenging the stability analysis. To
address this, we perform an overapproximation by constructing a convex
hull that encompasses all possible values of
$\overline{R}(\bm{\epsilon},\bm{r})$.

By definition, one can derive that, for $i\in\until{n}$,
\begin{align}\label{eq_rangeR}
    0< \overline{R}(\bm{\epsilon},\bm{r})[i,i]\le 1.
\end{align}
Here $\overline{R}(\bm{\epsilon},\bm{r})[i,i]\neq0$ because the
equality holds only if
$(\left[ZM \bm{\epsilon}+(1-\alpha)\bm{r}\right]_0^{s} -
(1-\alpha)\bm{r})[i]=0$, which means $(ZM \bm{\epsilon})[i]=0$,
contradicting  the condition that
$(ZM\bm{\epsilon}+(1-\alpha)\bm{r})[i]>s$ or
$(ZM \bm{\epsilon}+(1-\alpha)\bm{r})[i]<0$.

To continue, define diagonal matrices $R_j \in \real^{n \times n}$,
$j\in\until{2^n}$, covering all possibilities such that
$R_j[i,i]\in\{0,1\}$ (note that this means that one of these matrices
is $\bm{0}_{n\times n}$).
%
% \marginJC{One of these is the zero matrix, right?}
% \marginXW{Yes, one of these is a zero matrix, and it is 
% needed. 
% For example if $\bar{R}=\begin{bmatrix}
%     0.2& 0\\ 0&0.3 
% \end{bmatrix}$. Then
% $\bar{R}=0.2 R_1+ 0.1 R_3 +0.7R_4$,
% where $R_1=I$, $R_3=\begin{bmatrix}
%     0& 0\\ 0&1 
% \end{bmatrix}$, $R_4=0$.
% We can only guarantee there exists a set of 
% coefficients such that $\mu_1>0$,
% but we cannot guarantee there exists a set of coefficients such that 
% $\mu_4=0$ (for the zero matrix $R_j$),
% even though the all diagonal entries of $\bar{R}$ are positive.
% }
% \marginXW{I have changed the index from $i$ to $j\in\until{2^n}$
% throughout the paper. When using $i$, it has a limit of $n$,
% when using $j$, it has a limit of $2_n$. I hope this makes the paper
% easier to follow.
% }
%
We let $R_1=I_n$. Since
$\overline{R}(\bm{\epsilon},\bm{r})$ is a diagonal matrix satisfying
\eqref{eq_rangeR}, regardless of the values of $\bm{\epsilon},\bm{r}$,
it can always be represented as a convex combination:
\begin{align}\label{eq_Rhull}
    \overline{R}(\bm{\epsilon},\bm{r})= \sum_{j=1}^{2^n}\mu_j
  R_j,~~\text{with}~~\sum_{j=1}^{2^n}\mu_j =1 
    ~~\text{and}~~\mu_1>0.
\end{align}
Here, $\mu_1>0$ because $R_1=I_n$ and all diagonal entries of
$\overline{R}(\bm{\epsilon},\bm{r})$ are strictly positive.

Based on equation \eqref{eq_Rhull}, the following result provides a way 
to synthesize the controller.

% To represent all possible models, we find it convenient to define
% $R_i \in \real^{n \times n}$,
% $i\in\until{2^n}$ %as the vertices of the unit hypercube $[0,1]^{n}$, such that $R_i[j,j]\in\{0,1\}$,
%
% \marginJC{I'm confused, $R_i$ is a matrix, what does $R_i[j]$ mean?}
% \marginXW{Changed to $R_i[j,j]$}
%
%$j\in\until{n}$.

%
%\marginJC{How can a matrix be a vertex for a hypercube? Is there some
% identification here going on that we have not made explicit?}
% \marginXW{Removed. I agree it is confusing.}
%
%
%\marginJC{What is the connection between $R_i$'s and
%  $\overline{R}(\bm{\epsilon},\bm{r})$?}
% \marginXW{Added.}
%
% Define
% $$\mathcal{R}=\text{Cov}(R_i) \quad\text{for}\quad
% i\in\until{2^n}.$$
% Based on $R_i$, we have the following result.

% \marginXW{I changed the condition in the theorem to a
% slightly relaxed version, by levering the property
% that $\overline{R}(\bm{\epsilon},\bm{r})$ has
% strictly positive diagonal entries.}
\begin{theorem}\longthmtitle{Data-driven
    synthesis via LMIs}\label{Thm_stable}
  Let the matrices $\overline{P}\in\mathbb{R}^{n\times n}$ and
  $S_1,S_2\in\mathbb{R}^{nT_d\times n}$ satisfy
  \begin{subequations}\label{eq_LMIs}
    \begin{align}
    \begin{bmatrix}\label{eq_LMIstable0}
        \overline{P}
        &
          ~~\left(\alpha \overline{P} + 
          ZS_1\right)^{\top} 
        \\
        \alpha \overline{P} + ZS_1 & \overline{P}
      \end{bmatrix}
      &\succ 0
        ,
      \\
      \begin{bmatrix}
        \overline{P}
        &
          ~~\left(\alpha \overline{P} + R_j
          ZS_1\right)^{\top} \label{eq_LMIstable}
        \\
        \alpha \overline{P} + R_j ZS_1 & \overline{P}
      \end{bmatrix}
      & \succeq 0 ,
      \\    
      \mathcal{L}Q^{\top}S_1= \bm{0}_{n(n+m)\times n}, 
      \quad
      ~~~\mathcal{L}Q^{\top}S_2
        &=\bm{0}_{n(n+m)\times n}
          , \label{eq_LCstable1} 
      \\
      C_1Q^{\top}S_1=\overline{P}, ~~~~~~~~~~~ \quad
      ~C_1Q^{\top}S_2
        &= \bm{0}_{n \times n}, \label{eq_LCstable2}
      \\
      Z(S_1+S_2)=(1-\alpha)\overline{P} , \label{eq_LCstable3} 
    \end{align}
  \end{subequations}
  for $R_j$, $j\in\{2,3,\dots,2^n\}$.  Then, the
  controller~\eqref{eq_theproblem},
  % $\bm{u}(t)=K_1\bm{x}(t)+K_2\bm{r}$,
  with
  \begin{align}\label{eq_KQS}
    \begin{aligned}
      K_1=C_2 Q^{\top}S_1 \overline{P}^{-1},\quad  K_2=C_2 Q^{\top}S_2
      \overline{P}^{-1},
    \end{aligned}
  \end{align}
  ensures that $\bm{r}$ is asymptotically stable for 
  the closed-loop system.
  %drives the state of system \eqref{eq_dstmodel} to converge
  %asymptotically to $\bm{r}$.
  %
  %\marginJC{Wording is a bit odd: ``drive state .. to converge
  %  asymptotically''. Isn't it better to say that $\bm{r}$ is
  %  asymptotically stable for closed-loop system?}
%  \marginXW{Addressed.}
  %
\end{theorem}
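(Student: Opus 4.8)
The plan is to use a common quadratic Lyapunov function for the switched-system representation~\eqref{eq_dstmodels0} and to show that the LMIs~\eqref{eq_LMIs} are exactly a linearizing (congruence-transformed) reformulation of the corresponding Lyapunov inequalities, with $\overline{P}$ playing the role of the inverse Lyapunov matrix and $S_1, S_2$ absorbing the products $M\overline{P}$, $N\overline{P}$. First I would set $M = S_1 \overline{P}^{-1}$, $N = S_2 \overline{P}^{-1}$ (note $\overline{P}\succ 0$ follows from~\eqref{eq_LMIstable0} by taking a Schur complement, so the inverse exists), and observe that \eqref{eq_LCstable1}--\eqref{eq_LCstable2} are precisely the constraints~\eqref{eq_LQMLQN}--\eqref{eq_LQMLQN2} of Lemma~\ref{LM_decouple} post-multiplied by $\overline{P}^{-1}$, while \eqref{eq_LCstable3} is~\eqref{eq_ZMNC} post-multiplied by $\overline{P}^{-1}$. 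Hence $M, N$ satisfy \eqref{eq_QMK} and~\eqref{eq_ZMNC}, so by Lemma~\ref{LM_CLSYS} the closed-loop system admits the representation~\eqref{eq_DDMZMtr}, and by the derivation leading to~\eqref{eq_dstmodels0} the error dynamics are $\bm{\epsilon}(t+1) = (\alpha I_n + \overline{R}(\bm{\epsilon}(t),\bm{r})ZM)\bm{\epsilon}(t)$, with $K_1 = C_2 Q^\top M$, $K_2 = C_2 Q^\top N$ recovering~\eqref{eq_KQS} via~\eqref{eq_KQM}.

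Next I would handle the Lyapunov step. Propose $V(\bm{\epsilon}) = \bm{\epsilon}^\top \overline{P}^{-1}\bm{\epsilon}$. For a fixed mode $R_j$, the closed-loop matrix is $A_j := \alpha I_n + R_j Z M = \alpha I_n + R_j Z S_1 \overline{P}^{-1}$, and stability along that mode requires $A_j^\top \overline{P}^{-1} A_j - \overline{P}^{-1} \prec 0$ (or $\preceq 0$). Applying the Schur complement to $\begin{bmatrix}\overline{P}^{-1} & A_j^\top \overline{P}^{-1}\\ \overline{P}^{-1}A_j & \overline{P}^{-1}\end{bmatrix}\succ 0$ and then performing a congruence transformation by $\diag\{\overline{P},\overline{P}\}$ turns this into $\begin{bmatrix}\overline{P} & (\alpha \overline{P} + R_j Z S_1)^\top\\ \alpha\overline{P}+R_jZS_1 & \overline{P}\end{bmatrix}\succ 0$, which is exactly~\eqref{eq_LMIstable}/\eqref{eq_LMIstable0} (strict for $R_1 = I_n$, non-strict for $j\ge 2$). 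Then I would invoke the convex-combination representation~\eqref{eq_Rhull}: since $\overline{R}(\bm{\epsilon},\bm{r}) = \sum_j \mu_j R_j$ with $\sum_j\mu_j = 1$ and $\mu_1 > 0$, the closed-loop matrix at state $\bm{\epsilon}$ is $A(\bm{\epsilon}) = \sum_j \mu_j A_j$, and by convexity of the map $A\mapsto A^\top \overline{P}^{-1}A$ (a standard fact, since the relevant block matrix is affine in $A$ and positive semidefiniteness is preserved under convex combinations) together with $\mu_1 > 0$ giving a strict contribution from mode~$1$, we get $A(\bm{\epsilon})^\top \overline{P}^{-1} A(\bm{\epsilon}) - \overline{P}^{-1} \prec 0$ uniformly in $\bm{\epsilon},\bm{r}$.

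Finally I would close the argument: the uniform strict decrease $V(\bm{\epsilon}(t+1)) - V(\bm{\epsilon}(t)) \le -c\|\bm{\epsilon}(t)\|^2$ for some $c>0$ (obtained from compactness, since all admissible $\overline{R}$ lie in a compact polytope, so the Lyapunov decrement is bounded away from zero) establishes global asymptotic stability of $\bm{\epsilon} = 0$, i.e.\ of $\bm{x} = \bm{r}$, for the closed-loop system, which is the claim. I expect the main obstacle to be the rigorous treatment of the convexity/uniformity step: one must argue carefully that the non-strict LMIs for $j\ge 2$ combined with a single strict LMI for $j=1$ and the guarantee $\mu_1 > 0$ yield a \emph{strict} decrease that is moreover uniform over the (bounded) set of reachable $\overline{R}(\bm{\epsilon},\bm{r})$ — the convex-combination bound gives $A(\bm{\epsilon})^\top\overline{P}^{-1}A(\bm{\epsilon}) - \overline{P}^{-1} \preceq \sum_j \mu_j\big(A_j^\top\overline{P}^{-1}A_j - \overline{P}^{-1}\big) \preceq \mu_1\big(A_1^\top\overline{P}^{-1}A_1 - \overline{P}^{-1}\big) \prec 0$, but one should also note that $\mu_1$ itself may approach $0$ as the state varies, so the cleanest route is to observe that $\overline{R}$ ranges over a compact set not containing any point that makes the decrement vanish, hence the supremum of the decrement over that set is negative. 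A secondary point to verify is that the state constraint $0 \le \bm{x}(t) \le \frac{s}{1-\alpha}$ is preserved along the closed-loop trajectory so that the data-based representation of Lemma~\ref{LM_CLSYS} remains valid at every step, and that the convex-hull parametrization~\eqref{eq_Rhull} indeed covers every $\overline{R}(\bm{\epsilon},\bm{r})$ that can arise.
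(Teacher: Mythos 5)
Your proposal is correct and follows essentially the same route as the paper's proof: a common quadratic Lyapunov function $V(\bm{\epsilon})=\bm{\epsilon}^{\top}\overline{P}^{-1}\bm{\epsilon}$ for the switched representation~\eqref{eq_dstmodels0}, the convex-hull overapproximation~\eqref{eq_Rhull} with the strict LMI at $R_1=I_n$ and non-strict LMIs at the remaining vertices, a Schur complement plus congruence transformation to linearize via $S_1=M\overline{P}$, $S_2=N\overline{P}$, and the identification of \eqref{eq_LCstable1}--\eqref{eq_LCstable3} with the constraints of Lemma~\ref{LM_decouple} and~\eqref{eq_ZMNC}. Your closing remarks on the uniformity of the Lyapunov decrement and on invariance of the state constraints are additional care the paper's proof does not take (note only that the set of admissible $\overline{R}$ is not closed, so the cleanest uniformity argument is via continuity of the closed-loop map on compact sublevel sets of $V$ rather than compactness of the $\overline{R}$-set), but this does not change the substance of the argument.
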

\begin{proof}
  To prove the statement, we construct a quadratic Lyapunov function
  of the form $V(\bm{\epsilon}) = \bm{\epsilon}^{\top}P\bm{\epsilon}$,
  with $P\succ0$. The convergence result holds if for any
  $\bm{\epsilon}(t)\neq 0$, the function satisfies
  \begin{align*}
    V(\bm{\epsilon}(t+1))<V(\bm{\epsilon}(t)).
  \end{align*}
  From \eqref{eq_dstmodels0}, this requires
  \begin{align}
    \bm{\epsilon}
    &\!^{\top}\!\!\left(\left(\alpha I\! +\!
      \overline{R}(\bm{\epsilon},\bm{r}) ZM\right)\!^{\top}\!P\!\left(\alpha I
      \!+\! \overline{R}(\bm{\epsilon},\bm{r}) ZM\right)\right)\bm{\epsilon}
      <\bm{\epsilon}P\bm{\epsilon} ,
      \label{eq_Lypieq}
  \end{align}
  for all $\bm{\epsilon}\neq 0$. 
  To make sure the above equation
  holds, a typical and sufficient approach in switched 
  systems~\cite{HL-PJA:09} is to guarantee the following matrix 
  inequality holds 
  \begin{align}\label{eq_lmi0}
    \left(\alpha I + \overline{R}(\bm{\epsilon},\bm{r})
    ZM\right)^{\top}P\left(\alpha I + \overline{R}(\bm{\epsilon},\bm{r})
    ZM\right)- P \prec0,
  \end{align}
  for all possible values of $\overline{R}(\bm{\epsilon},\bm{r})$.
  This effectively bypasses the dependency of
  $\overline{R}(\bm{\epsilon},\bm{r})$ on~$\bm{\epsilon}$.  Using the
  Schur complement~\cite{CKL-RM:04},
  % 
  % \marginJC{Reference here for Schur complement.}  \marginXW{Added.}
  % 
  \eqref{eq_lmi0} is equivalent to
  \begin{align}\label{eq_lmi1}
    \begin{bmatrix}
      {P} & \left(\alpha I + \overline{R}(\bm{\epsilon},\bm{r})
            ZM\right)^{\top}
      \\
      \alpha I + \overline{R}(\bm{\epsilon},\bm{r}) ZM & {P}^{-1}
    \end{bmatrix}\succ 0.
  \end{align}
  Based on the convex combination in~\eqref{eq_Rhull}, it is sufficient to 
  consider the following condition
  \begin{subequations}\label{eq_lmir}
    \begin{align}\label{eq_lmira}
      \begin{bmatrix}
        {P} & \left(\alpha I + 
              ZM\right)^{\top}
        \\
        \alpha I + ZM & {P}^{-1}
      \end{bmatrix}\succ 0,
      \\
      \begin{bmatrix}
        {P} & \left(\alpha I + R_j
              ZM\right)^{\top}
        \\
        \alpha I + R_jZM & {P}^{-1}
      \end{bmatrix}\succeq 0.
    \end{align}
  \end{subequations}
  for a common $P\succ0$, for all $j\in\{2,3,\dots,2^n\}$. Here
  \eqref{eq_lmira} is obtained by letting $R_1=I_n$. Since in
  \eqref{eq_Rhull}, $\mu_1>0$, the convex combination of
  \eqref{eq_lmir} guarantees \eqref{eq_lmi1}.
  
  Now, let $\overline{P}=P^{-1}$.  Since $\overline{P}\succ 0$,
  without losing generality, we introduce two new matrices
  $S_1\triangleq M\overline{P}$ and $S_2\triangleq N\overline{P}$.
  Pre- and post-multiplying equation \eqref{eq_lmir} by
  $\begin{bmatrix}
     \overline{P}
     &
       0
     \\
     0
     &
       I
   \end{bmatrix}$
   yields conditions \eqref{eq_LMIstable0}-\eqref{eq_LMIstable}, 
   respectively.
   
   % 
   % \marginJC{This is worded a bit confusingly: ``for all
   % $\overline{R}(\bm{\epsilon},\bm{r})$'' does not striclty speaking
   % make sense, b/c we have defined
   % $\overline{R}(\bm{\epsilon},\bm{r})$, and it's only 1 matrix, not
   % many matrices. I think what you want to explain (see earlier
   % margin where I ask you to expand on relationship between this
   % matrix and the $R_i$'s) is that the structure of this matrix
   % function will change, as a function of the state, and therefore
   % one can interpret it as a switched system. So here we do an
   % overapproximation, and instead consider arbitrary matrices
   % $R$. You can probably write it better than what I say here, but
   % this gives the basic idea, no? }
   % 
   % \marginXW{Addressed!}
   % 
   % \marginJC{As written, (29) is exactly the same as (28), and it does
   % not convey what you really want to say.}
   % 
   % Further recall that \eqref{eq_dstmodels0}
   To continue, recall that matrices $M,N$ need to satisfy
   constraints~\eqref{eq_LQMLQN}-\eqref{eq_LQMLQN2} and
   \eqref{eq_ZMNC}. Since $\overline{P}\succ 0$, $S_1= M\overline{P}$
   and $S_2=N\overline{P}$, these constraints can be equivalently
   written as
   \begin{align*}
     \mathcal{L}Q^{\top}M\overline{P}
     &=\mathcal{L}Q^{\top}S_1 =
       \bm{0}_{n(n+m)\times n},
     \\
     \mathcal{L}Q^{\top}N \overline{P}
     & = \mathcal{L}Q^{\top}S_2
      = \bm{0}_{n(n+m)\times n}  ,
    \\
    C_1Q^{\top}M\overline{P}
    &=C_1Q^{\top}S_1=\overline{P},
    \\
    C_1Q^{\top}N\overline{P} &=C_1Q^{\top}S_2=\bm{0}_{n \times n},
    \\
    Z(M+N)\overline{P} &=Z(S_1+S_2)=(1-\alpha)\overline{P} ,
  \end{align*}
  which correspond to~\eqref{eq_LCstable1}-\eqref{eq_LCstable3}.
  % 
  % \marginJC{This needs some adjusting: the connection between
  % $\overline{R}(\bm{\epsilon},\bm{r})$ and the $R_i$'s should be
  % much earlier (see margin before the th statement), (29) should be
  % written for any $R$,...}  \marginXW{Addressed.}
  % 
\end{proof}

Note that Theorem~\ref{Thm_stable} only provides a sufficient
condition for stabilizing the system~\eqref{eq_dstmodel}
to~$\bm{r}$. The conservativeness stems from the facts that we search
for a \emph{quadratic} Lyapunov function and that, in the derivation
from~\eqref{eq_Lypieq} to~\eqref{eq_lmir}, we ignore the dependency of
$\overline{R}(\bm{\epsilon},\bm{r})$ on~$\bm{\epsilon}$. This means
that, even if \eqref{eq_lmir} does not hold, \eqref{eq_Lypieq} may
still be true.

\begin{remark}\longthmtitle{Computational complexity of solving LMIs}
  LMIs with linear constraints can be efficiently solved using
  existing algorithms, cf.~\cite{MG-SB:14-cvx}.  The computational
  complexity of solving an equation of the form~\eqref{eq_LMIstable}
  is polynomial in $T_d$, which is the number of data samples and
  determines the sizes of matrices $Z$ and $Q$.  However, due to the
  combinatorial nature in the definition of the vertices $R_j$, the
  number of equations in~\eqref{eq_LMIstable} that need to be solved
  is of order $2^n$,
  % %
  % \marginJC{of order $2^n$, no? I think it's actually $2^n-1$, no?}
  % %
  % \marginXW{Addressed}
  which grows exponentially with the dimension of the system. We
  address this issue later in Section~\ref{Sec_ReducedC} by providing
  a sufficient condition with reduced computational
  complexity. \oprocend
\end{remark}

It is known~\cite{RCD-RHB:11} that feed-forward loops are non-robust
to system disturbances, meaning that slight changes in system
parameters may lead to a large degradation in tracking performance.
To address this issue, in the next section, we take advantage of the
classical idea of integral feedback and design an augmented feedback
controller with error integration for better robustness against
disturbance.

\section{Data-Driven Control with Error
  Integration}\label{Sec_INTControl}%
In this section, we design an augmented feedback controller with error
integration in the following form
\begin{subequations}\label{eq_intctrl}
  \begin{align}\label{eq_int_u}
    &\bm{u}(t)= K_1(\bm{x}(t)-\bm{r})+ K_2\bm{\xi}(t)
    \\
    \label{eq_int_xi}
    &\bm{\xi}(t+1)= \bm{\xi}(t)+ (\bm{x}(t)-\bm{r})
  \end{align}
\end{subequations} to solve Problem \ref{Prob}, where
$K_1\in\mathbb{R}^{m\times n}$ and $K_2\in\mathbb{R}^{m\times n}$ are
controller gains, and $\bm{\xi}(t)\in\mathbb{R}^n$ is an integrator
that accumulates the system error.  Compared with the linear feedback
controller \eqref{eq_theproblem}, the integral controller offers
better robustness against state disturbances.
% However, alongside this advantage, we also note a limitation of this
% controller as follows.

\begin{remark}\longthmtitle{Limitations of the controller with error
    integration}
  The augmented feedback controller with error integration in the form
  of \eqref{eq_intctrl} is only applicable to a reference value which
  excludes the lower and upper bounds of system states,
  $0< \bm{r}< \frac{s}{1-\alpha}$.
  % at $0$ and $\frac{s}{1-\alpha}$.
  This limitation, which happens to all integral
  controls when applied to dynamics with threshold saturation, is
  caused by the following reason. The equilibrium of dynamics
  \eqref{eq_dstmodel} under the controller \eqref{eq_intctrl} requires
  the convergence of $\bm{\xi}(t)$ to a specific value
  $\bm{\xi}^{\star}$, which depends on the given $\bm{r}$ and the
  choice of $K_2$. If any entry of $\bm{r}$ equals $0$ or
  $\frac{s}{1-\alpha}$, then, due to the constraints
  $0 \leq \bm{x}(t) \leq \frac{s}{1-\alpha}$, the corresponding entry
  in $(\bm{x}(t) - \bm{r})$ is either always non-negative or always
  non-positive. Consequently, the monotonic dynamics of $\bm{\xi}(t)$
  cannot guarantee convergence to the desired $\bm{\xi}^{\star}$. This
  issue does not occur when $0 < \bm{r} < \frac{s}{1-\alpha}$. It is
  worth noting that, in engineering practice, it is rare to require
  the system states to reach their exact saturation bounds. \oprocend
\end{remark}

% {\color{purple} }

\subsection{Closed-loop data-based
  representation}\label{Sec_CLDR_INT}
Based on the results in Section~\ref{SEC_KI}, we first derive a
closed-loop data-based representation for system \eqref{eq_dstmodel}
under the controller~\eqref{eq_intctrl}.

%\begin{lemma}\longthmtitle{Closed-loop data-based
%		representation for integral feedback}\label{LM_CLSYS1_INT} 
%	Consider the feedback controller $\bm{u}(t)=K_1\bm{x}(t)+ K_2\bm{r}$. Let
%	Assumption \ref{Ass_rankQ} hold.  Then the closed-loop form of
%	system \eqref{eq_dstmodel} with controller \eqref{eq_intctrl} has
%	the data-based representation,	
%\end{lemma}

%Similar to the result in Sec. \ref{Sec_CLDR}, $\Gamma(\bm{x}(t),\bm{r},\bm{\xi})$ can be explicitly constructed by the following method.

\begin{lemma}\longthmtitle{Closed-loop data-based representation for
    augmented feedback controller with error
    integration}\label{LM_CLSYS_INT}
  Consider a controller in the form of \eqref{eq_intctrl}.  Given data
  matrices $Z$ and $Q$ in~\eqref{eq:new-data-matrices}, let
  Assumption~\ref{Ass_rankQ} hold. Then, the
  system~\eqref{eq_dstmodel} has the data-based representation,
  \begin{align}\label{eq_DDMZMtr_INT}
    \bm{x}(t+1) = \alpha \bm{x}(t) + \left[Z(M \bm{x}(t)+N \bm{r}+U
    \bm{\xi}(t))\right]_0^{s} ,
  \end{align}
  where $M, N, U\in\mathbb{R}^{nT_d\times n}$ satisfy
  \begin{subequations}\label{eq_QMK_INT}
    \begin{align}\label{eq_QMK_INTa}
      Q^{\top}M=
      &~\bm{1}_n\otimes
        \begin{bmatrix}
          I_n\\K_1
        \end{bmatrix},
        \quad Q^{\top}U=\bm{1}_n\otimes
        \begin{bmatrix}
          \bm{0}_{n\times n}
          \\
          K_2
        \end{bmatrix} ,
      \\
      \label{eq_QMK_INTb}
      & Q^{\top}N=\bm{1}_n\otimes
        \begin{bmatrix}
          \bm{0}_{n\times n}
          \\
          -K_1
        \end{bmatrix} .
    \end{align}
  \end{subequations}
\end{lemma}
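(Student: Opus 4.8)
The strategy is to adapt the proof of Lemma~\ref{LM_CLSYS}, treating the integral controller~\eqref{eq_intctrl} as a static feedback over an augmented input. First I would substitute the controller expression $\bm{u}(t)=K_1(\bm{x}(t)-\bm{r})+K_2\bm{\xi}(t)$ into the open-loop data-based representation~\eqref{eq_DDMG} of Lemma~\ref{LM_OLSYS}. The key observation is that the state-input pair at time $t$ can be written as
\begin{align*}
  \bm{p}(t)=\col\{\bm{x}(t),\bm{u}(t)\}
  =\begin{bmatrix} I_n\\K_1\end{bmatrix}\bm{x}(t)
  +\begin{bmatrix}\bm{0}_n\\-K_1\end{bmatrix}\bm{r}
  +\begin{bmatrix}\bm{0}_n\\K_2\end{bmatrix}\bm{\xi}(t) .
\end{align*}
Plugging this into the defining relation~\eqref{eq_consts} for $F$ and using the block-diagonal Kronecker structure of $I_n\otimes\bm{p}^\top$ gives the analogue of~\eqref{eq_conststr}, now with three affine terms in $\bm{x}$, $\bm{r}$, $\bm{\xi}$.

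Next I would repeat the construction that led to Lemma~\ref{LM_CLSYS}: apply the permutation $T_F$ of~\eqref{eq_defbPd} to pass to $\overline{\mathcal{P}}_d$, exploit that $\overline{E}_d=\diag\{\overline{\mathtt{E}}_1,\dots,\overline{\mathtt{E}}_n\}$ is diagonal, and set $Q=(I-\overline{E}_d)\overline{\mathcal{P}}_d=T_F(I-E_d)\mathcal{P}_d$. Since $(I-E_d)\mathcal{P}_d$ has full column rank under Assumption~\ref{Ass_rankQ}, $Q^\top$ has full row rank, so by the Rouché--Capelli Theorem there exist (generally non-unique) matrices $M,N,U\in\mathbb{R}^{nT_d\times n}$ solving~\eqref{eq_QMK_INT} for the given $K_1,K_2$. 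The diagonal structure of $Q$ decouples~\eqref{eq_QMK_INT} into $n$ identical blocks $Q_i^\top M_i=\col\{I_n,K_1\}$, $Q_i^\top N_i=\col\{\bm{0},-K_1\}$, $Q_i^\top U_i=\col\{\bm{0},K_2\}$, exactly mirroring~\eqref{eq_QMIK}. Defining $\overline{G}_i(\bm{x},\bm{r},\bm{\xi})=(M_i\bm{x}+N_i\bm{r}+U_i\bm{\xi})^\top$ and $\overline{G}=\diag\{\overline{G}_1,\dots,\overline{G}_n\}$, a direct computation of $\overline{G}_i(\bm{x},\bm{r},\bm{\xi})Q_i$ recovers the three affine terms, establishing the augmented version of~\eqref{eq_conststrT}. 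Setting $G(\bm{x},\bm{r},\bm{\xi})=\overline{G}(\bm{x},\bm{r},\bm{\xi})T_F$ then satisfies the augmented form of~\eqref{eq_consts}, so Lemma~\ref{LM_OLSYS} yields $\bm{z}(t)=[G(\bm{x}(t),\bm{r},\bm{\xi}(t))(I-E_d)\mathcal{Z}_d]_0^s$.

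Finally I would verify, by the same chain of identities as at the end of the proof of Lemma~\ref{LM_CLSYS}, that
\begin{align*}
  [Z(M\bm{x}+N\bm{r}+U\bm{\xi})]_0^s
  =[\overline{G}(\bm{x},\bm{r},\bm{\xi})(I-\overline{E}_d)T_F\mathcal{Z}_d]_0^s
  =[G(\bm{x},\bm{r},\bm{\xi})(I-E_d)\mathcal{Z}_d]_0^s ,
\end{align*}
which together with $\bm{x}(t+1)=\alpha\bm{x}(t)+\bm{z}(t)$ gives~\eqref{eq_DDMZMtr_INT}. This uses $Z_i=\overline{z}_i^\top(I-\overline{\mathtt{E}}_i)$ and $(I-\overline{\mathtt{E}}_i)$ idempotent, exactly as before.

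The proof is essentially a bookkeeping extension of Lemma~\ref{LM_CLSYS}, so there is no genuinely new obstacle; the only point requiring care is getting the signs and zero-block placements right in the decomposition of $\bm{p}(t)$ — in particular that the $\bm{r}$-dependence enters through $-K_1$ (because $\bm{u}$ contains $K_1(\bm{x}-\bm{r})$, not $K_1\bm{x}+K_2\bm{r}$ as in Section~\ref{Sec_LControl}), which is why the right-hand side of~\eqref{eq_QMK_INTb} carries $-K_1$ rather than $K_2$. Everything else transfers verbatim once the augmented "input map" is identified.
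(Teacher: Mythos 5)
Your proposal is correct and follows essentially the same route as the paper: the paper likewise defines the closed-loop map by applying $F$ to the decomposition $\bm{p}(t)=\col\{I_n,K_1\}\bm{x}(t)+\col\{\bm{0}_n,-K_1\}\bm{r}+\col\{\bm{0}_n,K_2\}\bm{\xi}(t)$ and then invokes the construction of Lemma~\ref{LM_CLSYS} to associate $M$, $N$, $U$ with $\bm{x}$, $\bm{r}$, $\bm{\xi}$, omitting the block-diagonal bookkeeping you spell out. Your identification of the $-K_1$ coefficient on the $\bm{r}$-term as the only substantive difference from Lemma~\ref{LM_CLSYS} matches the paper exactly.
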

\smallskip
\begin{proof}
  Based on Lemma~\ref{LM_OLSYS}, define a new function
  $\Gamma: \mathbb{R}^{n}\times \mathbb{R}^{n}\times
  \mathbb{R}^{n}\to\mathbb{R}^{n\times nT_d}$ satisfying:
  \begin{align*}
    \Gamma(\bm{x},\bm{r},\bm{\xi}) =
    F\left(
    \begin{bmatrix} 
      I_n\\K_1
    \end{bmatrix}
    \bm{x}
    +
    \begin{bmatrix} 
      \bm{0}_n\\-K_1
    \end{bmatrix}\bm{r}
    +\begin{bmatrix} 
       \bm{0}_n\\K_2
     \end{bmatrix}\bm{\xi}\right).
  \end{align*}
  Then equation~\eqref{eq_consts} yields:
  \begin{align}\label{eq_conststr_INT}
    \begin{aligned}
      &\Gamma(\bm{x},\bm{r},\bm{\xi})\cdot(I-E_d)\mathcal{P}_d
      \\
      =&
         I_{n}\!\otimes\!\left(\bm{x}^{\top}
         \begin{bmatrix}
           I_n & \!\!\!K_1^{\top}
         \end{bmatrix}
         \!+\!\bm{r}^{\top}
         \begin{bmatrix}
           \bm{0}_{n\times n} & \!\!\!-K_1^{\top}
         \end{bmatrix}
         \!+\!\bm{\xi}^{\top}
         \begin{bmatrix}
           \bm{0}_{n\times n} & \!\!\!\!K_2^{\top}
         \end{bmatrix}\right),
    \end{aligned}	
  \end{align}
  and the closed-loop data-based representation~\eqref{eq_DDMG} can be
  written as
  \begin{align}\label{eq_DDMGtr_INT}
    \bm{x}(t+1)=\alpha
    \bm{x}(t) +
    \left[\Gamma(\bm{x}(t),\bm{r},\bm{\xi})\cdot(I-E_d)
    \mathcal{Z}_d\right]_0^{s}.     
  \end{align}
  Based on \eqref{eq_conststr_INT}-\eqref{eq_DDMGtr_INT}, and
  similarly to the proof of Lemma~\ref{LM_CLSYS}, one can associate
  matrices $M$, $N$, $U$ with terms $\bm{x},\bm{r},\bm{\xi}$,
  respectively, and obtain equations
  \eqref{eq_DDMZMtr_INT}-\eqref{eq_QMK_INT} as counterparts of
  equations \eqref{eq_DDMZMtr}-\eqref{eq_QMK}. We omit the details for
  brevity.
  %{\color{red} The proof follows from the results of Lemmas \ref{LM_CLSYS1} and
  %  \ref{LM_CLSYS}},
  %
  % \marginJC{This wording is confusing: i don't think it can follow
  %   from those lemmas, as those results are valid for the other type
  %   of controller. Maybe you mean that the argument is similar? But I
  %   think it really follows from Lemma 3.1, no?}
  %   \marginXW{Rewritten this part and provided a bit more details.}
  %
  %by defining a new function
  %$\Gamma: \mathbb{R}^{n}\times \mathbb{R}^{n}\times
  %\mathbb{R}^{n}\to\mathbb{R}^{n\times nT_d}$ for
  %\begin{align}\label{eq_DDMGtr_INT}
  %  \bm{x}(t+1)=\alpha
  %  \bm{x}(t) +
  %  \left[\Gamma(\bm{x}(t),\bm{r},\bm{\xi})\cdot(I-E_d)
  %  \mathcal{Z}_d\right]_0^{s}     
  %\end{align}
  %where
  %\begin{align}\label{eq_conststr_INT}
  %  \begin{aligned}
  %    &\Gamma(\bm{x},\bm{r},\bm{\xi})\cdot(I-E_d)\mathcal{P}_d
  %    \\
  %    =&
  %        I_{n}\!\otimes\!\left(\bm{x}^{\top}
  %        \begin{bmatrix}
  %          I_n & \!\!\!K_1^{\top}
  %        \end{bmatrix}
  %        \!+\!\bm{r}^{\top}
  %        \begin{bmatrix}
  %          \bm{0}_{n\times n} & \!\!\!-K_1^{\top}
  %        \end{bmatrix}
  %        \!+\!\bm{\xi}^{\top}
  %        \begin{bmatrix}
  %          \bm{0}_{n\times n} & \!\!\!\!K_2^{\top}
  %        \end{bmatrix}\right).
  %  \end{aligned}	
  %\end{align}
  %One then associates matrices $M,N,U$ with terms
  %$\bm{x},\bm{r},\bm{\xi}$, respectively, and obtain equations
  %\eqref{eq_DDMZMtr_INT}-\eqref{eq_QMK_INT} as counterparts of
  %equations \eqref{eq_DDMZMtr}-\eqref{eq_QMK}. We omit the details for
  %brevity.
\end{proof}

In equation~\eqref{eq_QMK_INT}, the coupled constraints introduced by
the Kronecker product can be equivalently reformulated as follows.

%{\color{purple}Comparing the statements in Lemmas~\ref{LM_CLSYS1} and~\ref{LM_CLSYS}, we see that finding $\Gamma: \mathbb{R}^{n}\times \mathbb{R}^{n}\to\mathbb{R}^{n\times nT_d}$ satisfying~\eqref{eq_conststr} can be accomplished by finding the matrix $M, N\in\mathbb{R}^{nT_d\times n}$ satisfying~\eqref{eq_QMK}. Regarding the solution of this latter equation, we note that when $K_1, K_2$ are given, $M, N$ can be readily computed. In fact, equations~\eqref{eq_QMIK} prescribe exactly how to find $M_i, N_i$ for each $i\in\until{n}$.  However, since our ultimate goal is to design controller gain matrices $K_1, K_2$ themselves, the solution in $M, N$ and $K_1, K_2$ of equation~\eqref{eq_QMK} poses the challenge of jointly solving for $n$ coupled systems of linear equations.  The following result provides a reformulation of the equation showing that $K_1, K_2$ can be expressed as a function of~$M$.}

\begin{lemma}\longthmtitle{Decoupling constraints for closed-loop
    data-based representation}\label{LM_Decouple_INT}
  Given $L\in\mathbb{R}^{n\times n}$,
  $C_1 \in \mathbb{R}^{n\times n(n+m)}$ and
  $C_2\in\mathbb{R}^{m\times n(n+m)}$ as in Lemma~\ref{LM_decouple},
  the constraint~\eqref{eq_QMK_INTa} can be equivalently written as
  \begin{subequations}\label{eq_matrix_constraints_INT}
    \begin{align}
      \mathcal{L}Q^{\top}M
      &=\bm{0}_{n(n+m)\times n}
        ,\quad\mathcal{L}Q^{\top}U=\bm{0}_{n(n+m)\times
        n}  \label{eq_LQMLQN_INT} 
      \\
      C_1Q^{\top}M
      &=I_n,\quad\quad\quad\quad~~ C_1Q^{\top}U=\bm{0}_{n\times n}
        \label{eq_LQMLQN2_INT}
      \\
      C_2 Q^{\top}M
      &=K_1,\quad\quad\quad\quad~ C_2 Q^{\top}U=K_2\label{eq_KQM_INT}
    \end{align}
  \end{subequations}
\end{lemma}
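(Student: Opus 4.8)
The plan is to mirror the proof of Lemma~\ref{LM_decouple} essentially verbatim, since replacing the feed-forward gain structure by the integrator structure changes only the names and values of the extracted sub-blocks, not the mechanism. First I would recall the two facts established there: $\ker(L) = \image(\bm{1}_n)$, and consequently $\ker(\mathcal{L}) = \ker(L\otimes I_{n+m}) = \bm{1}_n\otimes\image(I_{n+m})$. Concretely, writing a candidate matrix in block-stacked form $\col\{W_1,\dots,W_n\}$ with $W_i\in\mathbb{R}^{(n+m)\times n}$, membership in $\ker(\mathcal{L})$ is equivalent to $W_1 = \cdots = W_n$.

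Next I would exploit the block-diagonal structure $Q = \diag\{Q_1,\dots,Q_n\}$: writing $M = \col\{M_1,\dots,M_n\}$ and $U = \col\{U_1,\dots,U_n\}$ with $M_i, U_i\in\mathbb{R}^{T_d\times n}$, one has $Q^{\top}M = \col\{Q_1^{\top}M_1,\dots,Q_n^{\top}M_n\}$ and similarly for $U$. By the kernel characterization, the two identities in~\eqref{eq_LQMLQN_INT} say exactly that $Q_i^{\top}M_i = Q_j^{\top}M_j$ and $Q_i^{\top}U_i = Q_j^{\top}U_j$ for all $i,j\in\until{n}$. Then I would note that $C_1$ extracts the top $n\times n$ sub-block of $Q_1^{\top}M_1$ (resp.\ of $Q_1^{\top}U_1$), so~\eqref{eq_LQMLQN2_INT} forces these to equal $I_n$ and $\bm{0}_{n\times n}$; and $C_2$ extracts the bottom $m\times n$ rows, so~\eqref{eq_KQM_INT} defines $K_1$ and $K_2$ as these blocks. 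Assembling, $Q_1^{\top}M_1 = \col\{I_n, K_1\}$ and $Q_1^{\top}U_1 = \col\{\bm{0}_{n\times n}, K_2\}$, and since all blocks coincide this is exactly~\eqref{eq_QMK_INTa}. For the converse, \eqref{eq_QMK_INTa} immediately yields the block equalities, hence~\eqref{eq_LQMLQN_INT}, while left-multiplying~\eqref{eq_QMK_INTa} by $C_1$ and by $C_2$ produces~\eqref{eq_LQMLQN2_INT} and~\eqref{eq_KQM_INT}, respectively. So each step in the chain is reversible and the two constraint sets are equivalent.

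I do not expect a genuine obstacle here; the argument is the same Kronecker-kernel bookkeeping as in Lemma~\ref{LM_decouple}, and the proof would simply cite that lemma's proof for the kernel computation and state the block extractions. The only points requiring a little care are keeping the block-stacking/sub-block conventions for $C_1, C_2$ consistent with Lemma~\ref{LM_decouple}, and being explicit that this lemma only decouples~\eqref{eq_QMK_INTa} — the remaining constraint~\eqref{eq_QMK_INTb} on $N$ ties $N$ to $K_1 = C_2 Q^{\top}M$ separately and is handled where it is used.
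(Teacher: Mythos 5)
Your proposal is correct and takes essentially the same approach as the paper, which simply observes that \eqref{eq_QMK_INTa} shares the structure of \eqref{eq_QMK} and invokes Lemma~\ref{LM_decouple}; your write-up just spells out the kernel and block-extraction bookkeeping from that earlier proof explicitly.
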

\begin{proof}
  Given the shared structure of \eqref{eq_QMK_INTa} and
  \eqref{eq_QMK}, the derivation of the equations
  \eqref{eq_matrix_constraints_INT} follows directly from
  Lemma~\ref{LM_decouple}.
  % Now, from the diagonal structure of $Q$, \eqref{eq_QMK_INTa} can be equivalently rewritten as
  %	\begin{align}\label{eq_QMIK_INT}
  %   Q_i^{\top}M_i=\begin{bmatrix}
  %     I_n\\K_1
  %			\end{bmatrix} ,~~
%		Q_i^{\top}U_i=\begin{bmatrix}
%			\bm{0}_n\\K_2
%		\end{bmatrix}
%	\end{align}
%	for $i\in\until{n}$, where $M=\col\{M_1,\cdots,M_n\}$, $U=\col\{U_1,\cdots,U_n\}$ with
%	$M_i, U_i\in\mathbb{R}^{T_d\times n}$. 
%From the definition of $L$, one has $\ker (L)= \image
%(\bm{1}_n)$. Thus, $\ker(\mathcal{L})=\image (\bm{1}_n\otimes
%I_{n+m})$. From \eqref{eq_LQMLQN_INT}, for $i,j\in\until{n}$, there
%holds
%\begin{align}\label{eq_QMQM_INT}
%	Q_i^{\top}M_i = Q_j^{\top}M_j,~~ Q_i^{\top}U_i = Q_j^{\top}U_j .
%\end{align}
%Furthermore, from (\ref{eq_matrix_constraints_INT}b-c), one has 
%\begin{align}\label{eq_QMIK2_INT}
%	Q_1^{\top}M_1=\begin{bmatrix}
%		I_n\\K_1
%	\end{bmatrix} ,~~ Q_1^{\top}U_1=\begin{bmatrix}
%	\bm{0}_n\\K_2
%\end{bmatrix}.
%\end{align}
%Thus, equations \eqref{eq_QMQM_INT} and \eqref{eq_QMIK2_INT} are equivalent
%to \eqref{eq_QMIK_INT} for $i\in\until{n}$, which corresponds
%to~\eqref{eq_QMK_INTa}. Equation~(\ref{eq_matrix_constraints_INT}c) readily follows by left
%multiplying~\eqref{eq_QMK_INT} by $C_2$. 
\end{proof}

Note that Lemma~\ref{LM_Decouple_INT} only characterizes the
constraint \eqref{eq_QMK_INTa} and ignores \eqref{eq_QMK_INTb}. This
is because $Q^{\top}$ has full row rank.  Then, based on the
Rouch\'{e}-Capelli Theorem~\cite{RAH-CRJ:12}, for any $K_1$ in
\eqref{eq_QMK_INTb}, there always exists $N$ such that the equation
holds. Furthermore, since $K_1$ can be determined from $M$, the
constraints~\eqref{eq_QMK_INT} can be simplified to only considering
\eqref{eq_QMK_INTa}. The control matrices $K_1$ and $K_2$ for
\eqref{eq_intctrl} can therefore be designed by determining matrices
$M,U$ satisfying~\eqref{eq_matrix_constraints_INT}.

\subsection{LMI-based design of integral feedback gain matrices}
We follow an approach similar to that of
Section~\ref{Subsec_feedback_LMI} to propose a data-driven approach to
design the feedback gain matrices in~\eqref{eq_intctrl} for solving
Problem \ref{Prob}.  We start by letting
$\bm{\epsilon}(t)=\bm{x}(t)-\bm{r}$. Based on \eqref{eq_intctrl} and
\eqref{eq_DDMZMtr_INT}, one has
\begin{align*}
  \bm{\epsilon}(t+1)
  &=\alpha \bm{\epsilon}(t) + \left[Z(M \bm{x}(t)+N
    \bm{r}+U \bm{\xi}(t))\right]_0^{s}
    -(1-\alpha)\bm{r} ,
  \\ 
  \bm{\xi}(t+1)&=\bm{\xi}(t)+\bm{\epsilon}(t).
\end{align*}
For now, we make the assumption that $ZU$ is non-singular (we show
later in Proposition \ref{Prop_nonsingular} that this property
actually holds). Based on this assumption, for any $\bm{r}$, there
must exist $\bm{\xi}^{\star}$ such that
\begin{align}\label{eq_existxi} 
  Z((M+N)\bm{r}+U\bm{\xi}^{\star})&=(1-\alpha)\bm{r} .
\end{align}
Using this $\bm{\xi}^{\star}$, let
$\bm{e}(t)=\bm{\xi}(t)-\bm{\xi}^{\star}$. Then we have
\begin{align}\label{eq_dstmodels_INT} 
  \bm{\epsilon}(t+1)
  % &=\alpha \bm{\epsilon}(t) + [Z(M \bm{x}(t)+N \bm{r}+U \bm{\xi}(t))+(1-\alpha)\\
  % &~~~~-Z((M+N)\bm{r}+U\bm{\xi}^{\star})\bm{r}]_0^{s} -(1-\alpha)\bm{r}\\
	&=\alpha \bm{\epsilon}(t) + \left[ZM \bm{\epsilon}(t)+ZU
          \bm{e}(t)+(1-\alpha)\bm{r}\right]_0^{s} \nonumber
  \\
    &~~~~ - (1-\alpha)\bm{r}  \nonumber
  \\
    &=\alpha \bm{\epsilon}(t) +
      \overline{R}(\bm{\epsilon},\bm{e},\bm{r})(ZM \bm{\epsilon}(t)+ZU
      \bm{e}(t)) , \nonumber
  \\
  \bm{e}(t+1)&=\bm{e}(t)+\bm{\epsilon}(t) ,
\end{align}
where
$\overline{R}(\bm{\epsilon},\bm{e},\bm{r})\in\mathbb{R}^{n\times n}$
is a diagonal matrix with each entry
$\overline{R}(\bm{\epsilon},\bm{e},\bm{r})[i,i]$ defined as
\begin{align*}
  \begin{cases}
    \text{\small$\frac{(\left[ZM \bm{\epsilon}+ZU
    \bm{e}+(1-\alpha)\bm{r}\right]_0^{s} - (1-\alpha)\bm{r})[i]}{(ZM
    \bm{\epsilon}+ZU \bm{e})[i]}\mkern-80mu$}
    &
    \\
    &
      \text{if {\small$(ZM \bm{\epsilon}+ZU
      \bm{e}+(1-\alpha)\bm{r})[i]>s$}}
    \\
    &\text{or
      {\small$(ZM
      \bm{\epsilon}+ZU
      \bm{e}+(1-\alpha)\bm{r})[i]<0$}}
    \\ 
    1 &\text{otherwise}
  \end{cases}
\end{align*}
Note that $\overline{R}(\bm{\epsilon},\bm{e},\bm{r})[i,i]$ is always
well defined for any $\bm{\epsilon},\bm{e}$, and
$0< \bm{r}< \frac{s}{1-\alpha}$, because the first two conditions hold
only if $(ZM \bm{\epsilon}+ZU \bm{e})[i]\neq 0$.  It is also worth
noting that, in \eqref{eq_dstmodels_INT}, the matrix $N$ does not
appear. This aligns with the fact that the
equations~\eqref{eq_matrix_constraints_INT} do not involve $N$, and
allow us to design $K_1, K_2$ by only considering $M, U$.

To continue, similar to the analysis in
Section~\ref{Subsec_feedback_LMI}, we have
$0< \overline{R}(\bm{\epsilon},\bm{e},\bm{r})[i,i]\le 1$, for all
$i\in\until{n}$.  We employ the same matrices
$R_j \in \real^{n \times n}$, $j\in\until{2^n}$, to create a convex
combination:
\begin{align}\label{eq_Rhull_INT}
    \overline{R}(\bm{\epsilon},\bm{e},\bm{r})= \sum_{j=1}^{2^n}\mu_j
  R_j,~~\text{with}~~\sum_{j=1}^{2^n}\mu_j =1 
    ~~\text{and}~~\mu_1>0.
\end{align}
The following result provides a way to synthesize the controller
leveraging \eqref{eq_Rhull_INT}.
% Define
% $$\mathcal{R}=\text{Cov}(R_i) \quad\text{for}\quad
% i\in\until{2^n}.$$

\begin{theorem}\longthmtitle{Data-driven synthesis for augmented
    feedback controller with error integration via
    LMIs}\label{Thm_stable_INT}
  Given $\overline P \in \mathbb{R}^{2n\times2n}$, consider the block
  decomposition
  \begin{align*}
    \overline P=
    \begin{bmatrix}
      \overline P_{11} & \overline P_{12}
      \\
      \overline P_{12}^{\top} & \overline P_{22}
    \end{bmatrix}
  \end{align*}
  with
  $\overline{P}_{11},\overline{P}_{12},\overline{P}_{22}\in\mathbb{R}^{n\times
    n}$. Let the matrices $\overline P$ and
  $S_1,S_2\in\mathbb{R}^{nT_d\times n}$ satisfy
  \begin{subequations}\label{eq_LMIs_INT}
    \begin{align}\label{eq_lmistable_INT}
      \begin{bmatrix}
        \overline P & *
        \\
        \begin{pmatrix}
          \alpha \overline P_{11} \! +\!  ZS_1
          &  \alpha \overline
            P_{12} \! +\!  ZS_2
          \\
          \overline P_{11} \!+\! \overline P_{12}^{\top}
          & \overline
            P_{12}
            \!+\!
            \overline
            P_{22} 
        \end{pmatrix} & \overline P
      \end{bmatrix}&\succ 0,
      \\  
      \begin{bmatrix}\label{eq_lmistable2_INT}
        \overline P & *\\
        \!\begin{pmatrix}
            \alpha \overline P_{11} \!+\! R_jZS_1
            &  \!\!\alpha
              \overline P_{12}
              \!+\! R_jZS_2
            \\
            \overline P_{11} \!+\! \overline P_{12}^{\top}
            & \overline
              P_{12}
              \!+\!
              \overline
              P_{22} 
          \end{pmatrix} &\!\!\! \overline P
      \end{bmatrix}&\succeq 0,
      \\
      \mathcal{L}Q^{\top}S_1=\bm{0}_{n(n+m)\times
      n},~~~~\mathcal{L}Q^{\top}S_2=\bm{0}_{n(n+m)\times
      n}& \label{eq_LCstable1_INT} 
      \\
      C_1Q^{\top}S_1=\overline P_{11},~~~~~C_1Q^{\top}S_2=\overline
      P_{12},~~~& \label{eq_LCstable2_INT} 
    \end{align}
  \end{subequations}
  for $R_j$, $j\in\{2,3,\dots,2^n\}$, where $*$ represents the
  symmetric part of the matrix.  Then the controller~\eqref{eq_intctrl}, with
  \begin{align}\label{eq_KQS_INT}
    \begin{bmatrix}
      K_1&K_2
    \end{bmatrix}=C_2 Q^{\top}
    \begin{bmatrix}
      S_1&S_2
    \end{bmatrix}\overline P^{-1},
  \end{align}
  ensures that $(\bm{r},\bm{\xi}^{\star})$ is asymptotically 
  stable for the closed-loop system.
\end{theorem}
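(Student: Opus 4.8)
The plan is to follow the template of the proof of Theorem~\ref{Thm_stable}, now working with the augmented error state $\bm{\zeta}(t)=\col\{\bm{\epsilon}(t),\bm{e}(t)\}$, where $\bm{\epsilon}=\bm{x}-\bm{r}$ and $\bm{e}=\bm{\xi}-\bm{\xi}^{\star}$. First, invoking Proposition~\ref{Prop_nonsingular} (which gives that $ZU$ is nonsingular), $\bm{\xi}^{\star}$ satisfying~\eqref{eq_existxi} exists, so the equilibrium $(\bm{r},\bm{\xi}^{\star})$, i.e., $\bm{\zeta}=0$, is well defined. Using~\eqref{eq_dstmodels_INT}, the augmented state evolves as $\bm{\zeta}(t+1)=A_{\overline{R}}\bm{\zeta}(t)$, with
\begin{align*}
  A_R=
  \begin{bmatrix}
    \alpha I+RZM & RZU
    \\
    I & I
  \end{bmatrix} ,
\end{align*}
which is affine in $R$, and $\overline{R}=\overline{R}(\bm{\epsilon},\bm{e},\bm{r})$ ranging over the convex hull of $R_1,\dots,R_{2^n}$, with $R_1=I_n$ and, by~\eqref{eq_Rhull_INT}, the coefficient of $R_1$ strictly positive along every trajectory.

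Next, I would seek a common quadratic Lyapunov function $V(\bm{\zeta})=\bm{\zeta}^{\top}P\bm{\zeta}$, $P\succ0$. Exactly as in the passage from~\eqref{eq_Lypieq} to~\eqref{eq_lmir}, the decrease condition $A_{\overline{R}}^{\top}PA_{\overline{R}}-P\prec0$ becomes, after a Schur complement, an LMI that is affine in~$R$; since $\overline{R}$ is a convex combination of the vertices $R_j$ with $R_1=I$ carrying strictly positive weight, it suffices to impose this LMI strictly at $R_1$ and non-strictly at $R_j$, $j\ge2$. Writing $P=\overline{P}^{-1}$ and applying the congruence by $\diag\{\overline{P},I_{2n}\}$ turns these into the condition that $\bigl[\begin{smallmatrix}\overline{P} & (A_{R_j}\overline{P})^{\top}\\ A_{R_j}\overline{P} & \overline{P}\end{smallmatrix}\bigr]$ be positive (semi)definite. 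The main computation is the substitution $[S_1 \; S_2]=[M \; U]\,\overline{P}$: using the $2\times2$ block partition of $\overline{P}$, the block $A_{R_j}\overline{P}$ expands exactly into the inner block matrix of~\eqref{eq_lmistable_INT}--\eqref{eq_lmistable2_INT}.

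It then remains to handle the admissibility of $M,N,U$. By Lemma~\ref{LM_Decouple_INT} these must satisfy~\eqref{eq_matrix_constraints_INT}, with $N$ subsequently free by the full row rank of $Q^{\top}$. Left-multiplying $[S_1 \; S_2]=[M \; U]\,\overline{P}$ by $\mathcal{L}Q^{\top}$, $C_1Q^{\top}$ and $C_2Q^{\top}$ and using~\eqref{eq_LQMLQN_INT}--\eqref{eq_KQM_INT} yields~\eqref{eq_LCstable1_INT}--\eqref{eq_LCstable2_INT} together with $C_2Q^{\top}[S_1 \; S_2]=[K_1 \; K_2]\,\overline{P}$, which is~\eqref{eq_KQS_INT}. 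Conversely, from a feasible triple $(\overline{P},S_1,S_2)$ one recovers $[M \; U]=[S_1 \; S_2]\,\overline{P}^{-1}$ and picks a compatible $N$; the same left-multiplications show this $M,U$ obeys~\eqref{eq_matrix_constraints_INT}, so by Lemma~\ref{LM_CLSYS_INT} equation~\eqref{eq_dstmodels_INT} is indeed the closed-loop dynamics under~\eqref{eq_intctrl} with gains~\eqref{eq_KQS_INT}. Undoing the congruence and the convex-combination steps then gives $A_{\overline{R}}^{\top}PA_{\overline{R}}-P\prec0$ for every admissible $\overline{R}$, hence $V(\bm{\zeta}(t+1))<V(\bm{\zeta}(t))$ whenever $\bm{\zeta}(t)\neq0$, and asymptotic stability of $\bm{\zeta}=0$, i.e., of $(\bm{r},\bm{\xi}^{\star})$, follows as in Theorem~\ref{Thm_stable}.

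I expect the main obstacle to be the bookkeeping of the $2\times2$ block structure of $\overline{P}$: one must check that the single substitution $[S_1 \; S_2]=[M \; U]\,\overline{P}$ simultaneously reproduces the LMIs~\eqref{eq_lmistable_INT}--\eqref{eq_lmistable2_INT} and the linear constraints~\eqref{eq_LCstable1_INT}--\eqref{eq_LCstable2_INT}, and that every feasible $(\overline{P},S_1,S_2)$ yields $M,U$ consistent with the decoupling constraints of Lemma~\ref{LM_Decouple_INT}, so that restricting attention to $(\overline{P},S_1,S_2)$ loses no generality. A secondary point is the reliance on Proposition~\ref{Prop_nonsingular} for the well-posedness of $\bm{\xi}^{\star}$.
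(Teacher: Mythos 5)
Your proposal is correct and follows essentially the same route as the paper's proof: augmented error state, common quadratic Lyapunov function, Schur complement, vertex relaxation of the convex hull with strict inequality only at $R_1=I_n$ (justified by $\mu_1>0$), the change of variables $\overline{P}=P^{-1}$ and $[S_1\;\;S_2]=[M\;\;U]\,\overline{P}$, and congruence by $\diag\{\overline{P},I\}$ to obtain the stated LMIs and linear constraints. The only cosmetic difference is that you spell out the converse recovery of $M,U$ from a feasible $(\overline{P},S_1,S_2)$, which the paper leaves implicit.
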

% %
% \marginJC{Shouldn't we say `` ensures that $(\bm{r},\bm{\xi}^{\star})$
%   is asymptotically stable for the closed-loop system.'' ?}
%   \marginXW{Addressed.}
% %
\begin{proof}
  To prove the statement, we construct a quadratic function of the
  form
  \begin{align}
    V(\bm{\epsilon},\bm{e})
    =
    \begin{bmatrix}
      \bm{\epsilon}\\
      \bm{e}
    \end{bmatrix}^{\top} P
    \begin{bmatrix}
      \bm{\epsilon}\\
      \bm{e}
    \end{bmatrix}
  \end{align}
  with $P\succ0$. The convergence result holds if, for any
  $\col\{\bm{\epsilon}(t), \bm{e}(t)\}\neq 0$, the function satisfies
  \begin{align*}
    V(\bm{\epsilon}(t+1),\bm{e}(t+1))<V(\bm{\epsilon}(t),\bm{e}(t)).
  \end{align*}
  From \eqref{eq_dstmodels_INT}, this requires 
  \begin{align}%\nonumber	
    &\begin{bmatrix}
       \bm{\epsilon}\\
       \bm{e}
     \end{bmatrix}^{\top}
      \Phi(\bm{\epsilon},\bm{e},\bm{r})^{\top}
      P\Phi(\bm{\epsilon},\bm{e},\bm{r})
      \begin{bmatrix}
        \bm{\epsilon}\\
        \bm{e}
      \end{bmatrix}
      <\begin{bmatrix}
         \bm{\epsilon}\\
         \bm{e}
       \end{bmatrix}^{\top} P\begin{bmatrix}
                               \bm{\epsilon}\\
                               \bm{e}
                             \end{bmatrix}
      \label{eq_Lypieq_INT}
  \end{align}
  when either $\bm{\epsilon}\neq 0$ or $\bm{e}\neq 0$, where
  \begin{align}\label{eq_defphi}%\nonumber	
    \Phi(\bm{\epsilon},\bm{e},\bm{r})
    =
    \begin{bmatrix}
      \alpha I + \overline{R}(\bm{\epsilon},\bm{e},\bm{r})ZM
      &
        \overline{R}(\bm{\epsilon},\bm{e},\bm{r})ZU 
      \\
      I & I
    \end{bmatrix}.
  \end{align}
  To make sure \eqref{eq_Lypieq_INT} holds for the switched system, it
  is sufficient to guarantee the following matrix inequality holds
  \begin{align}\label{eq_lmi0_INT}
    \Phi(\bm{\epsilon},\bm{e},\bm{r})^{\top}
    P\Phi(\bm{\epsilon},\bm{e},\bm{r}) \prec  P
  \end{align}
  for all possible values of
  $\overline{R}(\bm{\epsilon},\bm{e},\bm{r})$.  Using the Schur
  complement~\cite{CKL-RM:04}, \eqref{eq_lmi0_INT} is equivalent to
  \begin{align}\label{eq_lmi1_INT}
    \begin{bmatrix}
      {P} &*\\
      \begin{pmatrix}
        \alpha I + \overline{R}(\bm{\epsilon},\bm{e},\bm{r})ZM
        &
          \overline{R}(\bm{\epsilon},\bm{e},\bm{r})ZU
        \\
        I & I
      \end{pmatrix} & {P}^{-1}
    \end{bmatrix}\succ 0.
  \end{align}
  Leveraging \eqref{eq_Rhull_INT}, it is sufficient to 
  consider instead
  \begin{subequations}\label{eq_lmir_INT}
    \begin{align}\label{eq_lmira_INT}
      \begin{bmatrix}
        {P} &*\\
        \begin{pmatrix}
          \alpha I + ZM &
                          ZU
          \\
      I & I
    \end{pmatrix} & {P}^{-1}
      \end{bmatrix}\succ 0,
      \\
    \begin{bmatrix}
       {P} &*\\
       \begin{pmatrix}
      \alpha I + R_jZM &
       R_jZU
      \\
      I & I
    \end{pmatrix} & {P}^{-1}
     \end{bmatrix}\succeq 0.
  \end{align}
  \end{subequations}
  for a common $P\succ0$, for all $j\in\{2,3,\dots,2^n\}$. Here
  \eqref{eq_lmira_INT} is obtained by letting $R_1=I_n$. Since in
  \eqref{eq_Rhull_INT}, $\mu_1>0$, the convex combination of
  \eqref{eq_lmir_INT} guarantees \eqref{eq_lmi1_INT}.

  Now, let
  $$\overline P=
  \begin{bmatrix}
    \overline P_{11} & \overline P_{12}\\
    \overline P_{12}^{\top} & \overline P_{22}
  \end{bmatrix}={P}^{-1}.
  $$ 
  Since $\overline{P}\succ 0$, without losing generality, we introduce
  a pair of matrices 
  $\begin{bmatrix} S_1 & S_2
   \end{bmatrix} \triangleq
   \begin{bmatrix}
     M& U
   \end{bmatrix}\overline{P}
   =
   \begin{bmatrix}
     M\overline P_{11}+U\overline P_{12}^{\top}
     & M\overline
       P_{12}+U\overline
       P_{22} 
   \end{bmatrix}$. 
   Pre- and post-multiplying~\eqref{eq_lmir_INT} by
   $\begin{bmatrix}
      \overline{P} &0\\
      0& I
    \end{bmatrix}$  yields 
    \eqref{eq_lmistable_INT}-\eqref{eq_lmistable2_INT}, respectively.
   
    To continue, recall that matrices $M,U$ need to satisfy
    constraints~\eqref{eq_LQMLQN_INT} and
    \eqref{eq_LQMLQN2_INT}. Since $\overline{P}\succ 0$, and
    $\begin{bmatrix} S_1 & S_2
     \end{bmatrix} =
   \begin{bmatrix}
     M& U
   \end{bmatrix}\overline{P}$, 
  these constraints can be equivalently 
  written as
   \begin{align*}
     \mathcal{L}Q^{\top}
     \begin{bmatrix}
       M&U
     \end{bmatrix}
     \overline{P}
     &=\mathcal{L}Q^{\top}
       \begin{bmatrix}
         S_1&S_2
       \end{bmatrix}
       =
         \bm{0}_{n(n+m)\times 2n}
     \\
     C_1Q^{\top}
     \begin{bmatrix}
       M&U
     \end{bmatrix}
     \overline{P}
     &=C_1Q^{\top}
       \begin{bmatrix}
         S_1&S_2
       \end{bmatrix}
       =
       \begin{bmatrix}
         \overline P_{11}&\overline P_{12}
       \end{bmatrix}
   \end{align*}
   which correspond to~(\ref{eq_LMIs_INT}c-d).
\end{proof}

%Theorem~\ref{Thm_stable} is only a sufficient
%condition for stabilizing system~\eqref{eq_dstmodel} to $\bm{r}$ with controller \eqref{eq_intctrl}. The
%conservativeness stems from the facts that we search for a
%\emph{quadratic} Lyapunov function and that, in the derivation
%from~\eqref{eq_Lypieq_INT} to~\eqref{eq_lmi0_INT}, we ignore the coupling
%between $\overline{R}(\bm{\epsilon},\bm{e},\bm{r})$ and $\bm{\epsilon},\bm{e}$. In fact, even if
%\eqref{eq_lmi0} does not hold, \eqref{eq_Lypieq} may still be true. Finally, we provide the following proposition to address an assumption we made on matrix $ZU$ when driving equation \eqref{eq_existxi}.

Theorem~\ref{Thm_stable_INT} relies on the fact that $ZU$ is a
non-singular matrix, so that $\bm{\xi}^{\star}$ is well defined. We
show that this holds next.

\begin{proposition}\longthmtitle{Non-singular
    matrix}\label{Prop_nonsingular}
  The matrix $ZU$, with $U$ obtained from
  Theorem~\ref{Thm_stable_INT}, is non-singular.
\end{proposition}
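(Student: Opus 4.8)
The plan is to read off from the strict linear matrix inequality in Theorem~\ref{Thm_stable_INT} a Lyapunov inequality for the data-based closed-loop matrix, and then to show that $1$ cannot be one of its eigenvalues, which in turn forces $ZU$ to be non-singular. Concretely, by the construction in the proof of Theorem~\ref{Thm_stable_INT}, the triple $(\overline{P},S_1,S_2)$ satisfying~\eqref{eq_LMIs_INT} corresponds, via $P=\overline{P}^{-1}\succ 0$ and $\begin{bmatrix}M&U\end{bmatrix}=\begin{bmatrix}S_1&S_2\end{bmatrix}\overline{P}^{-1}$ (so that $U$ is precisely the matrix referred to in the statement), to a pair $(M,U)$ for which~\eqref{eq_lmira_INT} holds; undoing the Schur complement in~\eqref{eq_lmira_INT}, this is exactly
\begin{align*}
  \Phi^{\top}P\Phi \prec P, \qquad
  \Phi=\begin{bmatrix}\alpha I + ZM & ZU \\ I & I\end{bmatrix},
\end{align*}
i.e., \eqref{eq_defphi} evaluated at $\overline{R}(\bm{\epsilon},\bm{e},\bm{r})=R_1=I_n$.

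The key structural feature I would exploit is that the bottom block row of $\Phi$ equals $\begin{bmatrix}I&I\end{bmatrix}$, so $\Phi$ maps any vector of the form $\col\{\bm{0}_n,v\}$ to $\col\{ZUv,\,v\}$. Arguing by contradiction, suppose $ZU$ is singular; then there exists $v\in\real^n$, $v\neq 0$, with $ZUv=\bm{0}_n$. Setting $w=\col\{\bm{0}_n,v\}\neq 0$, the observation above yields $\Phi w=\col\{ZUv,\,v\}=\col\{\bm{0}_n,v\}=w$, i.e., $w$ is a fixed point of $\Phi$.

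Evaluating the Lyapunov inequality on $w$ and using $\Phi w=w$,
\begin{align*}
  w^{\top}\bigl(\Phi^{\top}P\Phi-P\bigr)w
  = (\Phi w)^{\top}P(\Phi w) - w^{\top}Pw = w^{\top}Pw - w^{\top}Pw = 0,
\end{align*}
which contradicts $\Phi^{\top}P\Phi-P\prec 0$, as that would require the left-hand side to be strictly negative for the nonzero vector $w$. Hence $ZU$ is non-singular.

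I do not anticipate a genuine obstacle: the core argument is a two-line quadratic-form computation once the data-based closed-loop matrix $\Phi$ and the strict Lyapunov inequality have been extracted from Theorem~\ref{Thm_stable_INT}. The only point demanding care is the bookkeeping that translates the LMI~\eqref{eq_lmistable_INT} in the variables $(\overline{P},S_1,S_2)$ back into the inequality $\Phi^{\top}P\Phi\prec P$ in the variables $(P,M,U)$, which is exactly the chain of equivalences already used in the proof of Theorem~\ref{Thm_stable_INT}. Equivalently, one could phrase the contradiction as ``the Lyapunov inequality implies $\Phi$ is Schur stable, hence $1\notin\operatorname{spec}(\Phi)$, whereas $ZUv=\bm{0}_n$ with $v\neq 0$ exhibits $1$ as an eigenvalue of $\Phi$''; the direct computation above avoids invoking Schur stability.
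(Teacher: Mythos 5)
Your proof is correct and follows essentially the same route as the paper: both argue by contradiction that singularity of $ZU$ would give $\Phi$ (with $\overline{R}=I_n$) an eigenvalue equal to $1$, contradicting the strict Lyapunov inequality extracted from~\eqref{eq_lmistable_INT}. Your version merely makes explicit the eigenvector $\col\{\bm{0}_n,v\}$ and replaces the appeal to Schur stability with the direct quadratic-form computation, which is a welcome added detail but not a different argument.
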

\begin{proof}
  We reason by contradiction. Assume $ZU$ is singular.  Note
  that~\eqref{eq_lmira_INT} corresponds to~\eqref{eq_lmi0_INT} when
  $\overline{R}(\bm{\epsilon},\bm{e},\bm{r})=I_n$.  Considering the
  block structure of $\Phi(\bm{\epsilon},\bm{e},\bm{r})$ in
  \eqref{eq_defphi}, since $ZU$ is singular, then at least one
  eigenvalue of $\Phi(\bm{\epsilon},\bm{e},\bm{r})$ is 1. This
  contradicts~\eqref{eq_lmi0_INT}, which requires all all eigenvalues
  of $\Phi(\bm{\epsilon},\bm{e},\bm{r})$ to have magnitude strictly less
  than~1.
  % Considering the blocked structure of
  % $\Phi(\bm{\epsilon},\bm{e},\bm{r})$ in \eqref{eq_defphi} with
  % $\overline{R}(\bm{\epsilon},\bm{e},\bm{r})=I_n$, which corresponds
  % to~\eqref{eq_lmira_INT}, if $ZU$ were singular, then at least one
  % eigenvalue of $\Phi(\bm{\epsilon},\bm{e},\bm{r})$ would be 1. This
  % contradiction implies that $ZU$ must be non-singular.
\end{proof}

\begin{remark}\longthmtitle{Computational complexity of solving LMIs,
    cont'd}
  The LMIs in \eqref{eq_LMIs_INT} have twice the dimensions compared
  with those in \eqref{eq_LMIs}. Apart from this, similar to the
  complexity of solving \eqref{eq_LMIstable}, the computational
  complexity of solving a single \eqref{eq_lmistable2_INT} is polynomial in
  $T_d$. However, the number of equations in~\eqref{eq_lmistable2_INT} that
  we need to solve is of order $2^n$,
  %         %
  % \marginJC{Same comment as above}
  % \marginXW{Addressed.}
  %         %
  which grows exponentially with the dimension of the system.  This
  observation motivates Section~\ref{Sec_ReducedC}, which provides a
  sufficient condition to reduce the computational complexity for both
  Theorems~\ref{Thm_stable} and \ref{Thm_stable_INT}. \oprocend
\end{remark}

\section{Sufficient stability conditions with reduced computational
  complexity}\label{Sec_ReducedC}

%Theorems~\ref{Thm_stable} and \ref{Thm_stable_INT} require solving a
%number of LMIs that scale combinatorially with $n$. 
The high computational cost of Theorems~\ref{Thm_stable} 
and \ref{Thm_stable_INT} arises from the use of matrices $R_j$ 
in~\eqref{eq_Rhull} and~\eqref{eq_Rhull_INT} to construct convex
combinations representing $\overline{R}(\bm{\epsilon},\bm{r})$ and
$\overline{R}(\bm{\epsilon},\bm{e},\bm{r})$, respectively.
To reduce this complexity, we propose using a smaller
set of matrices whose convex combination can still represent
$\overline{R}(\bm{\epsilon},\bm{r})$ and
$\overline{R}(\bm{\epsilon},\bm{e},\bm{r})$.

Towards this end, define $\widetilde{R}_0=\bm{0}_{n\times n}$.
For $k\in\until{n}$, let $\widetilde{R}_k \in \real^{n\times n}$ be a
diagonal matrix such that
\begin{align*}
  \widetilde{R}_k[j,j]\triangleq 
  \begin{cases}
    1& \text{for } j=k ,
    \\
    0& \text{for } j\neq k .
  \end{cases}
\end{align*}
The next result provides an alternative set of LMIs whose number
scales linearly with~$n$, that can be used to replace 
conditions~\eqref{eq_LMIstable} and~\eqref{eq_lmistable2_INT},
respectively. 

\begin{proposition}\longthmtitle{A sufficient condition with reduced
    computational complexity for solving LMIs}\label{Prop_RCC}
  \begin{enumerate}
  \item Let $\overline{P}\in\mathbb{R}^{n\times n}$ and
    $S_1,S_2\in\mathbb{R}^{nT_d\times n}$ satisfy
    \begin{align}\label{eq_reducedcom}
      \begin{bmatrix}
        \overline{P} & *
        \\
        \alpha \overline{P} + {n}\widetilde{R}_k ZS_1 & \overline{P}
      \end{bmatrix}\succeq 0
    \end{align}
    for all $\widetilde{R}_k$, $k\in\{0,1,\dots,n\}$. Then, these
    matrices satisfy~\eqref{eq_LMIstable} for all
    $i\in\{2,3,\dots,2^n\}$;
    %
    % \marginJC{How about (27a)? Does this mean we need to solve (27a)
    %   and (50)?}
    % \marginXW{Yes, we do. I added an explanation after the 
    %   proposition.}
    % 
  \item Let 
    \begin{align*}
      \overline P=
      \begin{bmatrix}
        \overline P_{11} & \overline P_{12}
        \\
        \overline P_{12}^{\top} & \overline P_{22}
      \end{bmatrix} \in \mathbb{R}^{2n\times 2n}
    \end{align*}
    with $\overline{P}_{11}$, $\overline{P}_{12}$,
    $\overline{P}_{22}\in \mathbb{R}^{n\times n}$ and
    $S_1,S_2\in\mathbb{R}^{nT_d\times n}$ satisfy
    \begin{align}\label{eq_LMIsRC_INT}
      \begin{bmatrix}
        \overline P & *
        \\
        \!
        \begin{pmatrix}
          \alpha \overline P_{11} \!+\! n\widetilde{R}_kZS_1
          &
            \!\!\alpha
            \overline
            P_{12}
            \!+\!
            n\widetilde{R}_kZS_2
          \\
          \overline P_{11} \!+\! \overline P_{12}^{\top}
          & \overline P_{12} \!+\! \overline P_{22}
        \end{pmatrix}
        &\!\!\! \overline P
      \end{bmatrix}
      &\succeq 0
    \end{align}
    for all $\widetilde{R}_k$, $k\in\{0,1,\dots,n\}$.  Then these
    matrices satisfy~\eqref{eq_lmistable2_INT} for all
    $i\in\{2,3,\dots,2^n\}$.
    %
    % \marginJC{Same question as above? Does this mean we need to solve
    %   (42a) and (51)?}
    %   \marginXW{Yes.}
    % 
  \end{enumerate}  
\end{proposition}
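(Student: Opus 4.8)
The plan is to reduce the statement to two elementary facts: (a) the left-hand sides of \eqref{eq_LMIstable} and of \eqref{eq_lmistable2_INT} depend \emph{affinely} on the diagonal matrix $R_j$; and (b) the scaled matrices $n\widetilde{R}_0,\dots,n\widetilde{R}_n$ have a convex hull that contains every diagonal matrix with entries in $\{0,1\}$. Granting (a)--(b), each $R_j$ with $j\in\{2,\dots,2^n\}$ is a convex combination of $n\widetilde{R}_0,\dots,n\widetilde{R}_n$; hence, by affinity, the block matrix in \eqref{eq_LMIstable} (resp. \eqref{eq_lmistable2_INT}) evaluated at $R_j$ equals the same convex combination of the block matrices in \eqref{eq_reducedcom} (resp. \eqref{eq_LMIsRC_INT}) evaluated at the $\widetilde{R}_k$; since each of those is positive semidefinite by hypothesis and the coefficients are nonnegative, the combination is positive semidefinite, which is the desired conclusion.

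To establish (b), fix $j\in\{2,\dots,2^n\}$ and put $m=\sum_{i=1}^n R_j[i,i]$; since $R_j$ is diagonal with $\{0,1\}$ entries, $0\le m\le n$. Define $\lambda_i=R_j[i,i]/n$ for $i\in\until{n}$ and $\lambda_0=1-m/n$. Then $\lambda_k\ge 0$ for all $k\in\{0,\dots,n\}$ and $\sum_{k=0}^n\lambda_k=1$, so $(\lambda_0,\dots,\lambda_n)$ lies in the probability simplex. Because $\widetilde{R}_0=\bm{0}_{n\times n}$ contributes nothing and $\widetilde{R}_i$ has its only nonzero entry (equal to $1$) in position $(i,i)$, an entrywise comparison shows $R_j=\sum_{k=0}^n\lambda_k\,(n\widetilde{R}_k)$. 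The point to watch here is that the residual weight $\lambda_0=1-m/n$ is nonnegative --- precisely because a diagonal $0/1$ matrix has at most $n$ ones --- and that this weight is absorbed by the zero matrix $\widetilde{R}_0$, whose presence in the reduced family is exactly what makes the coefficients sum to one.

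For (a), in part (i) the map $R\mapsto\begin{bmatrix}\overline{P} & (\alpha\overline{P}+RZS_1)^{\top}\\ \alpha\overline{P}+RZS_1 & \overline{P}\end{bmatrix}$ is affine: the two diagonal blocks are constant and the off-diagonal blocks are affine in $R$. In part (ii), the dependence of the block matrix in \eqref{eq_lmistable2_INT} on $R_j$ enters only through the block row $\begin{bmatrix}\alpha\overline{P}_{11}+R_jZS_1 & \alpha\overline{P}_{12}+R_jZS_2\end{bmatrix}$ and its transpose; the remaining entries (the block row $\begin{bmatrix}\overline{P}_{11}+\overline{P}_{12}^{\top} & \overline{P}_{12}+\overline{P}_{22}\end{bmatrix}$ and the two copies of $\overline{P}$) are constant, so the map is again affine in $R_j$. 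Invoking affinity together with $R_j=\sum_k\lambda_k(n\widetilde{R}_k)$ and $\sum_k\lambda_k=1$, and then nonnegativity of the $\lambda_k$ and positive semidefiniteness of the reduced LMIs \eqref{eq_reducedcom} and \eqref{eq_LMIsRC_INT}, yields \eqref{eq_LMIstable} and \eqref{eq_lmistable2_INT} at every $R_j$, respectively. The only genuinely substantive step is (b); everything else is the routine fact that an affine map sends a convex combination of points to the corresponding convex combination of images, and that positive semidefiniteness is preserved under nonnegative linear combinations.
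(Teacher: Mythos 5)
Your proof is correct and follows essentially the same route as the paper's: the paper also writes $R_j=\sum_{k=1}^n a_{jk}\widetilde{R}_k$ with $a_{jk}\in\{0,1\}$, multiplies the reduced LMIs by the weights $a_{jk}/n$, and adds $(1-\sigma_j/n)$ times the $\widetilde{R}_0$ instance to recover \eqref{eq_LMIstable} and \eqref{eq_lmistable2_INT}, which is exactly your convex-combination-plus-affinity argument with $\lambda_0=1-\sigma_j/n$ absorbed by the zero matrix. No gaps; your explicit identification of why $\widetilde{R}_0$ must be in the reduced family is the same observation the paper makes implicitly when it invokes $\sigma_j\le n$.
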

\begin{proof}
  For brevity, we only provide the proof for case (ii). The proof for
  (i) is analogous.
    % The two conditions for Theorems~\ref{Thm_stable} and
  % \ref{Thm_stable_INT} follow the same principle. Here we only provide
  % the proof for the second case. %Since
  % %$\overline{R}(\bm{\epsilon},\bm{r})$ is diagonal and
  % %$0< \overline{R}(\bm{\epsilon},\bm{r})[i,i]\le 1$, 
  Given $j\in\until{2^n}$, let $a_{jk}={R}_j[k,k]$,
  $k\in\until{n}$.  It follows that
  \begin{align*}
    {R}_j = \sum_{k=1}^n a_{jk}\widetilde{R}_k\quad
    \text{and}\quad \sum_{k=1}^n a_{jk} \le n,
  \end{align*}
  %                                                                      
  % \marginJC{Why from $k=0$ in the sum? There is no zero row/column in $R_i$!}
  % \marginXW{Addressed. }
  % 
  where the inequality holds because
  $a_{jk}={R}_j[k,k]\in\{0,1\}$. For convenience, let
  $\sigma_j = \sum_{k=1}^n a_{jk}$.
  
  For $k\in\until{n}$, we multiply \eqref{eq_LMIsRC_INT} by
  $\frac{a_{jk}}{n}$ and sum the inequalities over all $k$ to yield
  \begin{align}\label{eq_LMIRCC}
    \begin{bmatrix}
      \frac{\sigma_j}{n}\overline{P}
      & *
      \\
      \!\begin{pmatrix}
          \alpha  \frac{\sigma_j}{n}\overline P_{11} \!+\! R_jZS_1
          &
            \!\!\alpha
            \frac{\sigma_j}{n}\overline
            P_{12}
            \!+\!
            R_jZS_2
          \\ 
          \frac{\sigma_j}{n}(\overline P_{11} \!+\! \overline
          P_{12}^{\top})
          &
            \frac{\sigma_j}{n}(\overline P_{12} \!+\! \overline P_{22})
        \end{pmatrix} &\!\!\! & \frac{\sigma_j}{n}\overline{P}
    \end{bmatrix}\succeq 0 .
  \end{align}
  Consider \eqref{eq_LMIsRC_INT} for
  $\widetilde{R}_{k=0}=\bm{0}_{n\times n}$.  Since $\sigma_j\le n$, we
  have
  \begin{align}\label{eq_aP}
    \left(1-\frac{\sigma_j}{n}\right)
    \begin{bmatrix}
      \overline{P} & *
      \\
      \!\begin{pmatrix}
          \alpha \overline P_{11}
          &  \!\!\alpha \overline P_{12}
          \\
          \overline P_{11} \!+\! \overline P_{12}^{\top}
          & \overline
            P_{12}
            \!+\!
            \overline
            P_{22} 
        \end{pmatrix} & \overline{P}
    \end{bmatrix}\succeq 0 .
  \end{align}
  Thus, for $j\in \{2,3,...,2^n\}$, adding \eqref{eq_LMIRCC} and
  \eqref{eq_aP} results in \eqref{eq_lmistable2_INT}.
  % 
  % \marginJC{At the beginning of the proof, we had $i\in\until{2^n}$,
  %   so we also get (42a), no?}
  %   \marginXW{when $i=1$, (now $j=1$), combining equations 
  %   \eqref{eq_LMIRCC} and \eqref{eq_aP} only results in a 
  %   semi-positive definite condition.
  % We still need to enforce the strictly positive definite condition 
  % for (42a).}
\end{proof}

The conditions in Proposition~\ref{Prop_RCC} are more computationally
tractable: we have only order $n$ LMIs to consider, instead of order
$2^n$ from~\eqref{eq_LMIstable} or~\eqref{eq_lmistable2_INT}.
%
% \marginJC{Again, check the numbers. I see $n+1$, not $n$. Also,
%   $2^n-1$, not $2^n$. Specific numbers are not that relevent, it's
%   really the order. So if you say ``order $n$'', then ok.}
% \marginXW{Addressed!}
% 
However, these conditions are stricter due to the rescaling factor on
the off-diagonal elements of the matrix.  As a consequence, there
might exist matrices satisfying~\eqref{eq_LMIstable}
or~\eqref{eq_lmistable2_INT}, but not the
corresponding~\eqref{eq_reducedcom} or \eqref{eq_LMIsRC_INT}.
Finally, note that when solving the LMIs,
conditions~\eqref{eq_LMIstable0} and \eqref{eq_lmistable_INT} still
need to be considered because they impose positive definite
conditions, whereas from the proof of Proposition~\ref{Prop_RCC} we
can only guarantee positive semi-definiteness for~$j=1$.

% %
% \marginJC{Adjust this addign (27a) and (42a) depending on how margins
%   above are resolved.}  \marginXW{Added.}
% %

%Clearly, the convex hull defined by the vertices $R_i$ for
%$i\in\until{2^n}$ is a subset of the convex hull defined by the
%vertices $\{\bm{0}_{n\times n}\}\bigcup \{\widetilde{R}_k, k=1,\cdots,n\}$. This
%allows us to reduce the $2^n$ number of LMIs in~\eqref{eq_LMIstable}
%and \eqref{eq_lmistable2_INT} to $n$ number of LMIs
%in~\eqref{eq_reducedcom} and \eqref{eq_LMIsRC_INT},
%respectively. However, because the new convex hull is larger than the
%original one, condition~\eqref{eq_reducedcom} is more strict. In other
%words, there might exist matrices satisfying~\eqref{eq_LMIstable}
%or~\eqref{eq_lmistable2_INT} but not the
%corresponding~\eqref{eq_reducedcom} or \eqref{eq_LMIsRC_INT}.

\section{Case Studies: Validating Data-Driven Control in Biological Systems}

In this section, we present two biological
examples~\cite{EN-JC:21-tacII,JF-JC-SS-PS:19} to validate the
effectiveness of the proposed data-driven control. The first example
tackles the regulation of neuro-excitation levels in rodents' brains
during selective listening tasks. This example assumes that the
excitation levels of specific neuronal populations can be directly
manipulated as system inputs The second example simulates the
regulation of human arousal levels using a brain–computer interface
(BCI) for a sensory-motor task.  This example builds on a more
realistic application with the potential for real-world
implementation.
% One experiment better than the other.
%
% \marginJC{I haven't read the subsections below when writing this
%   margin, but I'm curious as to why we have 2 different regulation
%   examples. I understand 2 is better than 1, but it would be nice if
%   we explain what one example adds with respect to the other.  }
% \marginXW{Added a comparison of the two experiments.}
% 

\subsection{Data-driven Regulation of Neuro-Excitation Levels in Rodents' Brain}
We consider an experiment that studies the regulation of neural
activation levels in rodents' brains for selective listening
tasks~\cite{EN-JC:21-tacII,CCR-MRD:14,CCR-MRD:14-crcns}.  The rodents
are exposed to a (left/right) white noise burst and a (high/low pitch)
narrow-band warble. Depending on the task, they need to focus on one
of them and ignore the other.  During the experiments, the firing
rates of the neuron cells are recorded from two different regions of
their brains: the prefrontal cortex (PFC) and the primary auditory
cortex (A1).
%
% \marginJC{This is too sketchy. I'd expand a bit to explain that they
%   are exposed to two stimuli: left-right warble sounds and high-low
%   pitch sounds and, depending on the task, they need to focus on one
%   of them and ignore the other.}  \marginXW{Expanded.}
% %

\begin{figure}[htb]
  \centering
  \includegraphics[width=\linewidth]{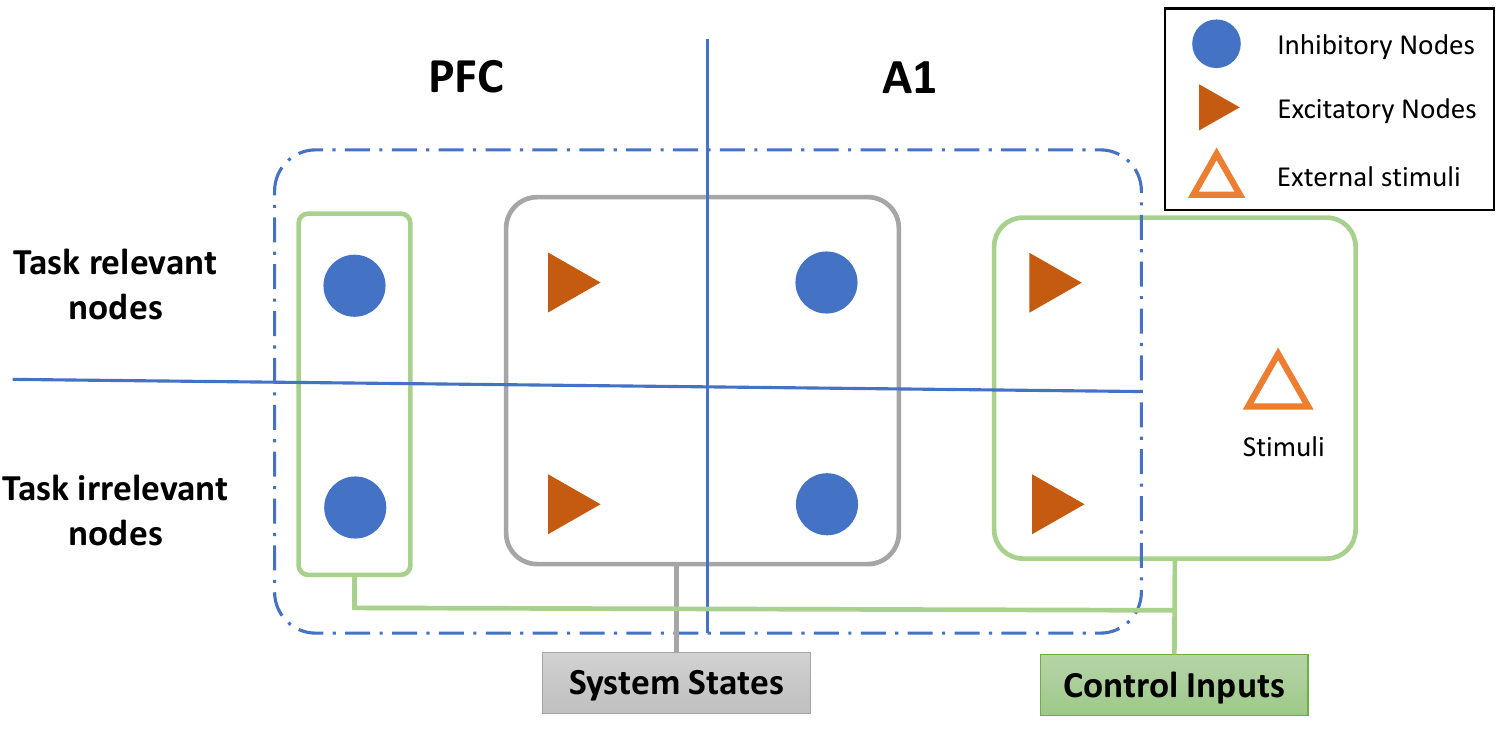}
  \caption{The model includes 8 groups of neuron cells and one
    external stimulus.  The nodes in the gray box are considered as
    system states; the nodes in green boxes are considered as control
    inputs.}\label{Fig_nn}
\end{figure}

We classify the neurons in the rodent brains into $8=2^3$ groups based
on a combination of: region (PFC, A1); type (excitatory, inhibitory);
and encoding (task relevant, irrelevant), as shown in
Fig.~\ref{Fig_nn}.  We choose 4 nodes as system states ($n=4$),
corresponding to the excitatory nodes in the PFC area and the
inhibitory nodes in the A1 area. The other 4 nodes along with the
external stimuli are considered as system inputs. Building on this
configuration, we have identified the following system parameters in
our previous work \cite{XW-JC:25-tac}:
\begin{align*}
  \alpha&=0.9728, \qquad s=0.3984, \\
  W&=\begin{bmatrix*}[r]
       0 &   0.0427 &  -0.0122  & 0\\
       0.0084 &   0 &  -0.0003 &  -0.0009\\
       0.0421 &   0.0334 &        0  & 0\\
       0.1031 &   0.0114 &  -0.0036  &       0
     \end{bmatrix*}
     ,
  \\
  B&=\begin{bmatrix*}[r]
       0.0114   &-0.0005 &  -0.0749  & -0.0017   & 0\\
       -0.0270  &  0.0015  &  0.2107  & 0  &  0\\
       -0.6332  &  0.0044  & -0.2840  & 0  &  0.0358\\
       -0.7236  &  0.0162  &  0.5482  & 0  &  0.0207
     \end{bmatrix*} .
\end{align*}
The upper bound of the system state is $\frac{s}{1-\alpha}=14.647$.
Note that the system matrices are assumed to be unknown during the
data-driven controller design -- we only use them to generate data
samples.  Based on $W$, $B$, $\alpha$ and $s$, we create data samples
with the following discrete-time system model
\begin{align*}
  \widetilde{\bm{x}}_d^+(k)=\alpha (\widetilde{\bm{x}}_d(k)) +
  \left[W(\widetilde{\bm{x}}_d(k))+B(\widetilde{\bm{u}}_d(k))\right]_0^{s}
\end{align*}
with $k\in\{1,\dots,T_d\}$ and $T_d=250$. For different $k$, system
states $\widetilde{\bm{x}}_d(k)$ and inputs $\widetilde{\bm{u}}_d(k)$
are chosen independently, i.e., the entries of
$\widetilde{\bm{x}}_d(k)$ are randomly chosen from $[0 ~~14.647]$; the
entries of $\widetilde{\bm{u}}_d(k)$ are randomly chosen from
$[0 ~~10]$, with uniform distributions. By validation, the data
samples satisfy the rank condition in Assumption \ref{Ass_rankQ}.

To validate Theorem~\ref{Thm_stable}, we consider the following
semi-definite programming (SDP) problem with a random reference input $r=
\begin{bmatrix}
  8.26& 4.42& 10.99& 6.95
\end{bmatrix}^{\top}$,
\begin{subequations}\label{eq_simuSDP}
  \begin{alignat}{2}
    & \text{maximize} \quad  && \gamma \label{eq_simuobj}
    \\
    & \text{subject to} \quad && \overline{P}\preceq I_n,
    \\
    &&& 
        \begin{bmatrix}
          \overline{P} & ~~*\\
          \alpha \overline{P} + ZS_1 & \overline{P}
        \end{bmatrix} \succeq  \gamma I_{2n} ,
        \label{eq_simulmi}
    \\
    &&& 
        \begin{bmatrix}
          \overline{P} & ~~*\\
          \alpha \overline{P} + {n}\widetilde{R}_k ZS_1 & \overline{P}
        \end{bmatrix} \succeq  0,
        \label{eq_simulmi2}
    \\  
    &&&  \mathcal{L}Q^{\top}S_1=\bm{0},\quad
        ~~~\mathcal{L}Q^{\top}S_2=\bm{0} ,
    \\
    &&&  C_1Q^{\top}S_1=\overline{P},\quad ~C_1Q^{\top}S_2=\bm{0} ,
    \\
    &&&  Z(S_1+S_2)=(1-\alpha)\overline{P},
  \end{alignat}
  % \begin{align}
  %   \text{maximize} \quad \gamma \quad &\label{eq_simuobj}
  %   \\		
  %   \overline{P}&\preceq I_n\\  	
  %   \begin{bmatrix}
  %     \overline{P} & ~~*\\
  %     \alpha \overline{P} + {n}\widetilde{R}_k ZS_1 & \overline{P}
  %   \end{bmatrix}&\succeq  \gamma I_n \label{eq_simulmi}  \\  
  %   \mathcal{L}Q^{\top}S_1=\bm{0},\quad ~~~\mathcal{L}Q^{\top}S_2&=\bm{0}\\
  %   C_1Q^{\top}S_1=\overline{P},\quad ~C_1Q^{\top}S_2&=\bm{0}\\
  %   Z(S_1+S_2)=(1-\alpha)\overline{P}
  % \end{align}
\end{subequations}
for all $\widetilde{R}_k$, $k\in\until{n}$. With the solution, we
define $K_1=C_2 Q^{\top}S_1 \overline{P}^{-1}$ and
$K_2=C_2 Q^{\top}S_2 \overline{P}^{-1}$ to design a controller
\begin{align*}
  \bm{u}(t)=K_1\bm{x}(t)+K_2\bm{r}.
\end{align*}
With $\gamma$ positive, conditions (\ref{eq_simuSDP}c-g) correspond to
Theorem~\ref{Thm_stable} and Proposition~\ref{Prop_RCC}(i) (for the
sake of reducing computation complexity).
% %
% \marginJC{(54) does not include (27a) explicitly, and based on our
%   current discussion in Section VI, we would need to add it. But if
%   you check my margins, I don't think we do if we set
%   $k\in\{0,1,\dots,n\}$.}
%   \marginXW{In the new formulation (54c) aligns with (27a). }
%
Building on this, we add (\ref{eq_simuSDP}a-b) to formulate an SDP
problem. The motivation for these two terms is as follows.
%
% \marginJC{We say ``motivation for these two terms is as follows.'' but
% we do not mention (54a) in what follows.}
% \marginXW{Added.}
%
From the inequality~\eqref{eq_lmi0}, once $P$ is given, the decrease
of the Lyapunov function depends on the eigenvalues of the negative
matrix in \eqref{eq_lmi0}. Furthermore, since
$0<\bm{r}<\frac{s}{1-\alpha}$, based on the definition of
$\overline{R}(\bm{\epsilon},\bm{r})$ under \eqref{eq_dstmodels0}, when
$\bm{\epsilon}$ is sufficiently small, we have
$\overline{R}(\bm{\epsilon},\bm{r})=I_n$. Substituting this into
\eqref{eq_lmi0} and applying the Schur complement, the decrease of the
Lyapunov function is reflected by the magnitude of positive $\gamma$
in (\ref{eq_simuSDP}c). This motivities the maximization
of $\gamma$ in (\ref{eq_simuSDP}a).
% %
% \marginJC{you mean maximization, not minimization, no?}
% \marginXW{Addressed.}
%
%After applying the Schur complement, this is reflected by the magnitude of positive $\gamma$ in (\ref{eq_simuSDP}c). 
Furthermore, since in (\ref{eq_simuSDP}c),
$\gamma$ can grow linearly (to infinity) with $\overline{P}$, we need
to normalize the magnitude of $\overline{P}$, leading to the condition
(\ref{eq_simuSDP}b).  We validate that problem~\eqref{eq_simuSDP} is
solvable, and compute the controller gains $K_1$, $K_2$. Given a
random initial state $\bm{x}(0)$,
we show in Figure~\ref{Fig_ddt} the system trajectory of
the closed-loop system.  One can observe that all the system states
are stabilized at the reference input.

\begin{figure}[htb]
  \centering
  \includegraphics[width=\linewidth]{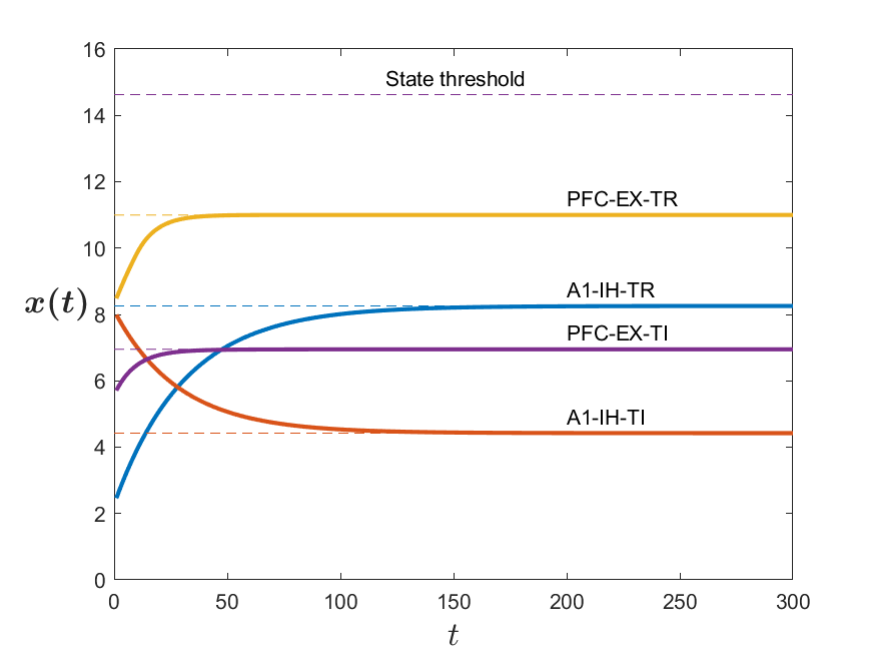}
  \caption{Data-driven stabilization of a 4-node network. The
    synthesis of the feedback gain matrix is based on solving the SDP
    specified in~\eqref{eq_simuSDP}.}
  \label{Fig_ddt}
\end{figure}
%
% \marginJC{Why not plot the references with dashed lines?}
% \marginXW{Added.}
%

To also validate Theorem~\ref{Thm_stable_INT}, we consider the
following SDP problem
\begin{subequations}\label{eq_simuSDP_INT}
  \begin{alignat}{2}
    & \text{maximize} \quad  && \gamma \label{eq_simuobj_INT}
    \\
    & \text{subject to} \quad && \overline{P}\preceq I_n
    \\
    &&& 
      \begin{bmatrix}
        \overline P
        & *
        \\
        \begin{pmatrix}
          \alpha \overline I_n \! +\!  ZS_1
          &  \alpha \overline P_{12}
            \! +\!  ZS_2
          \\ 
          \overline P_{11} \!+\! \overline P_{12}^{\top}
          &
            \overline P_{12} \!+\! \overline P_{22}
        \end{pmatrix}
        &
          \overline P
      \end{bmatrix}
        \succeq \gamma I_{4n}
    \\
    &&& 
    \begin{bmatrix}
      \overline P & *\\
      \!\begin{pmatrix}
          \alpha \overline P_{11} \!+\! \widetilde{R}_kZS_1
          &  \!\!\alpha \overline P_{12} \!+\! \widetilde{R}_kZS_2
          \\
          \overline P_{11} \!+\! \overline P_{12}^{\top}
          & \overline P_{12} \!+\! \overline P_{22}
        \end{pmatrix}
      &\!\!\! \overline P
    \end{bmatrix}\succeq 0
    \\
    &&&
    \mathcal{L}Q^{\top}S_1=\bm{0},~~~~~\mathcal{L}Q^{\top}S_2=\bm{0}
    \\
    &&&
    C_1Q^{\top}S_1=\overline P_{11},~~~~~C_1Q^{\top}S_2=\overline
    P_{12} 
  \end{alignat}
  % \begin{align}
  %   \text{maximize}\quad \gamma\quad&\label{eq_simuobj_INT}\\	
  %   \overline{P}&\preceq I_n\\  
  %   \begin{bmatrix}
  %     \overline P & *\\
  %     \begin{pmatrix}
  %       \alpha \overline I_n \! +\!  ZS_1 &  \alpha \overline P_{12}
  %                                           \! +\!  ZS_2
  %       \\ 
  %       \overline P_{11} \!+\! \overline P_{12}^{\top} & \overline P_{12} \!+\! \overline P_{22}
  %     \end{pmatrix} & \overline P
  %   \end{bmatrix}&\succeq \gamma I_{2n} \\  
  %   \begin{bmatrix}
  %     \overline P & *\\
  %     \!\begin{pmatrix}
  %         \alpha \overline P_{11} \!+\! R_iZS_1 &  \!\!\alpha \overline P_{12} \!+\! R_iZS_2\\
  %         \overline P_{11} \!+\! \overline P_{12}^{\top} & \overline P_{12} \!+\! \overline P_{22}
  %       \end{pmatrix} &\!\!\! \overline P
  %   \end{bmatrix}&\succeq 0\\
  %   \mathcal{L}Q^{\top}S_1=\bm{0},~~~~~\mathcal{L}Q^{\top}S_2=\bm{0}~~~~~~& 
  %   \\
  %   C_1Q^{\top}S_1=\overline P_{11},~~~~~C_1Q^{\top}S_2=\overline P_{12}~~~& 
  % \end{align}
\end{subequations}
for all $\widetilde{R}_k$, $k\in\{1,...,n\}$.
%
% \marginJC{In SDP, you write $R_i$ instead of $\widetilde{R}_k$. And if
% you write $k\in\{0,1,\dots,n\}$, then I think you can eliminate (55c).}
% \marginXW{Revised.}
%

This allows us to design a controller with integral feedback by
defining $\begin{bmatrix} K_1&K_2
 \end{bmatrix}
 =C_2 Q^{\top}
 \begin{bmatrix}
   S_1&S_2
 \end{bmatrix}\overline P^{-1}$
 and setting
\begin{align*}
  &\bm{u}(t)= K_1(\bm{x}(t)-\bm{r})+ K_2\bm{\xi}(t)
  \\
  &\bm{\xi}(t+1)= \bm{\xi}(t)+ (\bm{x}(t)-\bm{r}) .
\end{align*}
The motivation for this SDP is the same as \eqref{eq_simuSDP}.  We
validate that this problem is also solvable, and compute the
controller gains $K_1$, $K_2$. Given a random initial state
$\bm{x}(0)$ and a random reference value $r=
\begin{bmatrix} 8.26& 4.42& 10.99& 6.95
\end{bmatrix}^{\top}$, we show in Figure~\ref{Fig_ddt_int} the 
trajectory of 
the closed-loop system with integral feedback control.  One can
observe that all the  states 
are stabilized at the reference value. 

\begin{figure}[htb]
  \centering%
  \includegraphics[width=\linewidth]{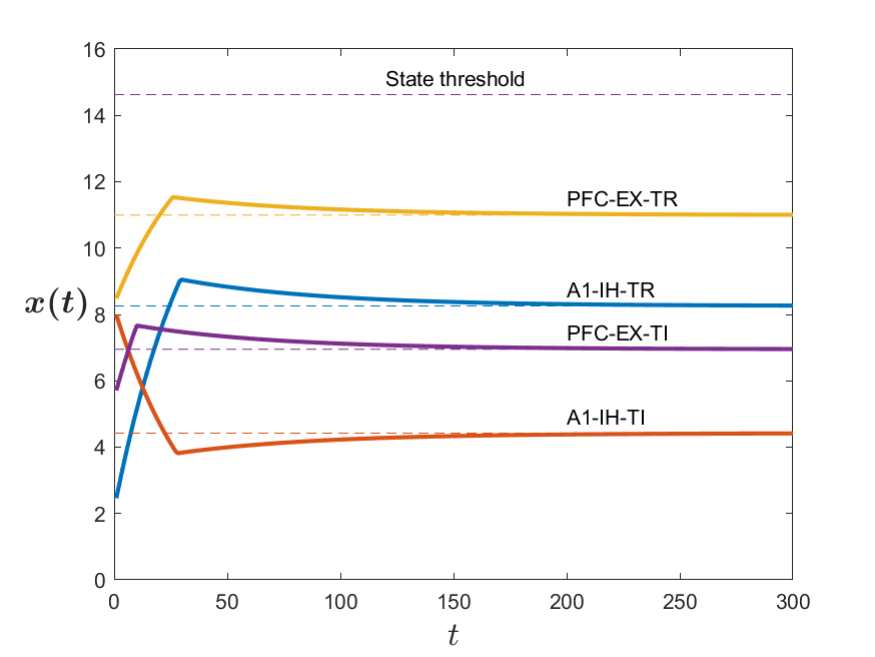}
  \caption{Data-driven stabilization of a 4-node network. The
    synthesis of the feedback gain matrix is based on solving the SDP
    specified in~\eqref{eq_simuSDP_INT}.}
  \label{Fig_ddt_int}
\end{figure}
%
% \marginJC{Plot references?}
% \marginXW{Added.}
%

Compared with Figure~\ref{Fig_ddt}, this result shows a faster response
time in general. Furthermore, we observe larger overshoots and
non-smoothness on these trajectories. This is caused by the joint
effect of the integral term and the linear-threshold function, i.e.,
when the linear-thresholds are activated, the changing rates of the
trajectories are bounded and the integral quickly accumulates; when
the linear-thresholds changes from active to deactive, a sharp,
non-smooth transition happens to the trajectory.

\subsubsection*{Effect of system disturbances}
We verify the robustness of the two controllers \eqref{eq_theproblem}
and \eqref{eq_intctrl} under random disturbances. Consider the
application of disturbance $\omega(t)\in\mathbb{R}^{4\times 1}$ to
system~\eqref{eq_dstmodel} to yield
\begin{align}\label{eq_dstmodel_noise}
  \bm{x}(t+1)=\alpha \bm{x}(t) +
  \left[W\bm{x}(t)+B\bm{u}(t)+\omega(t)\right]_0^{s}.
\end{align}
We assume $\omega(t)$ is i.i.d. and at each time step, its entries are
randomly chosen from $[0~0.2]$ following a uniform distribution. Using
the data corrupted by noise, we design controllers based on
SDPs~\eqref{eq_simuSDP} and~\eqref{eq_simuSDP_INT} to obtain
controllers that can stabilize the system states to the desired
value~$r$.
% %
% \marginJC{So these are different controllers that the ones we just
%   defined? Is this b/c we're using now data corrupted by noise?}
% \marginXW{Yes, in this case, we assume only corrupted data is available.}
% %
Figure~\ref{Fig_ddt_noise} shows that the
controller~\eqref{eq_theproblem} is not able to stabilize the system
states to the desired values. This is because the controller uses a
feed-forward mechanism to handle the reference value, which is not
able to compensate the system disturbances.

\begin{figure}[htb]
  \centering
  \includegraphics[width=\linewidth]{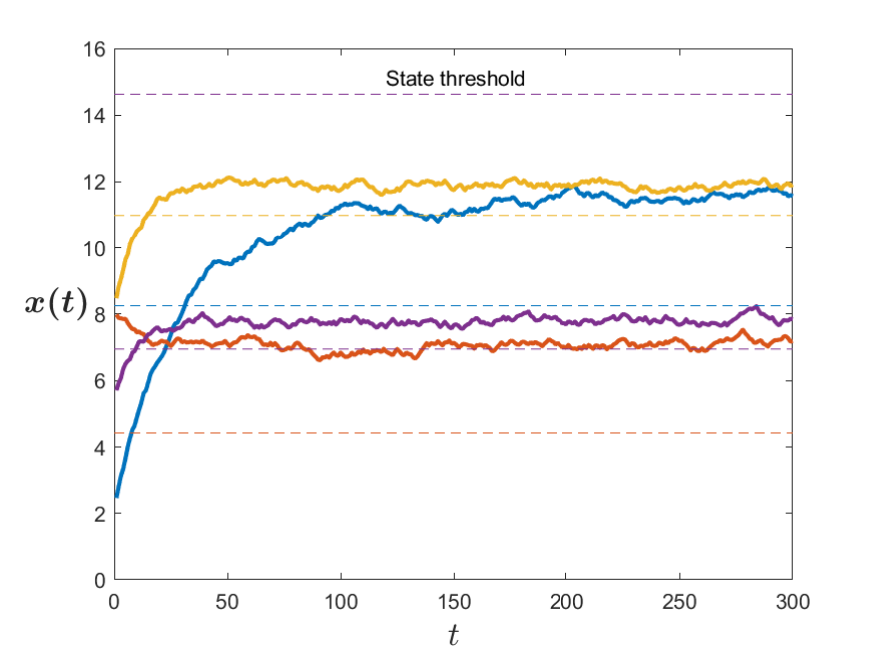}
  \caption{Data-driven stabilization of the system with controller
    \eqref{eq_theproblem} in the presence of system disturbances.
    Dashed lines correspond to the values of $r$. }
  \label{Fig_ddt_noise}
\end{figure}

In contrast, Figure~\ref{Fig_ddt_int_noise} shows that the augmented
feedback controller~\eqref{eq_simuSDP_INT} with error integration is
able to stabilize the system states to the desired values and rejects
the tracking error caused by the system disturbances.

\begin{figure}[htb]
  \centering
  \includegraphics[width=\linewidth]{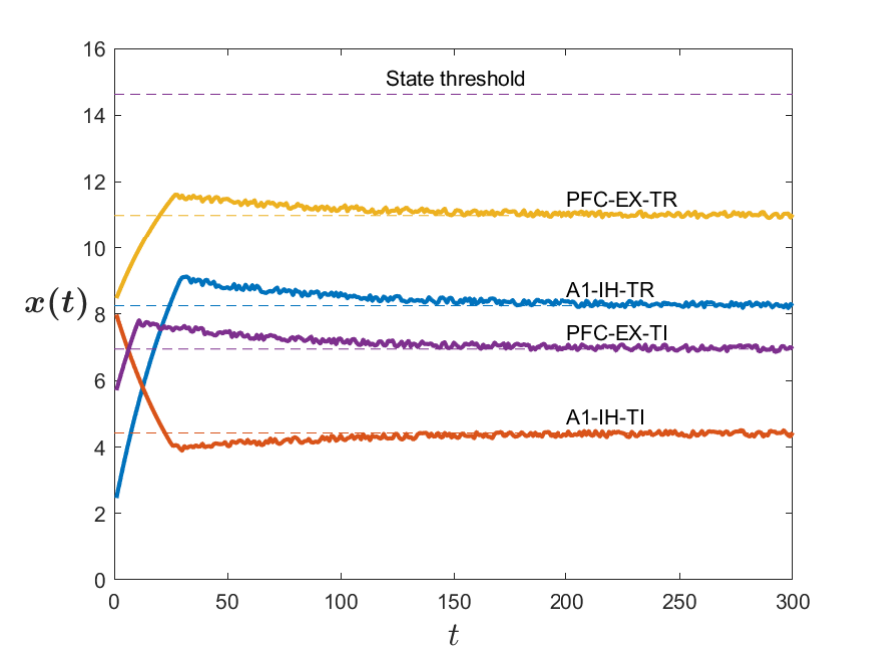}
  \caption{Data-driven stabilization of the system with controller
    \eqref{eq_simuSDP_INT} in the presence of system disturbance.
    Dashed lines correspond to the values of $r$. }
  \label{Fig_ddt_int_noise}
\end{figure}

\subsection{Data-driven Regulation of Arousal in Sensory-Motor Task}
% Paper link https://www.pnas.org/doi/epdf/10.1073/pnas.1817207116
%% EEG justified in
%% \cite{MM-EN-JC:24,EN-MAB-JS-LC-EJC-XH-ASM-GJP-DSB:20,GA-KAD-EN:24}

We validate the proposed data-driven control approach by stimulating a
similar arousal regulation experiment studied
in~\cite{JF-JC-SS-PS:19}. Arousal refers to the state of being
physiologically alert, awake, and attentive, which significantly
affects a human's ability to make decisions and take actions in
dynamic environments. According to the Yerkes–Dodson
Law~\cite{DMD-AMC-CRP-JH-PRZ:07}, there is an optimal mid-range level
of arousal for peak performance. Deviations from this optimal level
can hinder effectiveness, as shown in Figure~\ref{Fig_Arousal_vis}
(right).

\begin{figure}[htb]
  \centering
  \includegraphics[width=\linewidth]{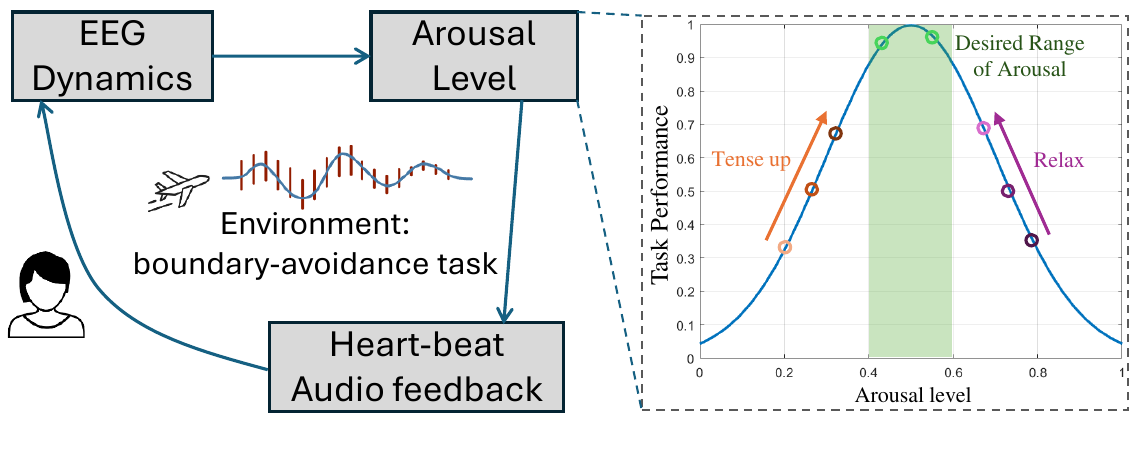}
  \caption{Left: Regulation of arousal using audio feedback in
    sensory-motor tasks. Right: The relation between arousal level and
    task performance.}
  \label{Fig_Arousal_vis}
\end{figure}

The experiment performed in \cite{JF-JC-SS-PS:19} considers a
boundary-avoidance task (BAT) paradigm, which demands the
sensory-motor responsiveness of humans. The experiment uses a virtual
reality (VR) environment, where participants navigate a plane through
courses of rectangular red waypoints (`rings') and the game fails if
the plane misses a ring.  A neurofeedback mechanism, cf.
Figure~\ref{Fig_Arousal_vis}, is employed to regulate humans' arousal
levels. Using EEG signal features (specifically, spectral
information), the system decodes a participant’s arousal level and
generates auditory feedback in the form of a heartbeat sound. The
volume of the sound increases linearly with the decoded arousal level,
which `warns' participants when their arousal level becomes
excessively high. This feedback helps participants `relax' and
maintain their arousal within an optimal range to improve task
performance.  The experiment demonstrated the effectiveness of this
feedback mechanism in regulating arousal levels.  It is worth
mentioning that in \cite{JF-JC-SS-PS:19}, both the arousal decoder and
the auditory feedback coefficient derived from arousal levels are
qualitative, since true arousal is difficult to quantify and
accurately modeling human behavior for optimal control parameters is
challenging. Consequently, the parameter choices employed there are
based on experience and experimental results. In contrast, the method
proposed in this paper is data-driven, which can circumvent these
limitations by enabling a systematic and quantitative design of the
controller.

To simulate human EEG dynamics, we construct a linear threshold model
in which the system state represents the spectral properties of the
human's EEG signal.  The spectral properties of EEG signals can be
approximated using linear-threshold networks, as justified by several
works~\cite{EN-MAB-JS-LC-EJC-XH-ASM-GJP-DSB:24,GA-KAD-EN:24}.  We
choose the dimension of the state as $x\in\mathbb{R}^{15}$, which can
be justified as follows: we consider 5 frequency-bands (delta, theta,
alpha, beta, and gamma: $[0.5, 4], [4, 8], [8, 15], [15, 24]$, and
$[24, 50]$ Hz, respectively) of the EEG signal and, for each band, we
select three dominant frequencies and normalize them, resulting in a
15-dimensional representation. This approach is similar to the
surrogate subspace introduced in~\cite{JF-JC-SS-PS:19}.  The system
input $\bm{u}\in\mathbb{R}$ represents the volume of the auditory
feedback.  To map the EEG states to the arousal level
$a_{\text{rou}}$, we introduce a linear
mapping $$a_{\text{rou}}=\phi^{\top}x\in[0\%,~100\%].$$ The model
parameters are chosen as follows: the decay rate $\alpha=0.7$ drives
the EEG activities to resting states; we set $s=0.3$, so that the
normalized states $x$ are bounded within $\frac{s}{1-\alpha}=1$.  The
state update matrix $W$ characterizes the interdependence of EEG
channels and how they change due to factors such as the difficulty of
the BAT game; the input matrix $B$ represents how humans' EEG states
are impacted by auditory feedback.  Since $W$ and $B$ can vary
significantly among different subjects and are difficult to model
explicitly, in this simulation, they are randomly generated with each
entry chosen uniformly from $[-0.5,0.5]$. However, to make sure these
matrix are meaningful for the experiment, we ensure that an increase
in $u$ (auditory feedback volume) leads to a decrease in
$a_{\text{rou}}$.  This aligns with the experiment setup
in~\cite{JF-JC-SS-PS:19}, where subjects are instructed to relax in
response to louder audio feedback.  During the experiment, the mapping
$\phi^{\top}$ that translates EEG states into arousal levels is
unknown, so the arousal level cannot be directly decoded, which is
different from~\cite{JF-JC-SS-PS:19}.  Instead, we demonstrate the
data-driven method proposed here, that uses EEG state feedback to
regulate the arousal level.

Our control objective is to drive the subject's EEG states to a target
frequency pattern $\bm{r}_T$.
% %
% \marginJC{But EEG is typically oscillatory, no? Is this $\bm{r}_T$
%   constant?}  \marginXW{The state considered here is the dominant
%   frequency of the EEG.  A constant $\bm{r}_T$ describes a target
%   frequency pattern.}
% %
Such $\bm{r}_T$ is subject-specific, associated with the desired
arousal level of the human, and can be determined during the testing
trials.  In addition, since the EEG states are subject to additive
noise, we use the controller with error integration.  We formulate the
SDP problem~\eqref{eq_simuSDP_INT} and solve for the controller gains
$K_1$ and $K_2$, which maps the EEG spectral properties $\bm{x}$ to a
scaler volume of the auditory feedback
input~$\bm{u}$.  %The convergence of system states is difficult to display
%due to the high dimensionality of the 64 EEG channels, so we compute a
%arousal level-decoder, $f(\bm{x}) = a_{\text{rou}}\in\mathbb{R}$, as
%in~\cite{JF-JC-SS-PS:19}. This decoder uses shrinkage regularized
%linear discriminant analysis to map EEG states to arousal level index
%between 0 and 100\%. 
Figure~\ref{Fig_Arousal_traj} (left) demonstrates
the effectiveness of the obtained data-driven controller in regulating
the arousal into a desired range from an overly tensed-up state and
maintaining it in this region. The overshoot of the trajectory is caused
by the integral term.

\begin{figure}[htb]
  \centering \includegraphics[width=\linewidth]{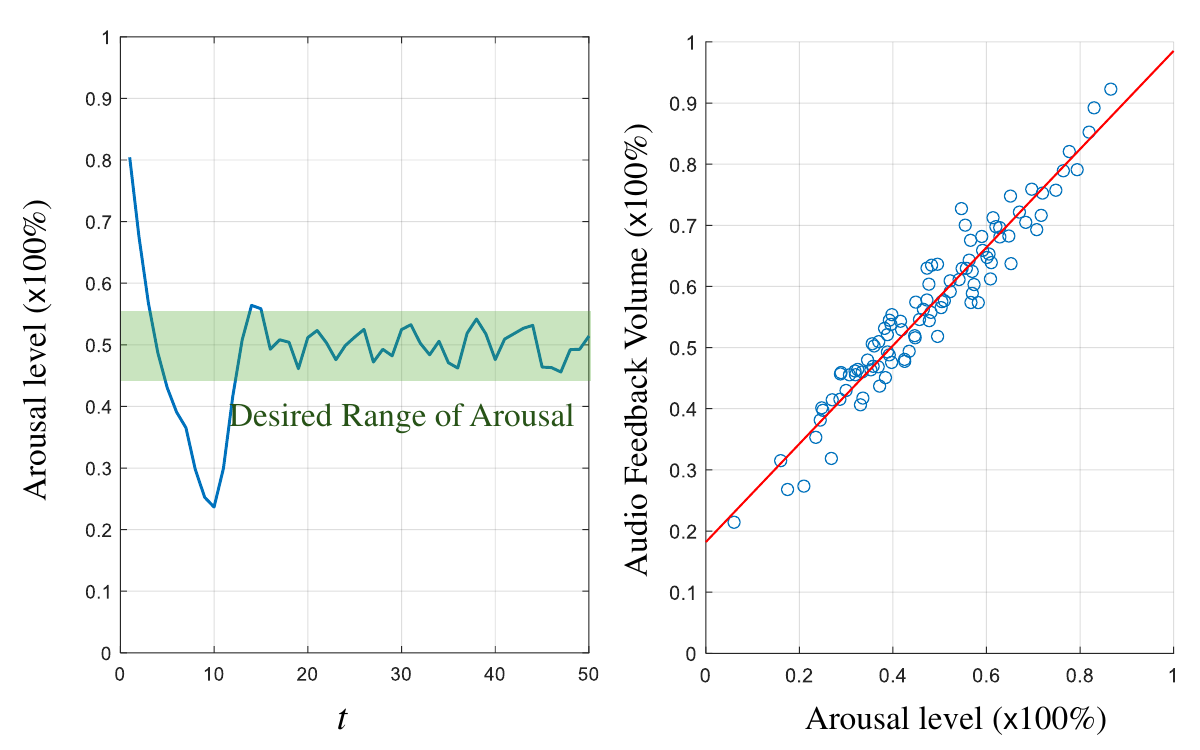}
  \caption{Left: regulation of arousal using the proposed data-driven
    controller. Right: relationship between arousal level and audio
    feedback volume based on the state feedback matrix~$K_1$.}
  \label{Fig_Arousal_traj}
\end{figure}

Furthermore, Figure~\ref{Fig_Arousal_traj} (right), demonstrates the
relationship between the arousal level and the audio feedback
volume. This is done by taking random EEG states $\bm{x}$ and
observing the relations between $a_{\text{rou}}=\phi^{\top}x$ and
$u=K_1\bm{x}$. Here, since the integrator $\xi$ 
is dynamically changing, except at equilibrium, its impact on the control 
input $u$ is difficult to characterize. Therefore, we set $\xi=0$ and 
ignore the $K_2\xi$ term in $u$. 
% %
% \marginJC{We seem to repeat random $\bm{x}$ in the same
%   sentence. Also, we said that we were using (55) but only write here
%   $K_1\bm{x}$. For which $\xi$? Fixed for all random $\bm{x}$?}
%   \marginXW{It has been ignored and set $\xi=0$.}
% %
The linear regression of the relation between $a_{\text{rou}}$ and $u$
(considering only the proportional term $K_1$)
suggests that the volume should increase with the arousal level of the
human, i.e., a higher volume reminds the human that their arousal
level is high, and the human will try to calm down to reduce their
arousal level. Our result hence aligns with the experiment design
in~\cite{JF-JC-SS-PS:19} and provides a justification for this
feedback mechanism.

\section{Conclusions and Future Work}
We have designed data-driven controllers to stabilize unknown
linear-threshold network models to a given reference value.
Exploiting the special structure of the linear-threshold model, we
have established a data-based representation of the dynamics relying
on a map that reconstructs the system's state-input datasets.
Building on this, we have obtained closed-loop data-based
representations for two types of data-driven controllers: state
feedback with feed-forward reference input and augmented feedback
controller with error integration.  In both cases, we have shown how
to combine these representations with techniques from switched systems
theory to identify stabilization conditions in the form of a set of
linear matrix inequalities (LMIs), whose solutions correspond to the
controller gain matrices. We have formally established the correctness
of the proposed designs. Given that the complexity of the LMI
formulations grows exponentially with the system state, we have
proposed alternative sufficient conditions to solve the LMIs that
scale linearly without sacrificing performance.  We have validated the
effectiveness of the two controllers in two different case studies.
Future work will investigate controller designs beyond time-invariant
ones, extend our results to scenarios where access to full-state
information is not available, and explore the applicability of the
results to other case studies.

% apply the proposed approach to real neural/social network systems.

%\section*{Acknowledgments}
%This work was partially supported by NSF ECCS Award 2332210, NSF CMMI Award
%2308640, and MURI ARO Award W911NF-24-1-0228.
  
% \marginJC{Add acknowledgement to your colleague for suggesting the
% arousal experiment? Also, we could acknowledge funding agencies. On
% my side, ``This work was partially supported by NSF CMMI Award
% 2308640 and MURI ARO Award W911NF-24-1-0228''. }

% \marginJC{Have any of the arxiv entries already appeared? If so,
%   update in bib.}

\bibliographystyle{IEEEtran}
\bibliography{bib/alias,bib/New,bib/Main-add,bib/JC,bib/Main}  
% \bibliography{alias, New, Main-add, JC, Main}

\begin{IEEEbiography}[{\includegraphics[width=1.0
    in,height=1.35in,clip,keepaspectratio]{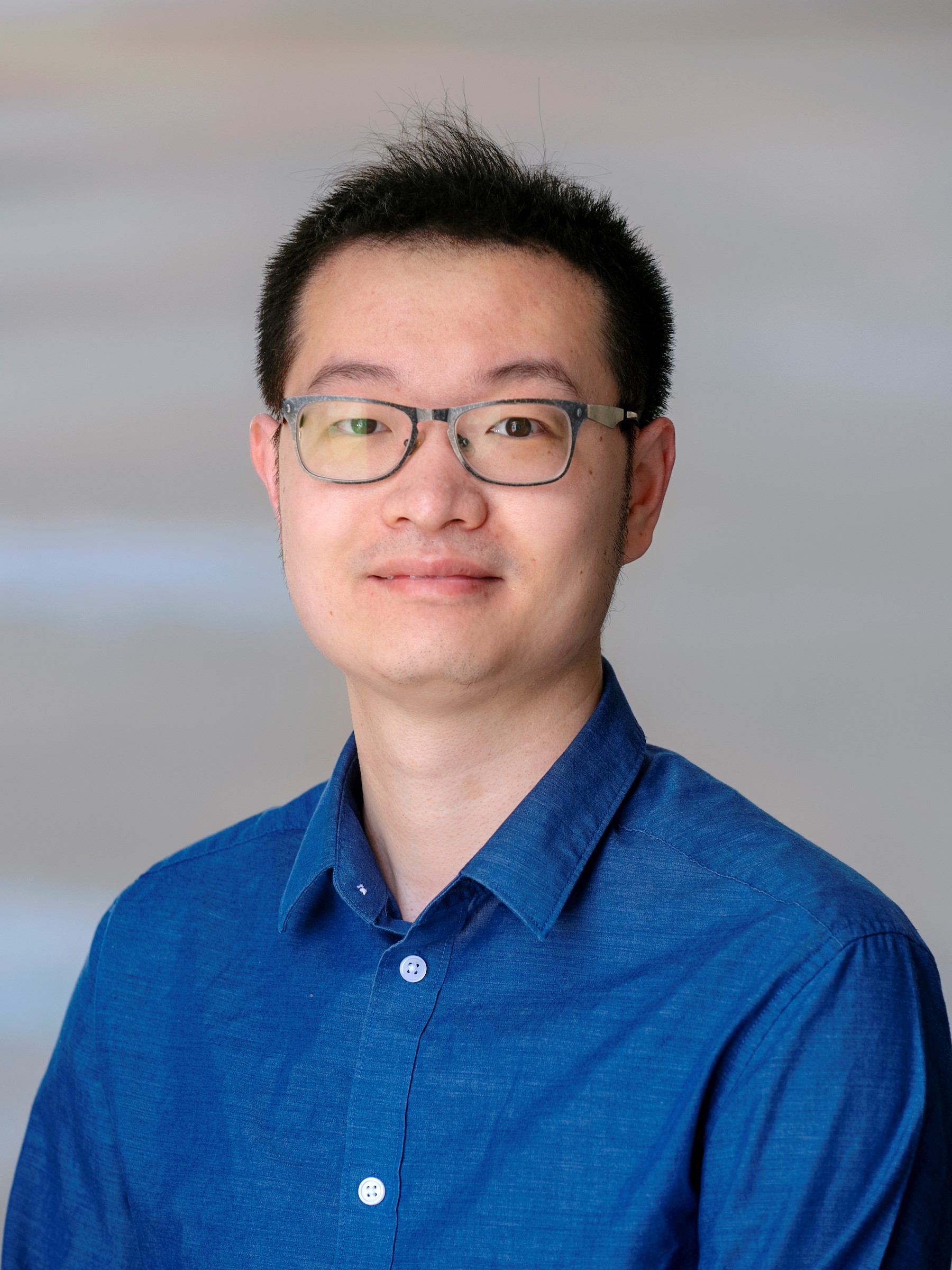}}]{Xuan Wang}
  is an Assistant Professor with the Department of Electrical and
  Computer Engineering at George Mason University.  He received his
  Ph.D. degree in autonomy and control, from the School of Aeronautics
  and Astronautics, Purdue University in 2020. He was a post-doctoral
  researcher with the Department of Mechanical and Aerospace
  Engineering at the University of California, San Diego from 2020 to
  2021. His research interests include multi-agent control and
  optimization; resilient multi-agent coordination; system
  identification and data-driven control of network dynamical systems.
\end{IEEEbiography}

\begin{IEEEbiography}[{\includegraphics[width=1.0
    in,height=1.35in,clip,keepaspectratio]{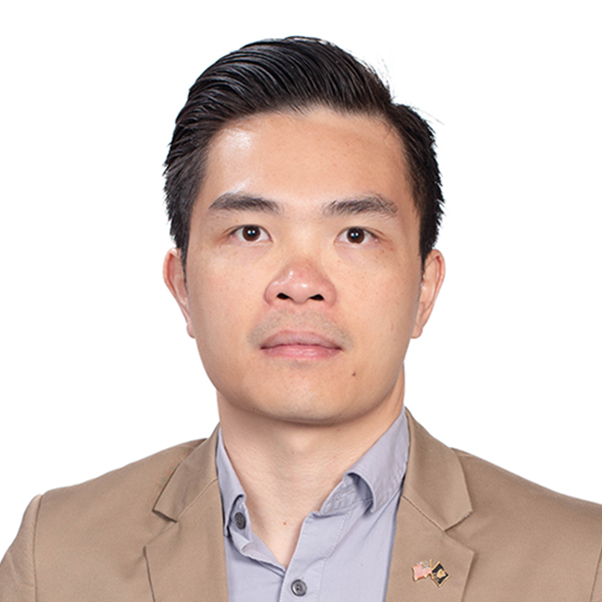}}]{Duy Duong-Tran}
is an affiliated faculty at the Perelman School of Medicine at the University of Pennsylvania. Before that, he was a Post-doctoral Research Fellow at the school. He held a Ph.D. from Purdue University’s School of Industrial Engineering and a graduate certificate
from Purdue’s School of Engineering Education in 2022. His main research interest is at the crossroads between data science, computational neuroscience, engineering education, and biomedical NLP.
\end{IEEEbiography}

\begin{IEEEbiography}[{\includegraphics[width=1in,height=1.25in,clip,keepaspectratio]{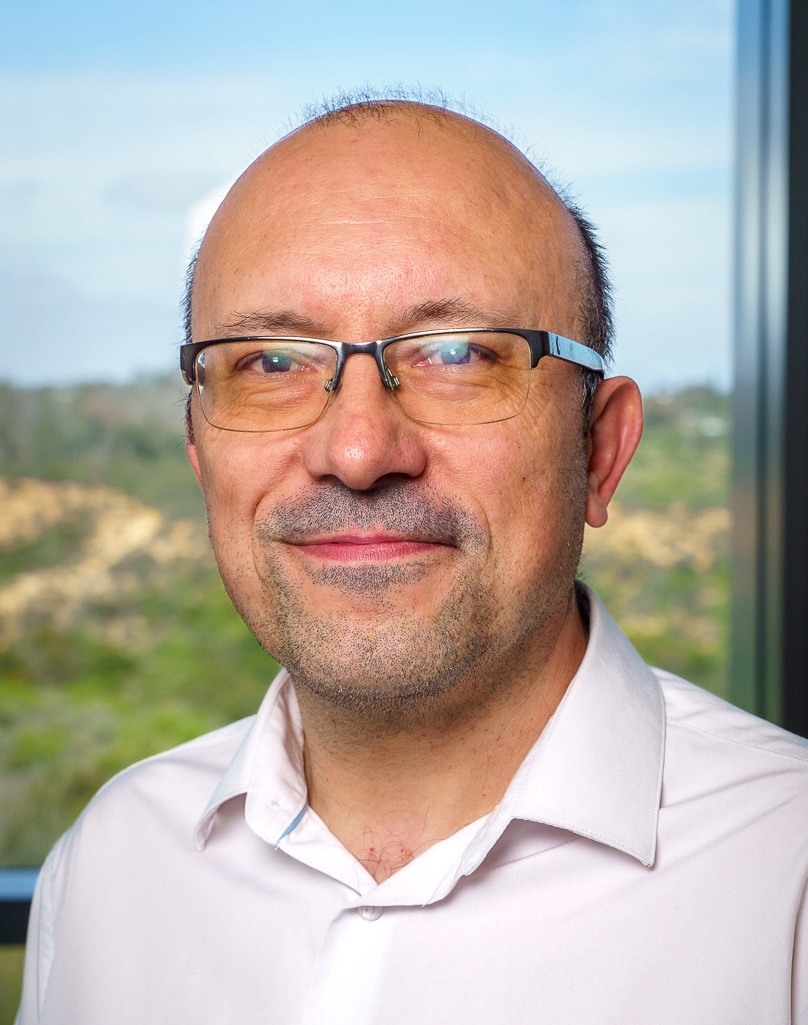}}]{Jorge
    Cort\'{e}s}(M'02, SM'06, F'14) received the Licenciatura degree in
  mathematics from Universidad de Zaragoza, Zaragoza, Spain, in 1997,
  and the Ph.D. degree in engineering mathematics from Universidad
  Carlos III de Madrid, Madrid, Spain, in 2001. He held postdoctoral
  positions with the University of Twente, Twente, The Netherlands,
  and the University of Illinois at Urbana-Champaign, Urbana, IL,
  USA. He was an Assistant Professor with the Department of Applied
  Mathematics and Statistics, University of California, Santa Cruz,
  CA, USA, from 2004 to 2007. He is a Professor and Cymer Corporation
  Endowed Chair in High Performance Dynamic Systems Modeling and
  Control at the Department of Mechanical and Aerospace Engineering,
  University of California, San Diego, CA, USA.  He is a Fellow of
  IEEE, SIAM, and IFAC.  His research interests include distributed
  control and optimization, network science, nonsmooth analysis,
  reasoning and decision making under uncertainty, network
  neuroscience, and multi-agent coordination in robotic, power, and
  transportation networks.
\end{IEEEbiography}

\end{document}